\newtheorem{theorem}{Theorem}[section]
\newtheorem{fact}[theorem]{Fact}
\newtheorem{lemma}[theorem]{Lemma}
\newtheorem{remark}[theorem]{Remark}
\newtheorem{definition}[theorem]{Definition}
\newtheorem{proposition}[theorem]{Proposition}
\newcommand{\RR}{\mathbb R}
\newcommand{\NN}{\mathbb N}
\newcommand{\E}{\mbox{\bf E}}
\newcommand{\cM}{{\cal M}}
\newcommand{\cQ}{{\cal Q}}
\newcommand{\cR}{{\cal R}}
\newcommand{\cX}{{\cal X}}
\def\b1{{\bf 1}}
\def\bx{{\bf x}}
\newcommand{\set}[1]{\left\{ #1 \right\}}
\renewcommand{\hat}{\widehat}
\renewcommand{\tilde}{\widetilde}
\newcommand{\RRp}{\RR_+}
\def\A{\mathcal{A}}
\def\B{\mathcal{B}}
\def\U{\mathcal{U}}
\def\V{\mathcal{V}}
\def\sse{\subseteq}
\def\poly{\mbox{poly}}
\title{Limitations of Randomized Mechanisms for Combinatorial Auctions}
\author{
Shaddin Dughmi\thanks{Department of Computer Science, Stanford University, 460 Gates Building, 353 Serra Mall, Stanford, CA 94305.  Supported by NSF Grant CCF-0448664. Email: {\tt shaddin@cs.stanford.edu}.} 
\and 
Jan Vondr\'ak\thanks{IBM Almaden Research Center, 650 Harry Rd, San Jose, CA 95120. E-mail: {\tt jvondrak@us.ibm.com}.}
}
\begin{document}
\maketitle
\begin{abstract}

The design of computationally efficient and incentive compatible mechanisms  that solve or approximate fundamental resource allocation problems is the main goal of algorithmic mechanism design.
A central example in both theory and practice is welfare-maximization in combinatorial auctions.
Recently, a randomized mechanism has been discovered for combinatorial auctions
that is truthful in expectation and guarantees a $(1-1/e)$-approximation to the optimal social welfare when players have coverage valuations
\cite{DRY11}. This approximation ratio is the best possible even for non-truthful algorithms, assuming $P \neq NP$ \cite{KLMM05}.

Given the recent sequence of negative results for combinatorial auctions under more restrictive notions of incentive compatibility \cite{DN07,BDFKMPSSU10,Dobzin11}, this development raises a natural question:
Are truthful-in-expectation mechanisms compatible with polynomial-time approximation in a way
that deterministic or universally truthful mechanisms are not? In particular, can polynomial-time truthful-in-expectation mechanisms guarantee a near-optimal approximation ratio for more general variants of combinatorial auctions?

We prove that this is not the case. Specifically, the result of \cite{DRY11} cannot be extended to combinatorial auctions
with submodular valuations in the value oracle model. (Absent strategic considerations, a $(1-1/e)$-approximation is still achievable
in this setting \cite{V08}.) More precisely, we prove that there is a constant $\gamma>0$
such that there is no randomized mechanism that is truthful-in-expectation--- or even approximately truthful-in-expectation ---
and guarantees an $m^{-\gamma}$-approximation to the optimal social welfare for combinatorial auctions with submodular
valuations in the value oracle model.

We also prove an analogous result for the flexible combinatorial public projects (CPP) problem, where
a truthful-in-expectation $(1-1/e)$-approximation for coverage valuations has been recently developed \cite{Dughmi11}.
We show that there is no truthful-in-expectation --- or even approximately truthful-in-expectation --- mechanism that achieves an $m^{-\gamma}$-approximation to the optimal social welfare for combinatorial public projects with submodular valuations
in the value oracle model. 
Both our results present an unexpected separation between coverage functions and submodular functions,
which does not occur for these problems without strategic considerations.
\end{abstract}

\thispagestyle{empty} 
\addtocounter{page}{-1} 
\newpage

\section{Introduction}
\label{sec:intro}

The design of incentive-compatible mechanisms for welfare maximization in combinatorial auctions is a central problem of algorithmic mechanism design. In a combinatorial auction, there are $n$ players and a set $M$ of $m$ items. Player $i$ has a (private) valuation function $v_i:2^M \rightarrow \RR_+$ which is assumed to be monotone ($v_i(S) \leq v_i(T)$ whenever $S \subset T$) and normalized ($v_i(\emptyset) = 0$). The goal is to design a computationally efficient mechanism that yields an allocation of items $(S_1,\ldots,S_n)$ to the players along with payments $(p_1,\ldots,p_n)$ so that (a) the {\em social welfare} $\sum_{i=1}^{n} v_i(S_i)$ is approximately maximized, and (b) the mechanism is {\em incentive-compatible}, or {\em truthful}, meaning that each player maximizes his \emph{utility} $v_i(S_i) - p_i$ by reporting his true valuation $v_i$.

This problem has been studied extensively in both strategic and non-strategic settings. Various strategic solution concepts have been considered, including deterministic truthfulness, universal truthfulness, and truthfulness in expectation. Moreover, both strategic and non-strategic formulations of the problem have been studied for various restricted classes of valuations, as well as under various assumptions on how valuations are accessed or represented. Absent assumptions on the class of valuations, the welfare maximization problem is very hard to approximate even by non-truthful algorithms (NP-hardness of  $m^{\epsilon-1/2}$-approximation follows from the set packing problem). Better approximation ratios are possible for valuation classes that restrict complementarity between items. The most prominent such class of valuations is \emph{submodular functions}: functions $v_i$ where the marginal value $v_i(S \cup \{j\}) - v_i(S)$ for a each fixed item $j$ is non-increasing in $S$. It is known that the welfare maximization problem with submodular valuation functions admits a (non-truthful) $(1-1/e)$-approximation algorithm \cite{V08}, and this is optimal  assuming $P \neq NP$ \cite{KLMM05}. The hardness result of \cite{KLMM05} holds even in the special case of coverage valuations; the algorithmic result of \cite{V08} holds in the {\em value oracle model}, where each $v_i$ can be queried only through an oracle returning $v_i(S)$ for a given query $S$. 
In the value oracle model, it is known that any $(1-1/e+\epsilon)$-approximation for combinatorial auctions with submodular valuations would require an exponential number of queries \cite{MSV08}. This is also the model we consider in this paper.

The classical VCG mechanism is incentive compatible and maximizes welfare in combinatorial auctions. Unfortunately, however, VCG can not be implemented in polynomial time even for very special classes of valuation functions, including submodular functions. Combining computational efficiency and truthfulness for combinatorial auctions appears difficult. A series of works have provided evidence that computational efficiency and truthfulness are in conflict: (deterministic) VCG-type mechanisms have been ruled out for submodular combinatorial auctions in the communication complexity model \cite{DN07}, and even for explicitly given budget-additive valuations \cite{BDFKMPSSU10}. Recently,  Dobzinski \cite{Dobzin11} proved that there is no deterministic truthful or even randomized universally truthful mechanism for submodular combinatorial auctions in the value oracle model, achieving an approximation ratio better than $m^{\epsilon-1/2}$.

Therefore, it came as a surprise when a $(1-1/e)$-approximate randomized mechanism was discovered by Dughmi, Roughgarden and Yan \cite{DRY11} for a large subclass of submodular valuations. Their mechanism is {\em truthful in expectation} --- a weaker notion than truthfulness in the universal sense --- and applies to explicitly represented coverage functions. More generally, their mechanism applies to ``black-box'' valuations that are expressible as weighted sums of matroid rank functions, provided they support ``lottery-value queries" (what is the expected value $\E[v_i(\hat{\bx})]$ for a given product distribution $\hat{\bx}$). 
The mechanism can be also implemented in the value oracle model, 
at the cost of relaxing the solution concept to approximate truthfulness in expectation \cite{DRVY11}.

This  development  raises a natural question: Could truthfulness-in-expectation be the cure for combinatorial auctions, perhaps providing an optimal $(1-1/e)$-approximation for all submodular valuations? Given that a $(1-1/e)$-approximation for welfare maximization in combinatorial auctions (without truthfulness) was also discovered first for coverage functions \cite{DS06}, then for weighted sums of matroid rank functions \cite{CCPV07} and later extended to monotone submodular functions \cite{V08}, it seems reasonable to conjecture that the same might happen for truthful-in-expectation mechanisms.

\paragraph{Our results.}
We prove that this is not the case,  and there is a significant separation between the class of coverage functions and general monotone submodular functions. More precisely, there is no truthful-in-expectation mechanism (even $(1-\epsilon)$-approximately truthful-in-expectation) for submodular combinatorial auctions in the value oracle model, guaranteeing an approximation better than $1/m^\gamma$ for some fixed $\epsilon,\gamma>0$ (Theorem~\ref{thm:CA-hardness}). In particular, the results of \cite{DRY11} cannot be extended to all monotone submodular functions.

We also prove a similar result for the \emph{flexible submodular combinatorial public projects} problem (see Section \ref{sec:CPP-hardness} for a history of this problem): there is no $(1-\epsilon)$-approximately truthful-in-expectation mechanism providing approximation better than $1/m^\gamma$ for some $\gamma>0$. This is true even in the case of a single player. The combinatorial public projects problem admits a simpler structure than combinatorial auctions,  and hence we use it as a warm-up to demonstrate our approach.

\begin{figure}[here]
$\begin{array}{|| c || c | c | c ||} \hline
\mbox{Class of valuations} & \mbox{Approximation} & \mbox{Universally truthful} & \mbox{Truthful-in-expectation}  \\
\hline \hline
\mbox{submodular / value oracle} & 1-1/e & m^{-1/2} \mid m^{\epsilon-1/2} & m^{-1/2} \mid {m^{-\gamma}} \mbox{\bf [new]} \\
\hline
\mbox{coverage, matroid rank sums} & 1-1/e & m^{-1/2} \mid 1-1/e & 1-1/e \\
\hline
\mbox{budget-additive} & \frac{3}{4} \mid \frac{15}{16} & m^{-1/2} \mid \frac{15}{16} & m^{-1/2} \mid \frac{15}{16} \\
\hline
\mbox{submodular / demand oracle} & 1-1/e+\epsilon \mid \frac{15}{16} & \Omega(1/\log m \log \log m) \mid \frac{15}{16} & \Omega(1/\log m \log \log m) \mid \frac{15}{16} \\
\hline
\end{array} $
\caption{\small Currently known results for combinatorial auctions: approximation $|$ inapproximability.
If only one result is given, it is known to be optimal. For randomized maximal-in-range (universally truthful) mechanisms, it is known that it is hard to achieve a better than $1/n$-approximation for coverage valuations; however, other universally truthful mechanisms might exist. No non-trivial hardness was previously known for truthful-in-expectation combinatorial auctions, even when restricted to maximal-in-distributional-range mechanisms.}
\end{figure}

\paragraph{Our techniques.}
Our hardness results are obtained by combining two recently developed techniques: the {\em symmetry gap} technique for submodular functions \cite{V09}, and the {\em direct hardness} approach for combinatorial auctions \cite{Dobzin11}.

First, we consider the possibility of maximal-in-distributional range (MIDR) mechanisms.
We endeavor to explain why the approach of \cite{DRY11} breaks down when applied to monotone submodular functions. The answer lies in a certain convexity phenomenon that can be exploited in a symmetry gap argument. The symmetry gap argument on its own rules out the approach of \cite{DRY11}. Furthermore, it is possible to generalize the argument to an arbitrary MIDR mechanism, and moreover amplify the gap to some constant power of $m$. In fact our approach rules out even non-uniform approximately-MIDR mechanisms.

In the case of combinatorial public projects (CPP), we prove that if non-uniformity is allowed, then  approximately truthful-in-expectation mechanisms are no more powerful --- in terms of approximating combinatorial auctions using a polynomial number of value queries --- than  MIDR mechanisms. Therefore, by ruling out MIDR mechanisms, we also rule out  truthful-in-expectation mechanisms.
In the case of combinatorial auctions, no such  equivalence in power between truthful-in-expectation and MIDR mechanisms is known. Instead, we apply the direct hardness approach of Dobzinski \cite{Dobzin11} to identify a single player for whom the allocation problem in some sense mimics the CPP problem. Again, the symmetry gap argument can be used here, though payments complicate  the picture. We address this difficulty by employing a scaling argument and invoking the separating hyperplane theorem --- this allows us to essentially get rid of the payments and use the same gap amplification technique we used for the CPP problem to obtain a hardness of $m^{-\gamma}$-approximation.

\paragraph{Organization of the paper.}
After the necessary preliminaries (Section~\ref{sec:prelims}), we present our intuition on the separation between coverage and submodular functions in Section~\ref{sec:intuition}. In Section~\ref{sec:CPP-hardness}, we present an overview of the proof of hardness for combinatorial public projects, and in Section~\ref{sec:auctions-hardness} an overview of the proof for combinatorial auctions. The complete proofs are deferred to the appendices.


\section{Preliminaries}
\label{sec:prelims}

\subsection{Mechanism Design Basics}\label{sec:MD}

\paragraph{Mechanism Design Problems.} We consider mechanism design problems where there are $n$ players, and a set $\Omega$ of feasible solutions. Each player $i$ has a non-negative \emph{valuation function} $v_i: \Omega \to \RRp$. We are concerned with \emph{welfare maximization} problems, where the objective is $\sum_{i=1}^n v_i(\omega)$.

\paragraph{Mechanisms.} We consider direct-revelation mechanisms for mechanism design problems.  Such a mechanism  comprises an {\em
  allocation rule} $\A$, which is a function from (hopefully
truthfully) reported valuation functions $v=(v_1,\ldots,v_n)$ to an outcome
$\A(v) \in \Omega$, and a {\em payment rule} $p$, which is a function from
reported valuation functions to a required payment $p_i(v)$ from each player $i$.
We allow the allocation and payment rules to be randomized. We restrict our attention to mechanisms that are individually rational in expectation --- i.e. $\E[v_i(\A(v)) - p_i(v)] \geq 0 $ --- and  the payments are non-negative in expectation --- i.e. $\E[p_i(v)] \geq 0$ --- for each player $i$ and each input $v=(v_1,\ldots,v_n)$, when the expectations are over the random coins of the mechanism. 

\paragraph{Truthfulness.} A mechanism with allocation and payment rules $\A$ and $p$ is {\em
  truthful-in-expectation} if every player always maximizes its expected
  payoff by truthfully reporting its valuation function, meaning that
\begin{equation}\label{eq:truthful}
\E[v_i(\A(v)) - p_i(v)] \geq \E[ v_i(\A(v'_i,v_{-i})) -
  p_i(v'_i,v_{-i})]
\end{equation}
for every player~$i$, (true) valuation function~$v_i$, (reported)
valuation function~$v'_i$, and (reported) valuation functions~$v_{-i}$ of
the other players.
The expectation in~\eqref{eq:truthful} is over the coin flips of the
mechanism.  If \eqref{eq:truthful} holds for every flip of the coins, rather than merely in expectation, we call the mechanism \emph{universally truthful}.

\paragraph{VCG-Based Mechanisms.} Mechanisms for welfare maximization problems are often variants of the classical VCG mechanism.  
Recall that the {\em VCG
  mechanism} is defined by the (generally intractable)
allocation rule that selects the welfare-maximizing outcome with
respect to the reported valuation functions, and the payment rule that
charges each player~$i$ a bid-independent ``pivot term'' minus the
reported welfare earned by other players in the selected outcome.  This
(deterministic) mechanism is truthful; see e.g.~\cite{Nis07}.

Let $dist(\Omega)$ denote the probability distributions over the set of 
feasible solutions $\Omega$, and let $\cR \sse dist(\Omega)$ be a compact subset of
them.  The corresponding {\em Maximal in Distributional Range (MIDR)}
allocation rule is defined as follows: given reported valuation
functions $v_1,\ldots,v_n$, return an outcome that is sampled randomly
from a distribution $D^* \in \cR$ that maximizes the expected welfare
$\E_{\omega \sim D}[\sum_i v_i(\omega)]$ over all distributions $D \in \cR$.
Analogous to the VCG mechanism, there is a (randomized) payment rule
that can be coupled with this allocation rule to yield a
truthful-in-expectation mechanism (see~\cite{DD09}). We note that deterministic MIDR allocation rules
 --- i.e. those where $\cR$ is a set of point distributions --- are called \emph{maximal-in-range (MIR)}.

\paragraph{Approximate Truthfulness.}
For $\epsilon\geq 0$, a  mechanism with allocation and payment rules $\A$ and $p$ is {\em $(1-\epsilon)$-approximately   truthful-in-expectation} if 
\begin{equation}\label{eq:approx-truthful}
\E[v_i(\A(v)) - p_i(v)] \geq (1-\epsilon) \E[ v_i(\A(v'_i,v_{-i})) -
  p_i(v'_i,v_{-i})]
\end{equation}
for every player~$i$, (true) valuation function~$v_i$, (reported)
valuation function~$v'_i$, and (reported) valuation functions~$v_{-i}$ of
the other players.
The expectation in~\eqref{eq:approx-truthful} is over the coin flips of the
mechanism.  
Using the fact that payments are non-negative in expectation, a $(1-\epsilon)$-approximately truthful-in-expectation mechanism also satisfies the following weaker condition. (This condition is sufficient for our hardness results.)
\begin{equation}\label{eq:approx-truthful-relaxed}
\E[v_i(\A(v)) - p_i(v)] \geq  \E[ (1-\epsilon) v_i(\A(v'_i,v_{-i})) - p_i(v'_i,v_{-i})]
\end{equation}
Approximately truthful mechanisms are related to \emph{approximately maximal-in-distributional-range} allocation rules. An allocation rule $\A: \V \to \Omega$ is $(1-\epsilon)$-approximately maximal-in-distributional range if it fixes a $\cR \sse dist(\Omega)$, and returns an outcome that is sampled from $D^* \in \cR$ that $(1-\epsilon)$-approximately maximizes the expected welfare
$\E_{\omega \sim D}[\sum_i v_i(\omega)]$ over all distributions $D \in \cR$.  We show in Appendix  \ref{sec:TIE-MIDR} a sense in which approximately maximal-in-distributional-range allocation rules are no less powerful -- in terms of approximating the social welfare -- than approximately truthful-in-expectation mechanisms.

Our main reason for considering the notion of approximate truthfulness is that the mechanisms of \cite{DRY11,Dughmi11}, if implemented in the value oracle model, are only approximately truthful-in-expectation (for an arbitrarily small $\epsilon>0$) \cite{DRVY11}. The value oracle model seems too weak to make the mechanisms of \cite{DRY11,Dughmi11} exactly truthful-in-expectation; however, \cite{DRVY11}  makes it quite conceivable that there might be an approximately truthful-in-expectation mechanism for combinatorial auctions and combinatorial public projects, both with submodular valuations.

\subsection{Combinatorial Auctions}

In \emph{Combinatorial Auctions} there is a set $M$ of $m$ items, and a set of $n$ players. Each player $i$ has a valuation
function $v_i:2^M\rightarrow \RRp$ that is normalized
($v_i(\emptyset) = 0$) and monotone ($v_i(A) \leq v_i(B)$ whenever
$A \sse B$). A feasible solution is an \emph{allocation}
$(S_1,\ldots,S_n)$, where $S_i$ denotes the items assigned to player $i$,
and $\set{S_i}_i$ are mutually disjoint subsets of $M$.
Player $i$'s value for outcome $(S_1,\ldots,S_n)$ is equal to $v_i(S_i)$.  The
goal is to choose an allocation maximizing \emph{social welfare}: $\sum_i v_i(S_i)$.

\subsection{Combinatorial Public Projects}

In \emph{Combinatorial Public Projects} there is a set
$[m]= \set{1,\ldots,m}$ of \emph{projects}, a cardinality bound $k$ such that $0 \leq k \leq m$, and a set
$[n]=\set{1,\ldots,n}$ of \emph{players}. Each player $i$ has a valuation
function $v_i:2^{[m]}\rightarrow \RRp$ that is normalized
($v_i(\emptyset) = 0$) and monotone ($v_i(A) \leq v_i(B)$ whenever
$A \sse B$). In this paper, we focus on the \emph{flexible} variant of combinatorial public projects: a feasible solution is a set $S \sse [m]$ of projects with $|S| \leq k$. Player $i$'s value for outcome $S$ is equal to $v_i(S)$. Prior work \cite{PSS08,BSS10,Dobzin11} has also considered the \emph{exact} variant, where a feasible solution is a set $S \sse [m]$ with $|S| = k$. In both variants, the
goal is to choose a feasible set $S$ maximizing \emph{social welfare}: $\sum_i v_i(S)$.


\section{Intuition - what fails for submodular valuations}
\label{sec:intuition}
\label{sec:exp-gap}

The main obstacle in proving our hardness result for submodular functions is the fact that the natural subclass of coverage functions {\em does} admit a truthful-in-expectation $(1-1/e)$-approximation \cite{DRY11}. In the absence of strategic considerations, coverage functions capture the full difficulty of submodular functions in the context of welfare maximization, in the sense that they exhibit the same hardness threshold of $1-1/e$. Hence, it is not immediately clear where the dramatic jump in hardness should come from.


Let us recall the main idea of \cite{DRY11}: Let $f:2^M \to \RRp$ be a submodular set function. Given $\bx \in [0,1]^M$, the expected value of $f(S)$ when $S$ includes each item $j$ independently with probability $x_j$ is measured by the {\em multilinear extension} $F(\bx)$, which has been previously used in work on submodular maximization \cite{CCPV07,V08,KST09,V09,OV11}. $F$ is an \emph{extension} of $f$, in the sense that it agrees with $f$ on integer points, and therefore maximizing $F(\bx)$ over fractional allocations would yield an optimal algorithm. However, $F(\bx)$ is not a concave function and can be maximized only approximately. 
Instead, the authors of \cite{DRY11} consider a different rounding process --- which they call the \emph{Poisson rounding scheme} --- that includes each $j$ in $S$ with probability $1-e^{-x_j}$ instead. The expected value of applying the Poisson rounding rounding scheme to a point $\bx$ is measured by a modified function $F^{exp}(x_1,\ldots,x_m) = F(1-e^{-x_1},\ldots,1-e^{-x_m})$, which fortuitously {\em turns out to be concave} for a subclass of submodular functions, including coverage functions and weighted sums of matroid rank functions. In this case,  $F^{exp}(\bx)$ can be maximized exactly, and yields a maximal-in-distributional-range algorithm whose range is the image of the Poisson rounding scheme.
Since the ratio between $F(\bx)$ and $F^{exp}(\bx)$ is bounded by $1-1/e$, this leads to a truthful-in-expectation $(1-1/e)$-approximation.

The first question is whether $F^{exp}$ can be maximized for any monotone submodular function. 
It was observed by the authors of \cite{DRY11} that $F^{exp}$ is not concave for every submodular function:
one example is the budget-additive function $f(S) = \min \{ \sum_{i \in S} w_i, 2 \}$ where $w_1=w_2=w_3=1$ and $w_4=2$. Hence convex optimization techniques cannot be used for $F^{exp}(\bx)$ directly; still, perhaps $F^{exp}(\bx)$ could be maximized for a different reason.
We prove that this is impossible, using a {\em symmetry gap} argument \cite{FMV07,MSV08,V09}. 

The budget-additive function above does not lend itself well to the symmetry gap argument, because there is a clear asymmetry between the elements of weight 1 and the element of weight 2. Instead, we construct an example where $F^{exp}$ is not concave and all elements are in some sense ``equivalent". For this purpose, we use the following construction: If $f_1,f_2:2^M \rightarrow [0,1]$ are monotone submodular functions, then 
$$f(S) = 1 - (1-f_1(S))(1-f_2(S))$$
is also a monotone submodular function (see Lemma~\ref{lem:submod-product}).
In particular, let $M = M_1 \cup M_2$, $|M_1| = |M_2| = m$,
$|M|=2m$, and let $f_i(S) = \min \{ \frac{1}{\alpha m} |S \cap M_i|, 1 \}$ for some $\alpha > 0$. These are budget-additive and hence monotone submodular functions. Then we set 
$$ f(S) = 1 - (1-f_1(S))(1-f_2(S)) = 1 - \left(1 - \frac{1}{\alpha m} |S \cap M_1|\right)_+ \left(1 - \frac{1}{\alpha m} |S \cap M_2| \right)_+.$$
Here, $(y)_+ = \max \{ y, 0\}$ denotes the positive part of a number. By Lemma~\ref{lem:submod-product}, $f(S)$ is a monotone submodular function.
Let's consider the function $F^{exp}(x_1,\ldots,x_{2m}) = F(1-e^{-x_1},\ldots,1-e^{-x_{2m}})$. If $m \rightarrow \infty$, a random set obtained by sampling with probabilities $1-e^{-x_i}$ will have cardinality very close to $\sum (1-e^{-x_i})$.
We obtain
$$ F^{exp}(\bx) \simeq 1 - \left(1 - \frac{1}{\alpha m} \sum_{i \in M_1} (1-e^{-x_i}) \right)_+
 \left(1 - \frac{1}{\alpha m} \sum_{j \in M_2} (1-e^{-x_j}) \right)_+.$$
The reader can verify that this function is concave for $\alpha=1$.
But this is a very special coincidence. (The reason is that $f$ for $\alpha=1$ can be represented as a coverage function.)
Any smaller value of $\alpha$, for instance $\alpha = 1/2$, gives a non-concave function $F^{exp}$,
as can be seen by checking $\bx = \b1_{M_1}$, $\bx = \b1_{M_2}$ and $\bx = \frac12 \b1_M$:
$ F^{exp}(\b1_{M_1}) = F^{exp}_2(\b1_{M_2}) = 1 - (-1 + 2 e^{-1})_+ = 1 $
(note that $-1 + 2e^{-1} < 0$), while the value at the midpoint is
$ F^{exp}\left(\frac12 \b1_M \right) \simeq 1 - (-1 + 2 e^{-1/2})^2 = 4 e^{-1/2} - 4 e^{-1} \simeq 0.955.$
Therefore, we have an example where $F^{exp}(\bx)$ is not concave and moreover, all elements play the same symmetric role in $f$.
(Formally, $f$ has an element-transitive group of symmetries.)
Functions of this type will play a crucial role in our proof.

\paragraph{The symmetry gap argument.}
The symmetry gap argument from \cite{V09}, building up on previous work \cite{FMV07, MSV08}, shows the following:
Instances exhibiting some kind of symmetry can be blown up and modified in such a way that the only solutions that an algorithm can find (using a polynomial number of value queries) are symmetric with respect to the same notion of symmetry.
Thus the gap between symmetric and asymmetric solutions implies an inapproximability threshold.
We use this argument here as follows. The instance above (for $\alpha=1/2$) can be slightly modified as in \cite{FMV07,MSV08,V09}, in such a way that it is impossible to find any solution that is asymmetric with respect to $M_1, M_2$. 
Consider the optimization problem
$$ \max \{F^{exp}(\bx): \sum x_i \leq m \}.$$
The best symmetric solution is $F^{exp}(\frac12 \b1_M) \simeq 0.955$, while the optimum is $F^{exp}(\b1_{M_1}) = 1$. The only solutions found by a polynomial number of value queries are the symmetric ones, and hence we cannot solve the optimization problem within a factor better than $0.955$. A similar argument shows that we cannot solve the welfare maximization problem (for 2 players) with respect to $F^{exp}(\bx)$ within a factor better than $0.955$.

In the following, we harness this construction towards showing that there can be no good maximum-in-distributional-range mechanism, and eventually, no good truthful-in-expectation mechanism.


\section{Hardness for combinatorial public projects}
\label{sec:CPP-hardness}

We start with the combinatorial public project problem.
The (exact) combinatorial public project problem was introduced in \cite{PSS08} as a model problem for the study of truthful approximation mechanisms. This problem is better understood than combinatorial auctions, in the sense that a useful characterization of all deterministic truthful mechanisms is known: every truthful mechanism for 2 players is an \emph{affine maximizer} --- a weighted generalization of maximal-in-range mechanisms \cite{PSS08}. 
Using this characterization, it was proved in \cite{PSS08} that the exact submodular CPP problem does not admit any (deterministic) truthful $m^{\epsilon-1/2}$-approximation using a subexponential amount of communication, and moreover there is no  $m^{\epsilon-1/2}$-approximation even for a certain class of succintly represented submodular valuations unless $NP \subseteq BPP$. In contrast, the simple greedy algorithm is a non-truthful $(1-1/e)$-approximation algorithm for this problem \cite{NWF78}. This was the first example of such a dramatic gap in approximability between truthful mechanisms and non-truthful algorithms.

In follow-up work, a simpler 
characterization-type statement for CPP was shown in \cite{BSS10}: Every truthful mechanism for a single player with a coverage valuation can, via a non-uniform polynomial time reduction, be converted to a truthful maximal-in-range mechanism without degrading its approximation ratio. Since every truthful mechanism for $n$ players must embed a truthful mechanism for a single player, this allowed the authors to restrict attention to maximal-in-range mechanisms for a single player in proving an $m^{\epsilon-1/2}$-approximation threshold for CPP with coverage valuations, assuming that $NP \not\sse P/poly$. The following easy converse of their characterization is notable: A maximal-in-range mechanism for CPP with a single player can directly be used as a maximal-in-range mechanism for any number of players.

Recently, it was proved by Dobzinski \cite{Dobzin11} that the exact variant of the submodular CPP problem
(under the constraint $|S|=k$) does not admit a truthful-in-expectation $m^{\epsilon-1/2}$-approximation in the value oracle model. However, as noted in \cite{Dobzin11},  the flexible variant of CPP (under the constraint $|S| \leq k$) 
is arguably  more natural in the strategic setting. For the flexible variant of CPP, \cite{Dobzin11} proves that there is no universally truthful $m^{\epsilon-1/2}$-approximation, but leaves open the possibility of a better truthful-in-expectation mechanism.  Problems that have a packing structure like flexible CPP have historically proven to be easier to approximate using truthful-in-expectation mechanisms \cite{LS05,DD09,DR10,DRY11}. Flexible CPP has exhibited a similar pattern; Dughmi \cite{Dughmi11} recently designed a truthful-in-expectation $(1-1/e)$-approximation mechanism for CPP when players have explicit coverage valuations (which is optimal regardless of strategic issues \cite{Feige98}), and more generally when players have matroid rank sum valuations that support a certain randomized variant of value queries. 

\paragraph{Transformation to MIDR mechanisms.}
While deterministic truthful mechanisms for the CPP problem are no more powerful in terms of approximation than maximal-in-range mechanisms \cite{PSS08,BSS10}, the situation is slightly more complicated for randomized  mechanisms. It is not clear whether truthful-in-expectation mechanisms are equivalent to maximal-in-distributional-range mechanisms. Nonetheless, we prove the following.

\begin{theorem}
\label{thm:TIE=MIDR}
For every $\epsilon \geq 0$ and $c(m)>0$ the following holds. If there is a $(1-\epsilon)$-approximately truthful-in-expectation mechanism $\cM$ for the (exact or flexible) CPP problem that achieves a $c(m)$-approximation for submodular valuations on $m$ elements, then for any $\delta>0$ there is a non-uniform  $(1-3 \epsilon - \delta)$-approximately maximal-in-distributional-range mechanism $\cM'$ that achieves a $c(m)$-approximation for submodular valuations on $m$ elements and uses at most $m$ more value queries than $\cM$.
\end{theorem}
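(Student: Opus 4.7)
My plan is to define $\cM'$ so that its distributional range $\cR$ consists of the distributions $\cM$ itself produces as its report varies, and to argue that, with non-uniform advice plus only $m$ extra queries to $f$, $\cM'$ approximately maximizes $\E_{R \sim D}[f(R)]$ over $\cR$. Since $\cM$ is a $c(m)$-approximation, $\cR$ always contains a $c(m)$-approximate distribution, so the welfare guarantee for $\cM'$ follows once the approximate-MIDR property is established. The proof cleanly splits into a reduction to one player, a construction with non-uniform advice, and an approximate-TIE-to-approximate-MIDR chain.

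\textbf{Reduction to a single player.} Because the CPP outcome is a single set $S \sse [m]$ shared by all players, an MIDR rule with range $\cR$ operating on the aggregate valuation $f = \sum_i v_i$ lifts to an $n$-player MIDR mechanism via VCG-style payments derived from the $v_i$; hence it suffices to construct $\cM'$ for a single player. In that setting, approximate TIE reduces to
\[ \E[f(D_f)] - p_f \geq (1-\epsilon)\bigl(\E[f(D_{f'})] - p_{f'}\bigr) \]
for every alternative report $f'$, and the MIDR goal is simply to approximately maximize $\E[f(D)]$ over a fixed distributional range.

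\textbf{Construction and chained approximation bound.} Take $\cR = \{D_g : g \text{ submodular}\}$. On input $f$, $\cM'$ runs $\cM$ on $f$ (reusing its queries) to obtain $D_f$, then spends its $m$ extra queries evaluating $f$ on samples from a precomputed ``anchor'' distribution $D^\star \in \cR$ supplied as non-uniform advice; $\cM'$ returns whichever of $D_f$ and $D^\star$ scores higher in the estimate. For the MIDR analysis, let $D^\sharp = D_{f^\sharp}$ achieve $\max_{D \in \cR} \E[f(D)]$. Approximate TIE for the report $f^\sharp$ gives $\E[f(D_f)] - p_f \geq (1-\epsilon)(\E[f(D^\sharp)] - p^\sharp)$; individual rationality bounds $p^\sharp \leq \E[f^\sharp(D^\sharp)]$ and non-negativity gives $p_f \geq 0$. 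To translate the IR bound (in terms of $f^\sharp$) into one in terms of $f$, I apply approximate TIE a second time relative to the anchor $D^\star$, and a third time when comparing $D^\star$ against $D_f$ via the sampled estimate. Each invocation contributes a factor $(1-\epsilon)$, for a total slack of $3\epsilon$, while $\delta$ absorbs the $O(1/\sqrt{m})$ Hoeffding error from the $m$ samples used to compare $D_f$ with $D^\star$.

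\textbf{Main obstacle.} The genuinely delicate step is the misalignment between $p^\sharp$ and $\E[f(D^\sharp)]$: IR only gives a bound in terms of the declared type $f^\sharp$, whereas the MIDR property must be stated in terms of the true type $f$. Bridging this requires the non-uniform anchor $D^\star$ to be chosen so that, for every submodular $f$, either $D_f$ or $D^\star$ witnesses a value within a $(1-3\epsilon-\delta)$ factor of $\max_{D \in \cR} \E[f(D)]$. I expect the right $D^\star$ to come from running $\cM$ on a canonical ``scale-normalizing'' report chosen non-uniformly per $m$, so that payments along the optimizing deviation can always be absorbed into the three factors of $(1-\epsilon)$. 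This anchoring, together with the IR/non-negativity pair and the three chained uses of approximate TIE, is the heart of the argument and is what the non-uniformity is spent on.
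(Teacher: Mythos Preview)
Your reduction to a single player and your choice of range $\cR=\{D_g:g\text{ submodular}\}$ are both fine and match the paper. The gap is exactly where you flag it: bridging from the IR bound $p^\sharp\le \E[f^\sharp(D^\sharp)]$ (in terms of the maximizing report $f^\sharp$) to a statement purely in terms of the true valuation $f$. Your proposed fix --- a single non-uniform anchor distribution $D^\star$ plus ``three chained uses of approximate TIE'' --- does not close this gap. A fixed $D^\star$ cannot witness near-optimal value for every submodular $f$ simultaneously (different $f$'s have optima supported on disjoint sets), so ``either $D_f$ or $D^\star$ is within $(1-3\epsilon-\delta)$ of $\max_{D\in\cR}\E[f(D)]$'' fails in general. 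And each TIE inequality compares utilities for a \emph{fixed} true type against a deviation; stringing three of them together does not convert a payment bound stated in $f^\sharp$-units into one in $f$-units. You acknowledge this is the ``heart of the argument'' but only express an expectation that the right $D^\star$ exists; that is the missing idea, not a technicality.

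The paper's mechanism for killing the payment term is entirely different and is what you are missing: \emph{scaling}. From approximate weak monotonicity one gets, for any $\alpha>0$ and any $w$,
\[
\alpha\, v^T\A(\alpha v)-(1-\epsilon)\,w^T\A(\alpha v)\ \ge\ (1-\epsilon)\,\alpha\, v^T\A(w)-w^T\A(w).
\]
Dividing by $\alpha$ and sending $\alpha\to\infty$ makes both $w$-terms vanish, yielding $\liminf_{\alpha\to\infty} v^T\A(\alpha v)\ge (1-\epsilon)\sup_w v^T\A(w)$: the allocation rule is $(1-\epsilon)$-MIDR \emph{in the limit}. The non-uniform advice is then a single scalar $\tau$ (not a distribution): an upper bound, uniform over a $\sigma$-net of normalized valuations, on how large $\alpha$ must be to come within the extra $(2\epsilon+\delta)$ of the limit. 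The $m$ extra value queries are spent computing $s(v)=\max_j v(\{j\})$ so that the mechanism can output $\A\bigl(\tfrac{\tau}{s(v)}\,v\bigr)$; they are not used for Hoeffding sampling. The net argument (with $\sigma=c\delta/4$) and two more applications of weak monotonicity across the net neighbor are what produce the $3\epsilon+\delta$ loss.
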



By a non-uniform mechanism, we mean a separate fixed mechanism for each input size $m$; i.e., the size of the program can depend arbitrarily on $m$. The only bound on the non-uniform mechanism is the number of value queries used.  The main idea is that the although the range of prices offered by a truthful-in-expectation mechanism can be unbounded, the mechanism can be made MIDR ``in the limit", when the input valuation is scaled by a sufficiently large constant. This constant can be fixed for each input size $m$ and acts as an ``advice string'' to the mechanism. We present the proof in Appendix~\ref{sec:TIE-MIDR}.

\paragraph{Hardness for MIDR mechanisms.}
Our hardness result for flexible submodular CPP rules out mechanisms purely based on the number of value queries used,
and hence it rules out even the non-uniform mechanisms mentioned in Theorem~\ref{thm:TIE=MIDR}.

\begin{theorem}
\label{thm:CPP-hardness}
There are absolute constants $\epsilon, \gamma > 0$ such that there is no $(1-\epsilon)$-approximately maximal-in-distributional-range mechanism for the flexible submodular CPP problem with 1 player in the value oracle model, $\max \{f(S): |S| \leq k\}$, achieving a better than $1/m^\gamma$-approximation in expectation in the objective function, where $m$ is the size of the ground set. This holds even for non-uniform mechanisms of arbitrary computational complexity, as long as the number of value queries is bounded by $\poly(m)$.
\end{theorem}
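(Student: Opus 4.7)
My plan is to combine the symmetry gap construction of \cite{V09} with the submodular function from Section~\ref{sec:exp-gap}, together with a recursive amplification. The base instance $f(S)=1-(1-f_1(S))(1-f_2(S))$ with $\alpha=1/2$ exhibits a constant symmetry gap: the asymmetric optimum $F^{exp}(\b1_{M_1})=1$ beats the best symmetric fractional point $F^{exp}(\frac12\b1_M)\simeq 0.955$. The goal is first to turn this constant gap into hardness for an arbitrary MIDR mechanism, and then to iterate the construction $\Theta(\log m)$ times to push the ratio down to $m^{-\gamma}$.

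For the base step, I would fix a $(1-\epsilon)$-approximately MIDR mechanism with fixed range $\cR\subseteq dist(\{S\sse M:|S|\le k\})$. Applying the ``blow-up'' construction from \cite{V09}, I would replace the symmetrized version $\bar f$ by a modified submodular function $\tilde f$ on a perturbed ground set, with a uniformly random hidden partition $(\tilde M_1,\tilde M_2)$. The construction guarantees that $\tilde f$ agrees with $\bar f$ outside a vanishingly small neighborhood of the axis $\b1_{\tilde M_1}$, so a $\poly(m)$-query algorithm with high probability over the hiding cannot distinguish $\tilde f$ from $\bar f$. Since the MIDR mechanism picks $D^*\in\cR$ deterministically from its query responses, it selects the same $D^*$ it would on $\bar f$, hence $\E_{S\sim D^*}[\tilde f(S)]\le(0.955+o(1))\cdot\text{OPT}(\tilde f)$; averaging over the hidden partition produces a single instance on which every MIDR mechanism achieves at most this ratio, with room to absorb the $(1-\epsilon)$ slack for small enough $\epsilon$.

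To amplify to $m^{-\gamma}$, I would recursively install the same construction inside the ``good'' half at each level. Concretely, let $M$ be partitioned into $M_1,M_2$ of sizes $m/2$; inside $M_1$, partition again into halves, and so on for $L=\Theta(\log m)$ levels, stopping when each leaf still has size $m^{\Omega(1)}$ (enough for Chernoff-type concentration at the bottom). The composed valuation of the form $1-\prod_\ell(1-g_\ell)$ remains monotone submodular by iterating Lemma~\ref{lem:submod-product}. At each level the mechanism faces an independent hidden bipartition with the same constant symmetry gap; since the hidden partitions at different levels are independent and each ``trigger'' region is separately unfindable by $\poly(m)$ queries, the per-level losses multiply. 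This yields $\alpha^L=m^{\log_2\alpha}=m^{-\gamma}$ for some constant $\gamma>0$. The $(1-\epsilon)$ slack compounds only to $(1-\epsilon)^L=m^{-O(\epsilon)}$, which is absorbed by taking $\epsilon$ smaller than a suitable constant multiple of $\gamma$.

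The hardest part, and the crux of the technical work, will be verifying the multi-level composition: one must show that the hidden ``trigger'' regions at different levels remain simultaneously unfindable by $\poly(m)$ value queries (a union bound suffices once the per-level triggers are of inverse-superpolynomial measure), and that the product construction does not degrade the per-level symmetry gap as one descends. A secondary subtlety is the interaction between the MIDR range $\cR$ and the nested hiding: the mechanism could in principle choose a $D^*$ asymmetric with respect to any single level but symmetric on average, so the argument must be phrased in terms of the expected value achieved under the joint product of hidden partitions, rather than one level at a time.
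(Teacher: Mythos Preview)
Your proposal has a genuine gap at the amplification step, and it is exactly the step the paper identifies as the main technical difficulty.

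You assert that ``the per-level losses multiply,'' giving $\alpha^L=m^{-\gamma}$. But the symmetry-gap argument at a single level does \emph{not} yield a multiplicative loss in any quantity you have named. Concretely, what the level-$j$ symmetry gap plus the MIDR property actually gives is an inequality of the form
\[
\E\bigl[1-(1-\phi(X_{j+1}))^2\bigr]\ \ge\ (1-\epsilon)\,\E[\phi(X_j)] - o(1),
\]
where $X_j=|R^{(j)}\cap A^{(j)}|/|A^{(j)}|$ for the distribution returned at level $j$. From this you \emph{cannot} deduce $\E[X_{j+1}]\ge\frac{1+\delta}{2}\E[X_j]$ for any fixed $\delta>0$: take $X_j$ Bernoulli with $\Pr[X_j=1]=\xi-\xi^2$, and $X_{j+1}\equiv \xi/2$; then the inequality holds for every concave $\phi:[0,1]\to[0,1]$, yet $\E[X_{j+1}]=\frac12\E[X_j]$ exactly. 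So the constant $0.955$ you extracted from the base instance does not propagate as a multiplicative factor on any natural density measure. The paper's fix is a nontrivial potential $\alpha_j\bigl(\E[\phi_{\alpha_j}(X_j)]\bigr)^{1+\delta}$ with $\phi_\alpha(t)=\min\{t/\alpha,1\}$ and $\alpha_{j+1}$ chosen adaptively (either $\frac{1+\delta}{2}\alpha_j$ or $\sqrt{\delta}\,\alpha_j$) depending on the tail of $X_{j+1}$; the $1+\delta$ exponent is what makes the induction close. None of this is visible in your outline, and the sentence ``the per-level losses multiply'' is precisely where the naive argument breaks.

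A second issue is that your base step does not actually use the MIDR property. You argue that the mechanism cannot distinguish $\tilde f$ from $\bar f$ and therefore achieves at most $0.955\cdot\mathrm{OPT}$ on a single instance. That is a standard value-oracle lower bound and would apply to \emph{any} algorithm, MIDR or not; it cannot be amplified past a constant without further ideas. The paper's argument is structurally different: it uses the approximation guarantee once, at level $0$, to certify that the fixed range $\cR$ contains a distribution $D_0$ concentrated on $A^{(0)}$; then at each higher level it uses the (approximate) \emph{maximality} over $\cR$ to force the returned distribution to beat $D_{j-1}$ in expected value under the level-$j$ test function, which (combined with symmetry) forces the density measure to grow. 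This requires feeding the mechanism a \emph{different} family of valuations at each level, not a single composed function $1-\prod_\ell(1-g_\ell)$. Your nested-partition construction with one composite valuation does not set up this chain of comparisons, and the ``secondary subtlety'' you flag at the end is in fact the whole argument.
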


In the following, we present a sketch of the proof of this theorem. The full proof appears in Appendix~\ref{sec:CPP-proof}.

\paragraph{Proof strategy.}
We assume that a mechanism optimizes over a range of distributions $\cR$. (We assume for simplicity that the mechanism is MIDR rather than approximately MIDR.) We emphasize that the range $\cR$ is fixed beforehand, and the mechanism must optimize over $\cR$ for any particular submodular function $f$. This gives us a lot of flexibility in arguing about the properties of $\cR$. 


Suppose that the size of the ground set is $m = 2^{O(\ell)}$ and the cardinality bound is $k = m / 2^\ell$. We consider $\ell+1$ different ``levels" of valuation functions. (See Figure~\ref{fig:bisect-seq}.) At level $0$, we have a set $A^{(0)}$ of $m/2^\ell$ items, where the valuation function is nonzero and additive. Assuming that the mechanism achieves a $c$-approximation, there must be a distribution $D_0 \in \cR$ which allocates at least a $c$-fraction of $A^{(0)}$ in expectation to player $i$. This must be true for every set $A^{(0)}$ of size $m/2^\ell$. It will be useful to think of this set as random (and hidden from the mechanism.)

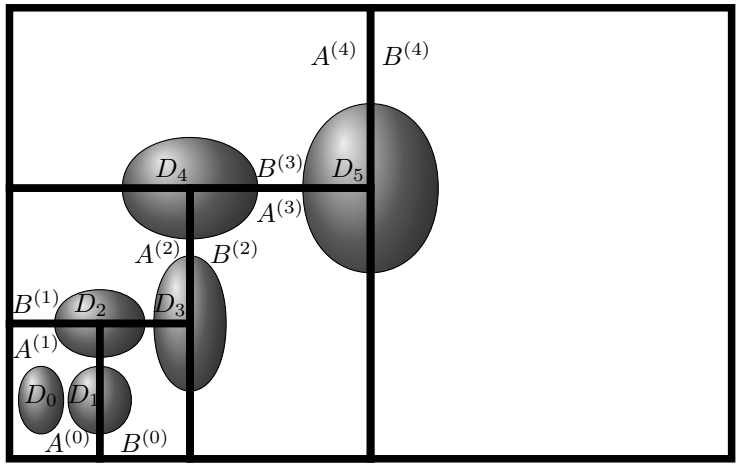
\begin{figure}[!ht]
\centering
\begin{tikzpicture}[scale=.60,pre/.style={<-,shorten <=2pt,>=stealth,thick}, post/.style={->,shorten >=1pt,>=stealth,thick}]

\shadedraw [black,ball color=gray] (0.2,1.3) .. controls +(0,1) and +(0,1) .. (1.2,1.3) .. controls +(0,-1) and +(0,-1) .. (0.2,1.3);
\shadedraw [black,ball color=gray] (1.3,1.3) .. controls +(0,1) and +(0,1) .. (2.7,1.3) .. controls +(0,-1) and +(0,-1) .. (1.3,1.3);
\shadedraw [black,ball color=gray] (1,3) .. controls +(0,1) and +(0,1) .. (3,3) .. controls +(0,-1) and +(0,-1) .. (1,3);
\shadedraw [black,ball color=gray] (3.2,3) .. controls +(0,2) and +(0,2) .. (4.8,3) .. controls +(0,-2) and +(0,-2) .. (3.2,3);
\shadedraw [black,ball color=gray] (2.5,6) .. controls +(0,1.5) and +(0,1.5) .. (5.5,6) .. controls +(0,-1.5) and +(0,-1.5) .. (2.5,6);
\shadedraw [black,ball color=gray] (6.5,6) .. controls +(0,2.5) and +(0,2.5) .. (9.5,6) .. controls +(0,-2.5) and +(0,-2.5) .. (6.5,6);

\draw [line width=1mm] (0,0) rectangle (2,3);
\draw [line width=1mm] (2,0) rectangle (4,3);
\draw [line width=1mm] (0,3) rectangle (4,6);
\draw [line width=1mm] (4,0) rectangle (8,6);
\draw [line width=1mm] (4,0) rectangle (8,6);
\draw [line width=1mm] (0,6) rectangle (8,10);
\draw [line width=1mm] (8,0) rectangle (16,10);

\draw (0.7,1.4) node {$D_0$};
\draw (1.65,1.4) node {$D_1$};
\draw (1.8,3.4) node {$D_2$};
\draw (3.55,3.4) node {$D_3$};
\draw (3.6,6.4) node {$D_4$};
\draw (7.5,6.4) node {$D_5$};

\draw (7.2,9) node {$A^{(4)}$};
\draw (8.8,9) node {$B^{(4)}$};
\draw (6,5.5) node {$A^{(3)}$};
\draw (6,6.5) node {$B^{(3)}$};
\draw (3.3,4.6) node {$A^{(2)}$};
\draw (5,4.6) node {$B^{(2)}$};
\draw (0.6,2.5) node {$A^{(1)}$};
\draw (0.6,3.5) node {$B^{(1)}$};
\draw (3,0.4) node {$B^{(0)}$};
\draw (1.3,0.4) node {$A^{(0)}$};

\end{tikzpicture}
\caption{A bisection sequence $(A^{(j)}, B^{(j)})$, with the distributions $D_j$ returned by the mechanism at level $j$. The density of $D_j$ increases in a certain technical sense exponentially in $j$, although much slower than $2^j$.}
\label{fig:bisect-seq}
\end{figure}

At level $j$, $1 \leq j \leq \ell$, we have a (random) set $A^{(j)}$ of $m / 2^{\ell-j}$ items, which is partitioned randomly into two sets $A^{(j-1)} \cup B^{(j-1)}$ of equal size; these are level-$(j-1)$ sets.
The valuation function at level $j$ will be as in Section~\ref{sec:exp-gap} but restricted to the set $A^{(j)} = A^{(j-1)} \cup B^{(j-1)}$ (the two parts play the role of $M_1,M_2$ from Section~\ref{sec:exp-gap}). The mechanism can detect the set $A^{(j)}$; however, the partition of $A^{(j)}$ into $A^{(j-1)} \cup B^{(j-1)}$ remains hidden. By the symmetry gap argument, the mechanism cannot learn what the partition is, and hence any distribution $D_j$ returned by the algorithm will be with high probability balanced with respect to $(A^{(j-1)}, B^{(j-1)})$. The MIDR property implies that this distribution must be ``dense" enough in order to beat the distribution $D_{j-1}$ guaranteed by the previous level, which is sensitive to the partition $(A^{(j-1)}, B^{(j-1)})$. (By density, we mean a certain notion of average size for sets sampled from $D_j$.) Since distributions concentrated inside $A^{(j-1)}$ or $B^{(j-1)}$ are more profitable than distributions balanced between $(A^{(j-1)}, B^{(j-1)})$, we will ideally obtain a constant-factor boost in density at each level. As $\ell$ grows, this will eventually contradict the fact that the mechanism cannot choose more than $k$ items. 

Finding the right definition of density that yields a constant-factor boost at each level is the main technical difficulty. The most natural definition of density seems to be the expected size of the set returned by the mechanism. However, this notion does not yield the desired boost. (This is related to the fact that we cannot get any contradiction for coverage functions.) The notion of density that turns out to be useful is more complicated; it is derived from functions that exhibit non-concave behavior of the extension $F^{exp}$. 
This strategy will be made more explicit in the following.


\paragraph{The symmetry gap.}
At level $j+1$, we consider valuation functions of the form
$$ f_{A^{(j)},B^{(j)}}(S) = 1 - \left( 1 - \phi\left(\frac{|S \cap A^{(j)}|}{|A^{(j)}|}\right) \right)
 \left( 1 - \phi\left(\frac{|S \cap B^{(j)}|}{|B^{(j)}|}\right) \right) $$
where $\phi:[0,1] \rightarrow [0,1]$ is a suitable non-decreasing concave function. Note that under this valuation function, the value of a (random) set $R$ depends only on how many elements it takes from $A^{(j)}$ and $B^{(j)}$.
In particular, if we denote $X_j = \frac{|R \cap A^{(j)}|}{|A^{(j)}|}$,
$Y_j = \frac{|R \cap B^{(j)}|}{|B^{(j)}|}$, then we have
$$ \E[f_{A^{(j)},B^{(j)}}(R)] = \E[1 - (1-\phi(X_j))(1-\phi(Y_j))].$$
Since the expected value depends only on $X_j,Y_j$, we say that the random variables $X_j, Y_j$ represent the distribution of $R$.

By the symmetry gap argument (if the valuation function is suitably perturbed and the partition $(A^{(j)},B^{(j)})$ is random), then the mechanism with high probability returns a solution $R^{(j+1)}$ independent of the partition and hence symmetric with respect to it. Denoting
$X_{j+1} = \frac{|R^{(j+1)} \cap A^{(j+1)}|}{|A^{(j+1)}|}$, we obtain that the mechanism returns expected value
$$ \E[f_{A^{(j)},B^{(j)}}(R^{(j+1)})] = \E[1 - (1-\phi(X_{j+1}))^2] $$
which is typically less than $\E[1 - (1-\phi(X_j))(1-\phi(Y_j))]$ if $X_{j+1} = \frac12 (X_j + Y_j)$
and $X_j \neq Y_j$.
(There are certain error terms arising from the symmetry gap argument but let us ignore them for now.)

The main point here is that if the mechanism is MIDR, then the expected value of the returned random set $\E[f_{A^{(j)},B^{(j)}}(R^{(j+1)})]$ must be at least that of any other random set whose distribution is in the range - in particular, the random set $R^{(j)}$ whose presence in the range we prove at the previous level. If this random set $R^{(j)}$ is represented by the random variables $X_j, Y_j$, then the mechanism must return a distribution represented by $X_{j+1}$ such that
$$ \E[1 - (1-\phi(X_{j+1}))^2] \geq \E[1 - (1-\phi(X_j))(1-\phi(Y_j))].$$
We in fact ignore the contribution of $Y_j$ and use the weaker inequality
\begin{equation}
\label{eq:sym-simple}
\E[1 - (1-\phi(X_{j+1}))^2] \geq \E[\phi(X_j)].
\end{equation}
Hence, the existence of certain distributions in the range forces the existence of other distributions, satisfying the bound (\ref{eq:sym-simple}). 

\paragraph{Gap amplification.}
Now we would like to say that if two distributions represented by $X_j$ and $X_{j+1}$ satisfy (\ref{eq:sym-simple}),
then the distribution at level $j+1$ is ``more dense" than the one at level $j$.
Considering the scaling at different levels, we want to prove that $X_{j+1}$ is ``significantly larger" than $\frac12 X_j$.
This is intuitive, since $X_{j+1} = \frac12 X_j$ is not enough to satisfy (\ref{eq:sym-simple}), for example when $\phi$ is linear.
Unfortunately,  (\ref{eq:sym-simple}) does not imply any useful relationship between the expectations $\E[X_j]$, $\E[X_{j+1}]$, beyond $\E[X_{j+1}] \geq \frac12 \E[X_j]$.
For example, we could have $X_j = 1$ with probability $\xi - \xi^2$ and $0$ otherwise. Then $X_{j+1} = \frac12 \xi$ satisfies (\ref{eq:sym-simple}) for any concave function $\phi:[0,1] \rightarrow [0,1]$. This does not provide a constant-factor improvement over $\frac12 \E[X_j]$.

We still want to prove that $X_{j+1}$ is in some sense ``significantly larger" than $\frac12 X_j$.
Our main technical inequality formalizing this intuition is the following: 
Define $\phi_\alpha(t) = \min \left\{ \frac{t}{\alpha}, 1 \right\}$.
Then for any distribution in the range represented by $X_j$ at level $j$
and any $\alpha_j \in [0,1]$, there is a distribution in the range represented by $X_{j+1}$ at level $j+1$, and $\alpha_{j+1} \in [0,1]$
such that
\begin{equation}
\label{eq:ampl}
\alpha_{j+1} (\E[\phi_{\alpha_{j+1}}(X_{j+1})])^{1+\delta} \geq \left( \frac{1+\delta^2}{2} \right) \alpha_{j} (\E[\phi_{\alpha_{j}}(X_{j})])^{1+\delta} .
\end{equation}
where $\delta>0$ is some (small) absolute constant. The use of $1+\delta$ in the exponent is crucial here.
We remark that formulating and proving this inequality was the most challenging part of the proof.
(The precise statement with a proof appears as Lemma~\ref{lem:level-bound} in the appendix.)

\paragraph{The contradiction.}
Using this bound, we arrive at a contradiction as follows. As we already mentioned, assuming that an MIDR mechanism provides a $c$-approximation for the CPP problem, then for any feasible set $A^{(0)}$ there must be a distribution $D_0$ in its range such that
$$ \E[X_0] = \E_{R \sim D_0}\left[\frac{|R \cap A^{(0)}|}{|A^{(0)}|}\right] \geq c.$$
Now we apply the symmetry gap argument and the gap amplification technique to random pairs of sets $(A^{(j)}, B^{(j)})$
at each level $j$. Starting from $\E[X_0] \geq c$ and $\alpha_0=1$, by repeated use of (\ref{eq:ampl})
we obtain that there is $\alpha_\ell \in [0,1]$ and a distribution at level $\ell$ represented by $X_\ell$ such that
$$ \alpha_\ell (\E[\phi_{\alpha_\ell}(X_\ell)])^{1+\delta} \geq \left( \frac{1+\delta^2}{2} \right)^\ell c^{1+\delta}. $$
Note that $ \alpha_\ell (\E[\phi_{\alpha_\ell}(X_\ell)])^{1+\delta} \leq \alpha_\ell \E[\phi_{\alpha_\ell}(X_\ell)] 
 = \E[\min \{X_\ell, \alpha_\ell\}] \leq \E[X_\ell]$. So in fact
$$ \E[X_\ell] \geq \left( \frac{1+\delta^2}{2} \right)^\ell c^{1+\delta} > \frac{2^{\delta^2 \ell}}{2^\ell} c^{1+\delta}. $$
The meaning of $X_\ell$ 
is simply the fraction of the ground set that the mechanism returns at level $\ell$. Since $m = 2^{O(\ell)}$, we have $2^{\delta^2 \ell} \geq m^{(1+\delta)\gamma}$ for some constant $\gamma>0$.
If the approximation factor is $c \geq m^{-\gamma}$, then we get $\E[X_\ell] > 2^{-\ell}$,
which would violate the cardinality constraint of the CPP problem.

As we mentioned, the full proof appears in Appendix~\ref{sec:CPP-proof}.


\section{Hardness for combinatorial auctions}
\label{sec:auctions-hardness}

The following is our main result for combinatorial auctions.

\begin{theorem}
\label{thm:CA-hardness}
There are absolute constants $\epsilon, \gamma > 0$ such that there is no $(1-\epsilon)$-approximately truthful-in-expectation mechanism for combinatorial auctions with monotone submodular valuation functions in the value oracle model, achieving a better than $1/n^\gamma$-approximation in expectation in terms of social welfare, where the number of players is $n$ and the number of items is $m=\poly(n)$. 
\end{theorem}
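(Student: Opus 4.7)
The plan is to reduce Theorem~\ref{thm:CA-hardness} to the single-player flexible CPP hardness of Theorem~\ref{thm:CPP-hardness}, following the direct-hardness approach of Dobzinski~\cite{Dobzin11}. Suppose for contradiction that $\cM$ is a $(1-\epsilon)$-approximately truthful-in-expectation mechanism for submodular combinatorial auctions achieving a better than $n^{-\gamma}$-approximation using $\poly(n)$ value queries. I would construct a family of hard instances on $n$ players and $m = \poly(n)$ items in which a single distinguished player (say player~$1$) faces, from the mechanism's perspective, essentially the single-player flexible submodular CPP problem of Section~\ref{sec:CPP-hardness}. To do this, players $2, \ldots, n$ are assigned fixed, simple valuations (for example, additive valuations on disjoint blocks of items) chosen so that any allocation whose welfare is within an $n^{-\gamma}$ factor of the optimum must leave a prescribed region $T \sse M$ of size roughly $k$ available to player~$1$. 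Player~$1$'s valuation is then drawn from the bisection-sequence family of Section~\ref{sec:CPP-hardness} supported on $T$ with cardinality bound $k$, so that welfare maximization on these instances reduces, up to lower-order terms, to flexible submodular CPP on $T$.

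The main new obstacle, already flagged in the overview, is the presence of payments: truthfulness for player~$1$ concerns the expected utility $\E[v_1(\cM(v)) - p_1(v)]$, and the analogue of Theorem~\ref{thm:TIE=MIDR} does not directly apply, since player~$1$'s payment can depend arbitrarily on $v_1$ and on the other players' reports. To eliminate payments I would use a scaling argument combined with a separating-hyperplane step, as sketched in the introduction. Consider the set of pairs $(\E[v_1(\cM(v'_1, v_{-1}))], \E[p_1(v'_1, v_{-1})]) \in \RR^2$ as $v'_1$ ranges over candidate reports; scaling player~$1$'s reported valuation by a parameter $\lambda$ and invoking approximate truthfulness forces the mechanism to almost maximize $\lambda \E[v_1(\cM)] - \E[p_1]$. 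Taking $\lambda$ large (non-uniformly, as a function of the input size) makes the payment negligible relative to $\lambda$ times the value, so on the Pareto frontier of the above set the mechanism behaves, with respect to its value coordinate, as an approximately maximal-in-distributional-range procedure on player~$1$'s part of the instance. A separating-hyperplane argument on the convex hull of achievable value-payment pairs makes this rigorous, converting $\cM$ into a non-uniform, $(1 - O(\epsilon))$-approximately MIDR mechanism for the flexible submodular CPP problem on $T$, still using $\poly(n)$ value queries, in direct analogy with Theorem~\ref{thm:TIE=MIDR}.

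Once the payment-elimination step is in place, the rest of the argument is the symmetry gap and gap amplification machinery already developed for Theorem~\ref{thm:CPP-hardness}, applied verbatim on the region $T$. Specifically, I would use the bisection sequence $(A^{(j)}, B^{(j)})$ with $A^{(j)} \sse T$ and valuations of the form
\[
 f_{A^{(j)},B^{(j)}}(S) = 1 - \left(1 - \phi\!\left(\tfrac{|S \cap A^{(j)}|}{|A^{(j)}|}\right)\right)\left(1 - \phi\!\left(\tfrac{|S \cap B^{(j)}|}{|B^{(j)}|}\right)\right)
\]
for a suitable concave $\phi:[0,1]\to[0,1]$, perturbed as in~\cite{V09} so that $\poly(n)$ value queries cannot detect the hidden partition of $A^{(j)}$ into $A^{(j-1)} \cup B^{(j-1)}$. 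The approximate MIDR structure extracted for player~$1$ then forces, at each level $j$, the existence of a distribution $D_j$ in the range whose ``density'' satisfies the weighted-moment inequality~(\ref{eq:ampl}), yielding a per-level boost of $(1 + \delta^2)/2$ in that density measure. After $\ell = \Theta(\log m)$ levels this contradicts the feasibility bound $\E[X_\ell] \leq k/m = 2^{-\ell}$, provided $c \geq n^{-\gamma}$ for sufficiently small $\gamma > 0$.

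The hard part will be the payment-elimination step and its quantitative interplay with gap amplification. One must push the $(1 - O(\epsilon))$ loss from the scaling and separating-hyperplane step into the per-level bookkeeping of~(\ref{eq:ampl}) and verify that the compounded multiplicative loss over the $\ell = \Theta(\log m)$ levels is still dominated by the per-level boost of $1 + \delta^2$. This is what fixes $\epsilon$ and $\gamma$ as small absolute constants; the rest of the pipeline, including the symmetry-gap modifications and the key analytic inequality~(\ref{eq:ampl}), is inherited unchanged from Section~\ref{sec:CPP-hardness}.
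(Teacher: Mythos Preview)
Your proposal has a genuine gap in the payment-elimination step and, as a consequence, in the final contradiction.

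You propose to fix $v_{-1}$ so that any good allocation leaves a region $T$ of size $\approx k$ to player~$1$, and then to convert the mechanism restricted to player~$1$ into an approximately MIDR mechanism by scaling $v_1$ by a large $\lambda$ (as in Theorem~\ref{thm:TIE=MIDR}) and invoking CPP hardness on $T$ with cardinality bound $k$. The difficulty is that in combinatorial auctions, once you let $\lambda \to \infty$, the other players' valuations become negligible and the mechanism may simply allocate all of $M$ (or all of $T$) to player~$1$; nothing enforces a cardinality bound on player~$1$'s allocation in the limit. Thus the ``range'' you extract in the limit contains distributions supported on arbitrarily large sets, and Theorem~\ref{thm:CPP-hardness} --- whose contradiction is precisely a violation of the cardinality constraint $|S|\le k$ --- cannot be invoked. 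Your proposed final inequality $\E[X_\ell]\le k/m=2^{-\ell}$ is simply not available here.

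The paper circumvents this by never reducing to a cardinality-constrained problem. Instead it starts from a random basic instance where every player has a \emph{polar} valuation $v^*_i(S)=|S\cap A^{(0)}_i|+\omega|S\setminus A^{(0)}_i|$, so every item is worth at least $\omega>0$ to the special player. The separating-hyperplane argument is applied \emph{per level}, not globally: at each level $j$ one uses truthfulness at all scales $\lambda$ to conclude (via 2D Farkas on the closure of the menu of value--price pairs) that there exists $(X_{j+1},P_{j+1})$ with $\E[P_{j+1}]\le\E[P_j]$ \emph{and} the symmetry-gap value inequality simultaneously. Chaining these gives a level-$\ell$ distribution with $\E[P_\ell]\le\E[P_0]$ and $\E[X_\ell]$ amplified as in~(\ref{eq:ampl}). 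The contradiction is then to truthfulness itself: since every item has value $\ge\omega$ under $v^*$, declaring the level-$\ell$ valuation yields utility $\ge(1-\epsilon)\omega m\,\E[X_\ell]-\E[P_0]$, which exceeds the utility $\le \frac{2m}{n}\E[X_0]-\E[P_0]$ the special player obtains truthfully --- no cardinality bound is ever used. The polar valuation (giving all items a floor value $\omega$) and the per-level price control are the two ingredients your outline is missing.
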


\paragraph{Discussion.}
This theorem extends previous negative results for combinatorial auctions with submodular valuation functions, which were known in the cases of deterministic truthful and randomized universally truthful mechanisms \cite{Dobzin11}. Also, it appears that as stated these results do not rule out approximately truthful mechanisms.

We remark that there is still the possibility of a truthful-in-expectation mechanism in the ``lottery-value" oracle model which was introduced in \cite{DRY11}. Here, a player is able to provide the {\em exact} expectation $\E[v_i(\hat{\bx})]$ for a product distribution given by $\bx$. Since the exact expectations $\E[v_i(\hat{\bx})]$ are hard to compute even in very special cases like the budget-additive case, this is a severe limitation. Our hardness result does not apply directly to this stronger oracle model. However, what our result implies is that if a truthful-in-expectation mechanism exists in the lottery-value model, then it must be very sensitive to the accuracy of the oracle's answers, and does not remain even approximately truthful-in-expectation if the oracle's answers involve some small noise. This is because if we had a mechanism in the lottery-value oracle model, which remains approximately t.i.e. under small noise in the oracle and provides a good approximation, then we could simulate this mechanism in the value oracle model (by sample-average approximation). Thus we would obtain an approximately t.i.e. mechanism contradicting Theorem~\ref{thm:CA-hardness}. 

\paragraph{Proof strategy.} Our hardness result for combinatorial public projects (Section~\ref{sec:CPP-hardness}) can be adapted to show that there is no (approximately) MIDR mechanism for submodular combinatorial auctions that guarantees a good approximation ratio. However, unlike in CPP, we are unable to prove that truthful-in-expectation mechanisms and MIDR algorithms are equivalent in power (even in the approximate sense). This is not surprising, since randomized truthful mechanisms that are not maximal-in-distributional-range have been designed for combinatorial auctions (see for example \cite{D07}). Therefore, additional ideas are needed to rule out all truthful-in-expectation mechanisms.
Such ideas have been recently put forth in a paper by Dobzinski \cite{Dobzin11}. The {\em direct hardness} approach of \cite{Dobzin11} provides a way to avoid the characterization step and instead attack the truthful mechanism directly. This idea applies to truthful-in-expectation mechanisms as well. 

The main idea of the direct hardness approach can be stated as follows. If we identify a special player whose range of possible allocations is sufficiently ``rich" when the valuations of other players are fixed to particular functions, then we can work with the special player directly using the \emph{taxation principle}: There is a fixed price for each distribution over allocations in the ``range" of the mechanism as the special player varies his valuation, and  the mechanism outputs the distribution in this range that maximizes the player's utility (his expected value for the distribution on allocations less the price of that distribution). Thus, our symmetry gap techniques from Section~\ref{sec:CPP-hardness} apply here quite naturally, though the presence of payments poses an additional technical challenge that was not present for CPP.
Next, we present a sketch of our proof. The complete proof is presented in Appendix~\ref{sec:CA-proof}.

\paragraph{The basic instance.}
We start from the following ``basic instance".
For an integer $\ell$, we construct instances with $|N| = n = 2^\ell$ players and $|M| = m = \poly(n)$ items.
Each player has a ``polar valuation" $v^*_i$ (as in \cite{Dobzin11}), where items in a certain set $A^{(0)}_i$ have value $1$ for player $i$ and other items have (small) value $\omega>0$. The sets $A^{(0)}_i$ are chosen independently at random, under the constraint that $|A^{(0)}_i| = m/n$.

A counting argument shows that if a mechanism provides a $c$-approximation in social welfare,
then there must be a player whose allocated set $R^{(0)}_i$ overlaps significantly with his desired set $A^{(0)}_i$:
\begin{equation*}
 \E[|R^{(0)}_i \cap A^{(0)}_i|] \geq (c/4-\omega) \E[|R^{(0)}_i \cup A^{(0)}_i|].
\end{equation*}
(See Lemma~\ref{lem:basic-inst}.)
By an averaging argument, this is also true for a certain fixed choice of the other players' valuations. In the following, we fix that choice and consider varying valuations for player $i$ only, who we refer to as the ``special player". We also drop the index $i$, since we do not consider the other players anymore.

In the following, we set $\omega = c/8$, so that $\E[|R^{(0)} \cap A^{(0)}|] \geq \omega \E[|R^{(0)} \cup A^{(0)}|]$. Hence we can estimate the expected value received by the special player as follows: 
$$\E[v^*(R^{(0)})] = \E[|R^{(0)} \cap A^{(0)}|] + \omega \E[|R^{(0)} \setminus A^{(0)}|] \leq 2 \E[|R^{(0)} \cap A^{(0)}|].$$
Denoting $X_0 = \frac{|R^{(0)} \cap A^{(0)}|}{|A^{(0)}|}$, we have  $\E[v^*(R^{(0)})] \leq \frac{2m}{n} \E[X_0]$.
Also, $\E[v^*(R^{(0)})] \geq \E[|R^{(0)} \cap A^{(0)}|] = \frac{m}{n} \E[X_0]$. 
So the special player's utility in the basic instance (in expectation over the random instances) is $\frac{m}{n} \E[X_0]$, up to a factor of $2$.

\paragraph{Symmetry gap again.}
We consider valuations for the special player at $\ell$ levels, in the same form that we considered in the case of combinatorial public projects. The difference now is that the mechanism is not necessarily maximal-in-distributional-range. Instead, we use the definition of truthfulness in expectation directly. The same symmetry gap argument as in Section~\ref{sec:CPP-hardness} gives the following:
If there is a random set $R^{(j)}$ possibly allocated at level $j$ at a price $P_j$, and $X_j = \frac{|R^{(j)} \cap A^{(j)}|}{|A^{(j)}|}$, then there is a random set $R^{(j+1)}$ possibly allocated at level $j+1$ at a price $P_{j+1}$, and
$X_{j+1} = \frac{|R^{(j+1)} \cap A^{(j+1)}|}{|A^{(j+1)}|}$, so that
$$  \E[1 - (1-\phi(X_{j+1}))^2] - \E[P_{j+1}] \geq \E[\phi(X_j)] - \E[P_j].$$
Again, we are ignoring certain error terms and we are also ignoring the issue of approximate truthfulness.
Using the fact that the valuation functions can be scaled arbitrarily and the mechanism must still be truthful in expectation,
we obtain that for any $\lambda',\lambda'' \geq 0$, there is distribution possibly allocated at level $j+1$ such that
\begin{equation}
\label{eq:sym-lam}
 \lambda' \E[1 - (1-\phi(X_{j+1}))^2] - \lambda'' \E[P_{j+1}] \geq \lambda' \E[\phi(X_j)] - \lambda'' \E[P_j].
\end{equation}

\paragraph{Convex hulls and the separation argument.}
Our goal is to eliminate the prices from the picture, so that we can use arguments similar to Section~\ref{sec:CPP-hardness}. For that purpose, it is convenient to pass to convex hulls as follows. We define the {\em distribution menu} $\cM_j$ at level $j$ to consist of all distributions of pairs of random variables $(X_j,P_j)$, such that $X_j = \frac{|R^{(j)} \cap A^{(j)}|}{|A^{(j)}|}$ for some random set $R^{(j)}$ allocated for a level-$j$ valuation at a price $P_j$. Then we define the {\em closure} of a distribution menu, $\overline{\cM}_j$, to be the topological closure of the convex hull of $\cM_j$ (in the sense of taking convex combinations of distributions). By convexity, (\ref{eq:sym-lam}) still holds in the sense that for any $(X_j,P_j)$ with a distribution in $\overline{\cM}_j$ and any $\lambda',\lambda'' \geq 0$, there is $(X_{j+1},P_{j+1})$ with a distribution in $\overline{\cM_{j+1}}$ such that (\ref{eq:sym-lam}) holds.

A convex separation argument, essentially Farkas' lemma in 2 dimensions, actually implies the following.
For any $(X_j,P_j)$ with a distribution in $\overline{\cM}_j$, there is $(X_{j+1},P_{j+1})$ with a distribution in $\overline{\cM_{j+1}}$ such that $\E[P_{j+1}] \leq \E[P_j]$ and
\begin{equation*}
 \E[1 - (1-\phi(X_{j+1}))^2] \geq \E[\phi(X_j)].
\end{equation*}
In other words, there is a distribution in the closure of the menu at level $j+1$ at a price no higher than the price we had at level $j$, and the respective random variables $X_j, X_{j+1}$ satisfy the same relationship (\ref{eq:sym-simple}) that we had in Section~\ref{sec:CPP-hardness}.
The rest of the proof goes exactly as in Section~\ref{sec:CPP-hardness}, using (\ref{eq:ampl}) and eventually producing a distribution represented by $(X_\ell,P_\ell)$ in $\overline{\cM_\ell}$ such that $\E[P_\ell] \leq \E[P_0]$ and
$$ \E[X_\ell] \geq \left( \frac{1+\delta^2}{2} \right)^\ell (\E[X_0])^{1+\delta}. $$
Here, $(X_0,P_0)$ represents the distribution and price allocated in the basic instance.
Now we consider the utility that the distribution represented by $(X_\ell,P_\ell)$ would provide in the basic instance: since every element has value at least $\omega$ there, the utility would be
\begin{equation}
\label{eq:ell-value}
 \E[v^*(R^{(\ell)}) - P_\ell] \geq \E[\omega m X_\ell - P_\ell] \geq \omega m \left( \frac{1+\delta^2}{2} \right)^\ell (\E[X_0])^{1+\delta} - \E[P_0].
\end{equation}
Recall that the distribution of $(X_\ell,P_\ell)$ is not on the menu $\cM_\ell$ but rather in its convex hull. However, by using the properties of the convex hull, there must be a distribution on the actual menu $\cM_\ell$ that satisfies the same linear inequality. So we can assume without loss of generality that the distribution of $(X_\ell, P_\ell)$ is on the actual menu at level $\ell$, and $R^{(\ell)}$ is the respective random set that would be allocated to the special player if he declared a level-$\ell$ valuation.

Recall that in the basic instance, the value received by the special player is at most $\frac{2m}{n} \E[X_0]$, and the respective utility is at most $\frac{2m}{n} \E[X_0] - \E[P_0]$. We also have $n = 2^\ell$ and $\E[X_0] \geq c/4 - \omega = c/8$.
If $c = 8\omega \geq n^{-\gamma}$ for a suitable constant $\gamma>0$, we would obtain from (\ref{eq:ell-value}) that the special player could substantially improve his utility in the basic instance by declaring a level-$\ell$ valuation instead. We conclude that this would contradict the property of truthfulness in expectation.

The complete proof appears in Appendix~\ref{sec:CA-proof}.


\appendix

\section{Transforming an approximately truthful-in-expectation mechanism into approximately maximal-in-distribution range}
\label{sec:TIE-MIDR}
In this section, we prove Theorem \ref{thm:TIE=MIDR}.

%
%
Let $\Omega= \set{ S \sse [m] : |S| \leq k}$ be the set of outcomes of CPP. Let $\Delta(\Omega)$ denote the simplex in $\RR^\Omega$, representing the set of distributions over $\Omega$. Let $\V$ denote the set of submodular valuations on $[m]$. We think of $\V$ as a subset of $\RRp^\Omega$ --- specifically, each $v \in \V$ is a vector in $\RRp^\Omega$, where $v_S$ is the value of outcome $S$ for a player with valuation $v$. We note that for each $v \in \V$,  the infinity norm $||v||_\infty$  is equal to the value of the optimum solution.

Fix $\epsilon$, $c$, $\cM$, and $\delta$ as in the statement of the theorem. Let $\A: \V \to \Delta(\Omega)$ be the allocation rule of $\cM$ when there is a single player. By assumption, $\A$ is a $c$-approximation for $c>0$ -- specifically, $\frac{v^T \A(v)}{||v||_\infty} \geq c$.  The following is an approximate variant of \emph{weak monotonicity} \cite{LMN03}, and follows from the fact that $\A$ is the allocation rule of a  $(1-\epsilon)$-approximately truthful mechanism.

\begin{fact}[Similar to \cite{LMN03}]\label{fact:wmon}
  For any $u,v \in \V$,
\[ v^T \A(v) - (1-\epsilon) u^T \A(v) \geq (1-\epsilon) v^T \A(u) - u^T \A(u).  \]
\end{fact}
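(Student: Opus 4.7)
The plan is to exploit the $(1-\epsilon)$-approximate truthfulness of $\cM$ by writing the single player's incentive constraint in both directions---once with $v$ the true valuation and $u$ the reported one, once with the roles reversed---and then summing the two inequalities so that the expected payments cancel. Specifically, let $P_u$ and $P_v$ denote the expected payments of $\cM$ at reports $u$ and $v$, respectively. Applying the relaxed approximate truthfulness condition (\ref{eq:approx-truthful-relaxed}) to each of the two ordered pairs (true, reported) $=(v,u)$ and $(u,v)$ yields two inequalities: $v^T\A(v) - P_v \geq (1-\epsilon)\,v^T\A(u) - P_u$, and its mirror image with the roles of $u$ and $v$ swapped.

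Summing these two inequalities, the $P_u$ and $P_v$ terms cancel exactly, leaving
\[ v^T\A(v) + u^T\A(u) \;\geq\; (1-\epsilon)\bigl[u^T\A(v) + v^T\A(u)\bigr], \]
which is precisely the claim after transposing the two middle terms.

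The key design choice---and the only point worth flagging---is to invoke the \emph{relaxed} form (\ref{eq:approx-truthful-relaxed}), in which the factor $(1-\epsilon)$ sits in front of the value term only and not in front of the payment. That is exactly what makes the $P_u, P_v$ terms cancel on summation rather than leaving a residual. Had one worked from the original definition (\ref{eq:approx-truthful}) instead, the same manipulation would produce an extra additive term of the form $\epsilon(P_u+P_v)$, which is nonnegative by the individual-rationality / nonnegative-payments hypothesis on $\cM$ and can be discarded; so either route works. No substantive obstacle is present here: this is the standard weak-monotonicity derivation for exactly truthful mechanisms (as in \cite{LMN03}), transcribed to the approximate setting.
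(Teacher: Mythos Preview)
Your proof is correct and is exactly the standard weak-monotonicity derivation adapted to the approximate setting; the paper states this fact without proof (citing \cite{LMN03}), so your argument fills in precisely what the authors leave implicit. Your observation that the relaxed form~(\ref{eq:approx-truthful-relaxed}) makes the payments cancel cleanly, while the original form~(\ref{eq:approx-truthful}) leaves a harmless nonnegative residual $\epsilon(P_u+P_v)$, is accurate and worth noting.
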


We prove Theorem \ref{thm:TIE=MIDR} by showing that there is a black-box reduction  that converts $\A$ to a new allocation rule $\B$ that is $(1- 3\epsilon - \delta)$-MIDR. The reduction will be non-uniform -- specifically, $\B$ will utilize an advice string that depends on $m$, but is independent of the input valuation $v \in \V$. The length of the advice string will not be bounded, polynomially or otherwise --- this is OK, since we are only interested in preserving value oracle lower-bounds. $\B$ preserves  the approximation ratio of $\A$, and moreover makes only $m$ more value queries than does $\A$. 



The proof consists of two main steps. First, we show that $\A$ tends to a $(1-\epsilon)$-approximately maximal-in-distributional-range allocation rule ``in the limit'' as we scale up the valuations.  Then, we use this fact to construct, via a non-uniform black box reduction,  an allocation rule $\B$ that approximates the limit behavior of $\A$,  in the sense that it $(1-3\epsilon -\delta)$-approximately maximizes over the range of $\A$. 

\begin{remark}
We note that the proofs of this section apply more generally than CPP with submodular valuations. In particular, the only properties of this problem that are used in the proofs are: (1) The multiple-player allocation problem is algorithmically equivalent to the single player allocation problem (2) The set $\Omega$ of outcomes is finite, (3) The set of valuations $\V \sse \RRp^\Omega$ is closed under scaling by a non-negative constant, and (4) There is a deterministic algorithm $s: \V \to \RRp$ that runs in finite time, makes a polynomial number of value queries, and returns a ``weak approximation'' to the optimal value -- i.e. we only require that $s(v) > 0$ when $||v||_\infty >0$. When a welfare-maximization mechanism design problem satisfies these four conditions, as do all variants of CPP and other ``public-project''-type problems in the literature, then the analogue of Theorem \ref{thm:TIE=MIDR} holds for that problem.
\end{remark}

\subsection{Limit behavior of truthful in expectation mechanisms}

We will show that $\A$ is $(1-\epsilon)$-approximately MIDR in the limit as we scale up the valuations. Recall that a mechanism $\B: \V \to \Delta (\Omega)$ is $(1-\epsilon)$-approximately MIDR if $v^T\B(v) \geq (1-\epsilon) \sup_{w \in \V} v^T \B(w)$ for all $v \in \V$. The following statement is analogous.

\begin{proposition}\label{prop:lim_midr}
$\liminf_{\alpha \to \infty} v^T\A(\alpha v) \geq (1-\epsilon) \sup_{w \in \V} v^T\A(w)$ for all $v \in \V$.
\end{proposition}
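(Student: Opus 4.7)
The plan is to obtain the inequality directly from the approximate weak monotonicity property (Fact~\ref{fact:wmon}), applied with a well-chosen substitution, and then to let the scaling parameter $\alpha$ do the work.

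More concretely, fix $v \in \V$ and an arbitrary $w \in \V$. I would apply Fact~\ref{fact:wmon} by taking the ``$v$'' in the fact to be $w$ and the ``$u$'' in the fact to be $\alpha v$. This yields
\[ w^T \A(w) - (1-\epsilon)\,\alpha\, v^T \A(w) \;\geq\; (1-\epsilon)\, w^T \A(\alpha v) - \alpha\, v^T \A(\alpha v). \]
Rearranging to isolate the $\alpha$-scaled terms on the left and dividing through by $\alpha$, this becomes
\[ v^T \A(\alpha v) - (1-\epsilon)\, v^T \A(w) \;\geq\; \tfrac{1}{\alpha}\Bigl[(1-\epsilon)\, w^T \A(\alpha v) - w^T \A(w)\Bigr]. \]

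Now I would control the right-hand side as $\alpha \to \infty$. Since $\A(\alpha v)$ and $\A(w)$ are probability distributions over the finite set $\Omega$, both inner products $w^T \A(\alpha v)$ and $w^T \A(w)$ are bounded in absolute value by $\|w\|_\infty$, which is a finite constant that does not depend on $\alpha$. Therefore the entire right-hand side tends to $0$ as $\alpha \to \infty$, and taking the $\liminf$ of both sides gives
\[ \liminf_{\alpha \to \infty}\, v^T \A(\alpha v) \;\geq\; (1-\epsilon)\, v^T \A(w). \]
Since $w \in \V$ was arbitrary, taking the supremum over $w$ on the right yields the claim.

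I do not anticipate any real obstacle here: the proof is a one-line manipulation of Fact~\ref{fact:wmon} followed by the observation that additive constants wash out under division by $\alpha$. The only subtle point worth noting in the write-up is that the finiteness of $\|w\|_\infty$ (which holds because $\Omega$ is finite and $w \in \RR_+^\Omega$) is what legitimizes passing to the $\liminf$ before taking the supremum over $w$, avoiding any interchange-of-limits issue.
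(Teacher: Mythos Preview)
Your proof is correct and essentially identical to the paper's. You substitute $(v,u)\mapsto(w,\alpha v)$ in Fact~\ref{fact:wmon} while the paper substitutes $(v,u)\mapsto(\alpha v,w)$, but since the inequality in Fact~\ref{fact:wmon} is symmetric under swapping $u$ and $v$ (each instance is a rearrangement of the other), the two substitutions yield the same inequality; from there both proofs divide by $\alpha$, take $\liminf$, and then take $\sup$ over $w$.
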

\begin{proof}
Let $\alpha,\beta\in \RRp$, and let $w,v \in \V$. By Fact \ref{fact:wmon}:
\begin{align*}
 \alpha v^T \A(\alpha v) -   (1-\epsilon)w^T \A(\alpha v) &\geq (1-\epsilon) \alpha v^T \A( w) -   w^T \A( w).
\end{align*}
Dividing the expression by $\alpha$, we get:
\begin{align*}
 v^T \A(\alpha v) -  \frac{ (1-\epsilon) w^T \A(\alpha v)}{\alpha} &\geq  (1-\epsilon) v^T \A( w) -  \frac{ w^T \A( w)}{\alpha}
\end{align*}
Taking the limit infimum as $\alpha$ goes to infinity,
\begin{align*}
\liminf_{\alpha \to \infty} v^T \A(\alpha v)  &\geq  (1-\epsilon) v^T \A( w) .
\end{align*}
Now, taking the supremum over $w$
\begin{align*}
\liminf_{\alpha \to \infty} v^T \A(\alpha v)  &\geq (1-\epsilon) \sup_{w \in \V}  v^T \A(w) 
\end{align*}
 This  completes the proof.
\end{proof}


 

\subsection{Approximating the limit behavior of a mechanism}
Ideally, we would transform $\A$ to an allocation rule that behaves as $\A$ does in the limit -- by the results of the previous sub-section, such a ``limit allocation rule'' of $\A$ would be $(1-\epsilon)$-MIDR. However, since our reduction must take finite time, we must settle for approximating the limit behavior of $\A$. Unfortunately, even that is non-trivial: given $v$, the ratio $\alpha$ by which we would need to scale $v$ before coming close to the ``limit'' of $\A(\alpha v)$ is a complete mystery, and may be arbitrarily large. Therefore, we need to utilize some non-uniform advice to deduce that order of magnitude of the necessary scaling factor. An additional difficulty is that this advice must be independent of $v$ -- specifically,  the advice may depend only on the number of items $m$.

For each $\delta' >0$ and $v \in \V$, we define a threshold  $t(\delta',v)$. Roughly speaking, $t(\delta',v)$ is the ``scale''  at which  $\A$ is guaranteed to be within $(1-\delta')$ of its limit behavior when given input in the direction of $v$.  Proposition \ref{prop:lim_midr} guarantees that threshold $t(\delta',v)$ exists for each $v\in \V$ and $\delta' >0$.
\begin{equation}
  \label{eq:2}
  t(\delta',v) = \sup \left\{t :  v^T \A\left(t \frac{v}{||v||_\infty}\right) \leq (1- \epsilon-\delta') \sup_{w \in \V} v^T \A(w)\right\} +1
\end{equation}
 We note that the motivation for adding $1$ (any arbitrary positive number would do) to the expression is to guarantee that $v^T \A\left(t(\delta',v) \frac{v}{||v||_\infty}\right) \geq (1-\epsilon - \delta') \sup_{w \in \V} v^T \A(w)$, which may not be guaranteed by the supremum.

Assume that, for some $\delta'>0$, we have some upper-bound $\tau$ on $\set{t(\delta',v): v\in \V}$, and moreover we have a procedure $s(v)$ to estimate a non-zero lower bound on $||v||_\infty$ for each $v \in \V$. Then, the following procedure is evidently a $c$-approximate and $(1-\epsilon - \delta')$-MIDR allocation rule: Given input $v \in \V$, output  $\A(\frac{\tau}{s(v)} \cdot v)$. The procedure $s(v)$ is easy to implement using only $m$ value queries --- indeed, we can take $s(v) = \max_{j \in [m]} v(\set{j})$. It is not clear, however, that the  upper-bound $\tau$ can be computed effectively. Even worse, it is not clear that such an upper-bound even exists: $\V$ is infinite, and $t(\delta',v)$ is not necessarily a continuous function of $v$!

We remedy this as follows. We will show that there exists such an upper-bound when $\delta'$ is sufficiently large relative to $\epsilon$. Specifically, we show that there exists an  upper-bound $\tau$ on  $\set{t(\delta + 2\epsilon,v) : v \in \V}$. However, computing such an upper-bound in finite time may be impossible in general, since the scale of valuations at which $\A$ approaches its  limit behavior may be arbitrary. Instead, we take $\tau$ as advice to our non-uniform reduction. By the discussion in the previous paragraph, showing that the upper-bound $\tau$ exists  yields a non-uniform allocation rule that is $(1-3 \epsilon -\delta)$-MIDR, and makes at most $m$ more value queries than $\A$, completing the proof of Theorem \ref{thm:TIE=MIDR}. 

As a tool  for proving that the upper-bound $\tau$ exists, we define a finite net of $\V$. Since $\V$ is a  cone  in finite-dimensional euclidean space, its intersection with the infinity-norm unit ball admits a \emph{$\sigma$-net} in the infinity-norm for any $\sigma>0$  --- specifically, a finite set  $\U \sse \V$ such that
\begin{enumerate}
\item $||u||_\infty =1$ for all $u \in \U$
\item  $\forall v \in \V \  \exists u \in \U \ \left|\left|\frac{v}{||v||_\infty} - u\right|\right|_\infty \leq \sigma $
\end{enumerate}

Let $\sigma = c \delta / 4$ and let $\U$ be a $\sigma$-net of $\V$. Now let $\beta= \max_{u \in \U} t(\frac{\delta}{4},u)$, and let  $\tau= 4 \beta / \delta$. It suffices to show that $v^T \A ( \tau v) \geq (1-3\epsilon -\delta) \sup_{w \in \V} v^T \A (w)$ for each  $v \in \V$ with $||v||_\infty = 1$. Let $v \in \V$ be such that $||v||_\infty = 1$, and let $u$ be a point in the $\sigma$-net $\U$ such that $||v- u||_\infty \leq \sigma$.  By Fact \ref{fact:wmon}, we have

\begin{align}\label{eq:wmon_tau}
 \tau v^T \A(\tau v ) - (1-\epsilon) \beta u^T \A(\tau v) &\geq (1-\epsilon)\tau v^T \A (\beta u)  - \beta u^T \A( \beta u)    
\end{align}
We now use inequality \eqref{eq:wmon_tau} to lower-bound $v^T \A(\tau v)$:
\begin{align*}
  v^T \A(\tau v)  &\geq  (1-\epsilon) v^T \A (\beta u )  - \frac{ \beta}{\tau} u^T \A( \beta u)    &&\text{Dividing \eqref{eq:wmon_tau} by $\tau$ and loosening the inequality}\\
   &=  (1-\epsilon) v^T \A (\beta u)  - \frac{\delta}{4} u^T \A( \beta u)    &&\text{By definition of $\tau$}\\
   &=  \left(1  - \epsilon - \frac{\delta}{4}\right) u^T \A( \beta u) - (1-\epsilon) (u-v)^T \A(\beta u)    &&\\
   &\geq  \left(1- \epsilon - \frac{\delta}{4}\right) u^T \A( \beta u) - ||u-v||_\infty     &&\text{Since $||\A(\beta u)||_1 = 1$} \\
   &\geq  \left(1  - \epsilon-\frac{\delta}{4}\right) u^T \A( \beta u) - \sigma     &&\text{By proximity of $u$ and $v$} \\
   &\geq  \left(1 - \epsilon -\frac{\delta}{2}\right) u^T \A( \beta u)     &&\text{By  $c \leq u^T \A(\beta u)$ and definition of $\sigma$} \\
  &\geq  \left(1-2\epsilon -\frac{3}{4} \delta\right) \sup_{w \in \V} u^T \A (w)      &&\text{By definition of $\beta$}\\
  &\geq  \left(1-2 \epsilon -\frac{3}{4} \delta\right) \liminf_{\alpha \to \infty} u^T \A (\alpha v)      &&\\
  &=  \left(1-2 \epsilon -\frac{3}{4} \delta\right) \liminf_{\alpha\to \infty} (v^T \A (\alpha v) - (v -u)^T \A(\alpha v) )      &&\\
  &\geq  \left(1- 2 \epsilon -\frac{3}{4} \delta\right) \liminf_{\alpha \to \infty} (v^T \A (\alpha v) - ||v -u||_\infty  )      &&\text{Since $||\A(\alpha v)||_1 = 1$}\\
  &\geq  \left(1-2 \epsilon - \frac{3}{4} \delta\right) \liminf_{\alpha \to \infty} (v^T \A (\alpha v) - \sigma  )      &&\text{By proximity of $u$ and $v$}\\
  &\geq  \left(1- 2 \epsilon - \frac{3}{4}\delta\right) \liminf_{\alpha \to \infty} (v^T \A (\alpha v) - \frac{\delta}{4} v^T \A(\alpha v)  )      &&\text{By  $c \leq v^T \A(\alpha v)$ and definition of $\sigma$}\\
  &\geq  \left(1- 2 \epsilon - \delta\right) \liminf_{\alpha \to \infty} v^T \A (\alpha v)  &&\\
  &\geq  \left(1- 3 \epsilon - \delta\right) \sup_{w \in \V} v^T \A (w)  &&\text{By Proposition \ref{prop:lim_midr}}
\end{align*}
By the previous discussion, this completes the proof of Theorem \ref{thm:TIE=MIDR}.






\section{Proof of hardness for combinatorial public projects}
\label{sec:CPP-proof}

In this section, we present the proof of Theorem~\ref{thm:CPP-hardness}.

\subsection{The basic setup}
\label{sec:basics}

We consider a ground set of $|M|=m=400^\ell$ items for some $\ell \geq 1$. We set the cardinality bound to be $k = 200^\ell = m/n$ where $n = 2^\ell$. (We note that $n$ is just a parameter unrelated to the number of players, which is 1 in this case. This parameter will however denote the number of players in Section~\ref{sec:auctions-hardness}.)
An important object in the following will be a {\em random bisection sequence}.

\begin{definition}
\label{def:rnd-bisect}
A random bisection sequence is a random sequence of pairs of sets $(A^{(0)},B^{(0)})$, $(A^{(1)},B^{(1)})$,
$\ldots$, $(A^{(\ell)}, B^{(\ell)})$ generated as follows. We define $A^{(\ell)} = B^{(\ell)} = M$. Given $A^{(j)}$ for $0 < j \leq \ell$, we pick $(A^{(j-1)}, B^{(j-1)})$ uniformly among all partitions of $A^{(j)}$ into two parts of size $\frac12 |A^{(j)}|$.
\end{definition}

I.e., $|A^{(j)}| = |B^{(j)}| = 2^{j-\ell} m$.
We refer to $A^{(j)} = A^{(j-1)} \cup B^{(j-1)}$ as the $j$-th level of the bisection sequence. Observe that the distribution of $(A^{(j-1)},B^{(j-1)})$ is uniform among all pairs of disjoint sets of size $2^{j-1-\ell} m$.
We will use valuation functions associated with each level of a bisection sequence. We denote these valuation functions at level $j$ by $f_{A^{(j-1)},B^{(j-1)}}$. In particular, this valuation function depends only on the elements of $A^{(j)} = A^{(j-1)} \cup B^{(j-1)}$.

The bisection sequence is generated at random and unknown to the mechanism. For each particular choice of a valuation function at a certain level, the mechanism needs to produce a probability distribution over feasible sets, which is purportedly the (approximately) optimal one over a certain fixed range of distributions $\cR$. The distribution will depend on the choice of a valuation function, in particular on the relevant set of items $A^{(j)}$. A function assigning a distribution over sets to every set $A^{(j)}$ is a complicated object; in order to be able to argue about all possible such functions, we distill the important information into a single random variable for each level $j$.

\begin{definition}
\label{def:density}
We say that a random variable $X_j$ is constructible by a range $\cR$ at level $j$, if there is a distribution $D(A^{(j)}) \in \cR$ for each set $A^{(j)}$ of size $2^{j-\ell} m$ such that if a random set $R$ is generated by first choosing $A^{(j)}$ uniformly among all sets of size $2^{j-\ell} m$ and then sampling $R$ from the distribution $D(A^{(j)})$, then
$$ X_j = \frac{|R \cap A^{(j)}|}{|A^{(j)}|}. $$
\end{definition}

Note that the normalization is chosen so that we have $X_j \in [0,1]$. There are two sources of randomness in defining $X_j$: one is the randomness in $A^{(j)}$, and one arises from the probability distribution $D(A^{(j)})$.

The first useful fact is the following (easy) lemma.

\begin{lemma}
\label{lem:base-case}
Consider a mechanism returning distributions from a range $\cR$ that achieves a $c$-approximation for the problem $\max \{f(S): |S| \leq k \}$ for $f$ monotone submodular. Then there is a random variable $X_0$ constructible by $\cR$ at level $0$ such that
$$ \E[X_0] \geq c.$$
\end{lemma}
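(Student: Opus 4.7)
The plan is to produce, for each size-$k$ set $A^{(0)} \sse M$, an additive (hence monotone submodular) witness valuation whose unique $k$-element optimum is $A^{(0)}$ itself; feeding each such witness to the mechanism gives a rule $A^{(0)} \mapsto D(A^{(0)}) \in \cR$ that, via Definition~\ref{def:density}, assembles the desired random variable $X_0$.

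Concretely, for a set $A^{(0)}$ with $|A^{(0)}| = k = 2^{-\ell} m$, define the witness
$$ f_{A^{(0)}}(S) := |S \cap A^{(0)}|. $$
This is additive, monotone, and normalized, and $\max_{|S| \leq k} f_{A^{(0)}}(S) = k$, attained at $S = A^{(0)}$ (since $|A^{(0)}| = k$). Running the mechanism on $f_{A^{(0)}}$ returns some distribution $D(A^{(0)}) \in \cR$; since the mechanism achieves a $c$-approximation in expectation, we have
$$ \E_{R \sim D(A^{(0)})}\bigl[\,|R \cap A^{(0)}|\,\bigr] \;=\; \E_{R \sim D(A^{(0)})}\bigl[\,f_{A^{(0)}}(R)\,\bigr] \;\geq\; c\,k. $$
Dividing by $|A^{(0)}| = k$ yields the conditional bound $\E_{R \sim D(A^{(0)})}\bigl[\,|R \cap A^{(0)}|/|A^{(0)}|\,\bigr] \geq c$ for every such $A^{(0)}$.

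With this assignment $A^{(0)} \mapsto D(A^{(0)})$ in hand, invoke Definition~\ref{def:density}: sample $A^{(0)}$ uniformly among size-$k$ subsets of $M$, then sample $R \sim D(A^{(0)})$, and let $X_0 = |R \cap A^{(0)}|/|A^{(0)}|$. By construction $X_0$ is constructible by $\cR$ at level $0$, and taking the outer expectation over $A^{(0)}$ of the conditional bound above yields $\E[X_0] \geq c$. There is no substantive obstacle; the only thing to check is that $D(A^{(0)}) \in \cR$ for every $A^{(0)}$, which is automatic from the hypothesis that the mechanism returns distributions from $\cR$. The lemma should be read as a base case: the approximation guarantee applied to additive witnesses seeds the inductive gap-amplification argument in the subsequent levels.
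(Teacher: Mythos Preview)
Your proof is correct and essentially identical to the paper's: both use the additive witness $f_{A^{(0)}}(S) = |S \cap A^{(0)}|$ (the paper normalizes by $|A^{(0)}|$, but this is immaterial), observe that the optimum is attained at $A^{(0)}$, and deduce from the $c$-approximation guarantee that the returned distribution satisfies $\E[|R \cap A^{(0)}|/|A^{(0)}|] \geq c$, which after averaging over $A^{(0)}$ gives $\E[X_0] \geq c$.
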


\begin{proof}
Consider the valuation function $f_{A^{(0)}}(S) = \frac{|S \cap A^{(0)}|}{|A^{(0)}|}$. Let $X_0 = \frac{|R^{(0)} \cap A^{(0)}|}{|A^{(0)}|}$ where $R^{(0)}$ is the random set returned by the mechanism, given valuation $f_{A^{(0)}}$ for $A^{(0)}$ chosen randomly among all sets of size $2^{-\ell} m$. By Definition~\ref{def:density}, $X_0$ is a random variable constructible by the range $\cR$ at level $0$. 

Since the optimum under valuation  $f_{A^{(0)}}$ is $1$ (achieved by $A^{(0)}$ itself), the mechanism should return expected value at least $c$. The value returned by the mechanism is exactly the random variable $X_0$, hence $\E[X_0] \geq c$.
\end{proof}

We remark that Lemma~\ref{lem:base-case} is the only place where we use the assumption of $c$-approximation.
In the following, our goal is to use the MIDR property to argue about distributions that must be in the range at higher levels, and prove successive bounds on the random variables $X_1, X_2, \ldots$.

\subsection{The symmetry gap argument}

The main building block of our proof is a symmetry gap argument whose goal is to show the following. If the mechanism optimizes over a certain fixed range $\cR$ which supports distributions of ``high density" at level $j$, then $\cR$ must support distributions of even higher density (when properly scaled) at level $j+1$. However, the way we measure density is quite intricate. Recall the random variables $X_0,X_1,\ldots,X_\ell$ that encode certain distributions in the range at each level. It would be nice to say that for any $X_j$ constructible at level $j$, there must be $X_{j+1}$ constructible at level $j+1$ such that $\E[X_{j+1}] > \frac{1+\delta}{2} \E[X_j]$ (which would correspond to sets of larger cardinality at level $j+1$ than $j$). But this is not true - the distributions of $X_j, X_{j+1}$ also matter and we cannot get a guaranteed boost just in terms of expectation. Instead, we define a measure of density using a test function $\phi$ that we specify later. 
The symmetry gap argument allows us to prove the following.

\begin{lemma}
\label{lem:sym-gap}
Let $\phi:[0,1] \rightarrow [0,1]$ be a non-decreasing concave function. Fix $j \in \{0,1,\ldots,\ell-1\}$ and 
a set $A^{(j+1)}$ of size $2^{j+1-\ell} m \geq m/n$. Let $(A^{(j)}, B^{(j)})$ be a random partition of $A^{(j+1)}$ into two sets of equal size.
Then there is a monotone submodular function $\tilde{f}_{A^{(j)}, B^{(j)}}$ for each partition $(A^{(j)}, B^{(j)})$ such that
\begin{compactitem}
\item For any distribution of a random set $R^{(j)}$ (possibly correlated with $A^{(j)}$) and the associated random variable $X_{j} = \frac{|R^{(j)} \cap A^{(j)}|}{|A^{(j)}|}$, we have
$$ \E[\tilde{f}_{A^{(j)}, B^{(j)}}(R^{(j)})] \geq \E[\phi(X_j - n m^{-1/2})]. $$
\item Any mechanism that uses $poly(n)$ value queries, when applied to the random input $\tilde{f}_{A^{(j)}, B^{(j)}}$ will return a random set $R^{(j+1)}$ such that for the random variable $X_{j+1} = \frac{|R^{(j+1)} \cap A^{(j+1)}|}{|A^{(j+1)}|}$,
$$ \E[\tilde{f}_{A^{(j)}, B^{(j)}}(R^{(j+1)})] \leq \E[1 - (1-\phi(X_{j+1}))^2] + e^{-\Omega(n)}.$$
\end{compactitem}
The expectations are over both $(A^{(j)}, B^{(j)})$ and $R^{(j)}$ or $R^{(j+1)}$ respectively.
\end{lemma}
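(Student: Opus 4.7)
The plan is to build $\tilde f_{A^{(j)},B^{(j)}}$ as a symmetry-gap perturbation of the pure base function
\[
f_{A,B}(S) \;=\; 1 - \left(1 - \phi\!\left(\tfrac{|S\cap A|}{|A|}\right)\right)\left(1 - \phi\!\left(\tfrac{|S\cap B|}{|B|}\right)\right),
\]
where $A = A^{(j)}$ and $B = B^{(j)}$. Each factor $\phi(|S\cap T|/|T|)$ is monotone submodular taking values in $[0,1]$ (a concave composition of a scaled cardinality function), so by Lemma~\ref{lem:submod-product} the function $f_{A,B}$ is monotone submodular. I would then invoke the symmetry-gap blow-up/hiding construction of \cite{V09,MSV08,FMV07} as a black box to produce $\tilde f_{A,B}$ with two properties: (a) $\tilde f_{A,B}$ is monotone submodular and satisfies the pointwise lower bound $\tilde f_{A,B}(S) \geq \phi(|S\cap A|/|A| - nm^{-1/2})$, i.e., the perturbation amounts to an absorbable shift of at most $nm^{-1/2}|A|$ in effective $A$-mass; and (b) with probability $1 - e^{-\Omega(n)}$ over the random bisection, any $\poly(n)$-query algorithm cannot distinguish $\tilde f_{A,B}$ from the fully $A\leftrightarrow B$-symmetrized version $\bar f_{A^{(j+1)}}$ (whose value depends only on $|S\cap A^{(j+1)}|$), so its output $R^{(j+1)}$ is, up to that failure event, a function of $A^{(j+1)}$ alone.

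The lower bound of the lemma then follows at once from (a) together with the elementary pointwise inequality $1-(1-a)(1-b) \geq a$ for $a,b \in [0,1]$: applied to the base function it gives $f_{A,B}(S) \geq \phi(|S\cap A|/|A|)$, and property (a) with the monotonicity of $\phi$ turns this into $\tilde f_{A^{(j)},B^{(j)}}(R^{(j)}) \geq \phi(X_j - nm^{-1/2})$ pointwise, regardless of any correlation between $R^{(j)}$ and the partition; taking expectations yields the claim.

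For the upper bound I would use (b) to reduce to the symmetric case. On the good event, $R^{(j+1)}$ is independent of $(A,B)$ given $A^{(j+1)}$; conditioning on $R^{(j+1)}$ and writing $Y_A = |R^{(j+1)}\cap A^{(j)}|/|A^{(j)}|$ and $Y_B = |R^{(j+1)}\cap B^{(j)}|/|B^{(j)}|$, one has $Y_A + Y_B = 2X_{j+1}$, with $Y_A$ hypergeometric around $X_{j+1}$; its deviations are at most $nm^{-1/2}$ except on a further $e^{-\Omega(n)}$ event (tuned to the same blow-up scale). Combining with concavity of $\phi$ (so that $\phi(Y_A) + \phi(Y_B) \leq 2\phi(X_{j+1})$) and the resulting bound $(1-\phi(Y_A))(1-\phi(Y_B)) \geq (1-\phi(X_{j+1}))^2$ after absorbing the concentration slack into the blow-up, I would conclude $\tilde f_{A^{(j)},B^{(j)}}(R^{(j+1)}) \leq 1 - (1-\phi(X_{j+1}))^2$ on the good event, while the complementary event contributes at most $\tilde f \leq 1$; the result is the additive $e^{-\Omega(n)}$ error promised.

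The main obstacle I expect is the tight coupling of the three parameters---the symmetry-gap failure probability, the hypergeometric deviation window, and the Lipschitz-deviation of $1-(1-\phi(\cdot))^2$---so that they simultaneously produce the $nm^{-1/2}$ slack for the lower bound and the additive $e^{-\Omega(n)}$ error for the upper bound. This is essentially a parameter-tuning exercise around the construction of \cite{V09}: I would set the blow-up factor and perturbation scale both to $\Theta(\sqrt{m})$, so that $\poly(n)$ queries explore only an exponentially small fraction of the blown-up ground set while the effective noise in the $A/B$ fractions stays bounded by $nm^{-1/2}$, and verify that the specific choice dictated by $m = 400^\ell$ and $n = 2^\ell$ makes both error budgets add up as stated.
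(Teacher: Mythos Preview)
Your overall plan---symmetry gap applied to the product-form function $f_{A,B}(S)=1-(1-\phi(|S\cap A|/|A|))(1-\phi(|S\cap B|/|B|))$---is exactly what the paper does. But there is a genuine gap in your upper-bound argument, and it is not just parameter tuning.

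You claim that concavity of $\phi$ yields $(1-\phi(Y_A))(1-\phi(Y_B)) \geq (1-\phi(X_{j+1}))^2$. This is false. Concavity gives $a+b \leq 2c$ with $a=\phi(Y_A)$, $b=\phi(Y_B)$, $c=\phi(X_{j+1})$, but $a+b\leq 2c$ does \emph{not} imply $(1-a)(1-b)\geq (1-c)^2$: take $a=0$, $b=1$, $c=\tfrac12$. In fact when $a+b=2c$ (which happens whenever $\phi$ is linear on the relevant interval, as for the functions $\phi_\alpha(t)=\min\{t/\alpha,1\}$ actually used later), AM--GM gives $(1-a)(1-b)\leq (1-c)^2$, the reverse of what you need. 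Falling back on concentration of $Y_A,Y_B$ around $X_{j+1}$ plus a Lipschitz bound on $\phi$ does not save you either: $\phi$ is only assumed concave, and the $\phi_\alpha$ used downstream have Lipschitz constant $1/\alpha$, which can be arbitrarily large.

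The paper avoids this entirely by \emph{not} treating the perturbation as a black box. It builds $\tilde\psi$ explicitly: inside the strip $|x-y|\leq\beta$ one sets $\tilde\psi(x,y)=\psi(\tfrac{x+y}{2},\tfrac{x+y}{2})$, and outside the strip one shifts $\psi$ by $\beta/2$ toward the diagonal. Then any balanced output gives value \emph{exactly} $1-(1-\phi(X_{j+1}))^2$, with no Lipschitz assumption on $\phi$; and the lower bound $\tilde\psi(X_j,0)\geq\psi(X_j-\beta,0)=\phi(X_j-\beta)$ falls out of the shift. So the explicit diagonal-flattening is the missing idea: it is what simultaneously delivers the exact upper bound on the good event and the clean $\beta=nm^{-1/2}$ slack for the lower bound, with indistinguishability following from a single Chernoff bound for the random bisection.
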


The key point here is that the performance of a mechanism depends only on the fraction of elements taken from $A^{(j+1)}$, and not on the partition $(A^{(j)}, B^{(j)})$. While there might be a ``good distribution" $R^{(j)}$ in the range which is correlated with $A^{(j)}$, the mechanism cannot find such a distribution and must compensate for it by returning larger sets. This will be important later.

The proof of Lemma~\ref{lem:sym-gap} relies on the notion of symmetry gap developed in \cite{FMV07,MSV08,V09}. Since what we need here is a special case where the construction can be carried out explicitly quite easily, we present a self-contained proof here instead of referring to the general framework of \cite{V09}.

\begin{proof}
Consider a pair of sets  $(A^{(j)}, B^{(j)})$.
Given a non-decreasing concave function $\phi:[0,1] \rightarrow [0,1]$ ,
we define the valuation function $f_{A^{(j)},B^{(j)}}$ as follows:
$$ f_{A^{(j)},B^{(j)}}(S) = 1 - \left( 1 - \phi\left(\frac{|S \cap A^{(j)}|}{|A^{(j)}|}\right) \right)
 \left( 1 - \phi\left(\frac{|S \cap B^{(j)}|}{|B^{(j)}|}\right) \right). $$
The function depends only on how many elements we take from $A^{(j)}$ and how many from $B^{(j)}$. Moreover, the two sets play the same role in $f_{A^{(j)},B^{(j)}}$; i.e., all elements in $A^{(j+1)} = A^{(j)} \cup B^{(j)}$ contribute equivalently to $f_{A^{(j)},B^{(j)}}$. This is the kind of situation where we can apply a symmetry gap argument.

Let us simplify the notation and write $$\psi(x,y) = 1 - (1-\phi(x))(1-\phi(y)),$$ where $x,y \in [0,1]$; i.e. $f_{A^{(j)},B^{(j)}}(S) = \psi\left(\frac{|S \cap A^{(j)}|}{|A^{(j)}|}, \frac{|S \cap B^{(j)}|}{|B^{(j)}|}\right)$. It is elementary to verify that since $\phi$ is non-decreasing concave, the first partial derivatives of $\psi$ are non-negative and non-increasing with respect to both coordinates. Now we replace $\psi$ by a modified function $\tilde{\psi}$ which has the property that if $|x-y|$ is very small, the function value depends only on $x+y$. This can be accomplished explicitly as follows: For some $\beta>0$, let

\

\begin{compactitem}
\item $\tilde{\psi}(x,y) = \psi(\frac12(x+y), \frac12(x+y))$ if $|x-y| \leq \beta$.
\item $\tilde{\psi}(x,y) = \psi(x-\frac12 \beta, y+\frac12 \beta)$ if $x-y > \beta$.
\item $\tilde{\psi}(x,y) = \psi(x+\frac12 \beta, y-\frac12 \beta)$ if $y-x > \beta$.
\end{compactitem}

\

\begin{figure}[!ht]
\centering
\begin{tikzpicture}[scale=.65,pre/.style={<-,shorten <=2pt,>=stealth,thick}, post/.style={->,shorten >=1pt,>=stealth,thick}]

\draw (0,0) -- (10,0);
\draw (10,0) -- (10,10);
\draw (0,0) -- (0,10);
\draw (0,10) -- (10,10);

\draw (0,0) -- (10,10);

\draw [gray] (0,1) .. controls +(0.5,-0.5) and +(-0.5,0.5) .. (0.5,0.5);
\draw [gray] (0,2) .. controls +(0.6,-0.5) and +(-0.5,0.5) .. (1.05,1.05);
\draw [gray] (0,3) .. controls +(0.7,-0.5) and +(-0.5,0.5) .. (1.63,1.63);
\draw [gray] (0,4) .. controls +(0.8,-0.5) and +(-0.5,0.5) .. (2.25,2.25);
\draw [gray] (0,5) .. controls +(1.0,-0.5) and +(-0.5,0.5) .. (2.93,2.93);
\draw [gray] (0,6) .. controls +(1.5,-0.5) and +(-0.5,0.5) .. (3.67,3.67);
\draw [gray] (0,7) .. controls +(2.5,-0.5) and +(-0.5,0.5) .. (4.52,4.52);
\draw [gray] (0,8) .. controls +(3.5,-0.5) and +(-0.5,0.5) .. (5.53,5.53);
\draw [gray] (0,9) .. controls +(5,-0.5) and +(-0.5,0.5) .. (6.83,6.83);

\draw [gray] (1,0) .. controls +(-0.5,0.5) and +(0.5,-0.5) .. (0.5,0.5);
\draw [gray] (2,0) .. controls +(-0.5,0.6) and +(0.5,-0.5) .. (1.05,1.05);
\draw [gray] (3,0) .. controls +(-0.5,0.7) and +(0.5,-0.5) .. (1.63,1.63);
\draw [gray] (4,0) .. controls +(-0.5,0.8) and +(0.5,-0.5) .. (2.25,2.25);
\draw [gray] (5,0) .. controls +(-0.5,1.0) and +(0.5,-0.5) .. (2.93,2.93);
\draw [gray] (6,0) .. controls +(-0.5,1.5) and +(0.5,-0.5) .. (3.67,3.67);
\draw [gray] (7,0) .. controls +(-0.5,2.5) and +(0.5,-0.5) .. (4.52,4.52);
\draw [gray] (8,0) .. controls +(-0.5,3.5) and +(0.5,-0.5) .. (5.53,5.53);
\draw [gray] (9,0) .. controls +(-0.5,5) and +(0.5,-0.5) .. (6.83,6.83);

\draw (15,0) -- (25,0);
\draw (25,0) -- (25,10);
\draw (15,0) -- (15,10);
\draw (15,10) -- (25,10);

\draw [gray] (15,1) .. controls +(0.2,-0.2) and +(-0.2,0.2) .. (15.25,0.75);
\draw [gray] (15,2) .. controls +(0.3,-0.3) and +(-0.3,0.3) .. (15.80,1.30);
\draw [gray] (15,3) .. controls +(0.7,-0.5) and +(-0.5,0.5) .. (16.38,1.88);
\draw [gray] (15,4) .. controls +(0.8,-0.5) and +(-0.5,0.5) .. (17.00,2.50);
\draw [gray] (15,5) .. controls +(1.0,-0.5) and +(-0.5,0.5) .. (17.68,3.18);
\draw [gray] (15,6.05) .. controls +(1.5,-0.5) and +(-0.5,0.5) .. (18.43,3.92);
\draw [gray] (15,7.1) .. controls +(2.5,-0.5) and +(-0.5,0.5) .. (19.27,4.77);
\draw [gray] (15,8.15) .. controls +(3.5,-0.5) and +(-0.5,0.5) .. (20.28,5.78);
\draw [gray] (15,9.2) .. controls +(5,-0.5) and +(-0.5,0.5) .. (21.58,7.08);

\draw [gray] (16,0) .. controls +(-0.2,0.2) and +(0.2,-0.2) .. (15.75,0.25);
\draw [gray] (17,0) .. controls +(-0.3,0.3) and +(0.3,-0.3) .. (16.30,0.80);
\draw [gray] (18,0) .. controls +(-0.5,0.7) and +(0.5,-0.5) .. (16.88,1.38);
\draw [gray] (19,0) .. controls +(-0.5,0.8) and +(0.5,-0.5) .. (17.50,2.00);
\draw [gray] (20,0) .. controls +(-0.5,1.0) and +(0.5,-0.5) .. (18.18,2.68);
\draw [gray] (21.05,0) .. controls +(-0.5,1.5) and +(0.5,-0.5) .. (18.93,3.42);
\draw [gray] (22.1,0) .. controls +(-0.5,2.5) and +(0.5,-0.5) .. (19.77,4.27);
\draw [gray] (23.15,0) .. controls +(-0.5,3.5) and +(0.5,-0.5) .. (20.78,5.28);
\draw [gray] (24.2,0) .. controls +(-0.5,5) and +(0.5,-0.5) .. (22.08,6.58);

\draw [gray] (15.25,0.75) -- (15.75,0.25);
\draw [gray] (15.80,1.30) -- (16.30,0.80);
\draw [gray] (16.38,1.88) -- (16.88,1.38);
\draw [gray] (17.00,2.50) -- (17.50,2.00);
\draw [gray] (17.68,3.18) -- (18.18,2.68);
\draw [gray] (18.43,3.92) -- (18.93,3.42);
\draw [gray] (19.27,4.77) -- (19.77,4.27);
\draw [gray] (20.28,5.78) -- (20.78,5.28);
\draw [gray] (21.58,7.08) -- (22.08,6.58);

\draw (15.5,0) -- (25,9.5);
\draw (15,0.5) -- (24.5,10);

\draw (15.5,0) -- (25,9.5);
\draw (15,0.5) -- (24.5,10);

\draw (6.5,9) node {$\psi(x,y)$};
\draw (21.5,9) node {$\tilde{\psi}(x,y)$};

\draw (-0.2,-0.5) node {$0$};
\draw (10.2,-0.5) node {$1$};
\draw (-0.2,10.5) node {$1$};

\draw (14.8,-0.5) node {$0$};
\draw (25.2,-0.5) node {$1$};
\draw (14.8,10.5) node {$1$};

\end{tikzpicture}
\caption{Construction of $\tilde{\psi}(x,y)$ from $\psi(x,y)$, assuming that $\psi(x,y) = 1-(1-x)(1-y)$. The solid lines denote the diagonal $x=y$ and the shifted diagonals $x-y = \pm \beta$. The gray lines are the level sets of $\psi(x,y)$ and $\tilde{\psi}(x,y)$.}
\label{fig:sym-gap}
\end{figure}
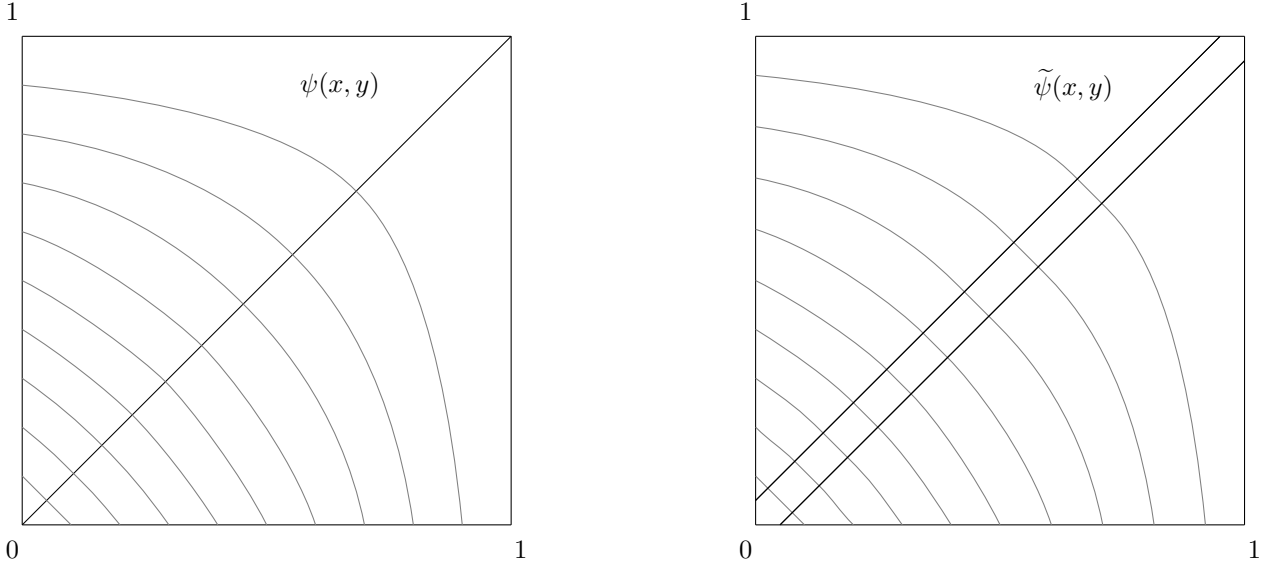

Geometrically, this construction can be seen as taking the graph of $\psi(x,y)$, pulling it away from the diagonal $x=y$ on both sides, and patching the area close to the diagonal with a function which depends only on $x+y$ and is equal to the function on the diagonal. Using the properties of $\psi$, one can check that again the first partial derivatives of $\tilde{\psi}$ are non-negative and non-increasing with respect to both coordinates. We define the function promised by the lemma as
$$\tilde{f}_{A^{(j)},B^{(j)}}(S) = \tilde{\psi}\left(\frac{|S \cap A^{(j)}|}{|A^{(j)}|},\frac{|S \cap B^{(j)}|}{|B^{(j)}|} \right).$$
The properties of $\tilde{\psi}$ imply that $\tilde{f}_{A^{(j)},B^{(j)}}$ is a monotone submodular function
(see e.g.~\cite{MSV08,V09}).

We observe the following (which is the case in all proofs using the symmetry gap). For a ``typical query" $S$, oblivious to the random partition $(A^{(j)}, B^{(j)})$, with high probability $S$ will contain approximately the same number of elements from these two sets. (Recall that $|A^{(j)}| = |B^{(j)}|$.) We call a query $S$ balanced if the parameters $x = \frac{|S \cap A^{(j)}|}{|A^{(j)}|}$ and $y = \frac{|S \cap B^{(j)}|}{|B^{(j)}|}$ are in the range where $|x-y| \leq \beta$, and hence $\tilde{f}_{A^{(j)},B^{(j)}}(S) = \tilde{\psi}(x,y) = \psi(\frac12 (x+y), \frac12(x+y))$  is independent of the particular partition $(A^{(j)}, B^{(j)})$. By Lemma~\ref{lem:bisect-chernoff} (applied to the ground set $A^{(j+1)}$), the probability that any fixed query $S$ is unbalanced is exponentially small:
 $$ \Pr[|x-y| > \beta] = \Pr\left[||S \cap A^{(j)}| - |S \cap B^{(j)}|| > \beta |A^{(j)}|\right] \leq e^{-\Omega(\beta^2 |A^{(j+1)}|)}.$$
Recall that $|A^{(j+1)}| = 2^{j+1-\ell} m \geq m/n$. Therefore, if we pick $\beta = n m^{-1/2}$, the probability is $e^{-\Omega(n)}$. 
Let us fix for now the random coin flips of the mechanism. As long as all query answers are independent of the partition $(A^{(j)}, B^{(j)})$, the mechanism will follow the same computation path, independent of $(A^{(j)}, B^{(j)})$, and we can use a union bound over its $poly(n)$ queries. Hence, the probability that a mechanism ever makes a query such that $|x-y| > \beta$ is $poly(n) e^{-\Omega(n)} = e^{-\Omega(n)}$. 
This is still true if we average over the random coin flips of the algorithm.
Therefore, the output of the mechanism will be independent of $(A^{(j)}, B^{(j)})$ with probability $1-e^{-\Omega(n)}$.

To summarize, the output of the mechanism, $R^{(j+1)}$, is with high probability independent of $(A^{(j)}, B^{(j)})$ and again by Lemma~\ref{lem:bisect-chernoff} with high probability balanced with respect to $(A^{(j)}, B^{(j)})$. Given the definition of the random variable $X_{j+1} = \frac{|R^{(j+1)} \cap A^{(j+1)}|}{|A^{(j+1)}|}$, this means the output random set contains an $X_{j+1}$-fraction of the set $A^{(j+1)}$, approximately balanced between its two halves. For some $|\beta'| \leq \frac12 \beta$, the value of such a set is
$$ \tilde{\psi}(X_{j+1}+\beta',X_{j+1}- \beta') = \psi(X_{j+1},X_{j+1}) = 1 - (1-\phi(X_{j+1}))^2.$$
Thus the expected value of this solution is  $\E[f_{A^{(j)}, B^{(j)}}(R^{(j+1)})] \leq \E[1 - (1-\phi(X_{j+1}))^2] + e^{-\Omega(n)}$ (where $e^{-\Omega(n)}$ accounts for the small probability of finding an unbalanced solution, whose value could be up to $1$). This proves the second statement of the lemma.

Finally, consider any random set $R^{(j)}$ and the associated random variable $X_{j} = \frac{|R^{(j)} \cap A^{(j)}|}{|A^{(j)}|}$. We have 
$$ \tilde{f}_{A^{(j)},B^{(j)}}(R^{(j)})) \geq  \tilde{f}_{A^{(j)},B^{(j)}}(R^{(j)} \cap A^{(j)}) = \tilde{\psi}(X_j,0)
 \geq \psi\left(X_j- \beta, 0 \right) = \phi\left(X_j - \beta \right).$$
Therefore $\E[\tilde{f}_{A^{(j)},B^{(j)}}(R^{(j)}))] \geq \E[\phi(X_j - \beta)]$. Recall that $\beta = n m^{-1/2}$, so this proves the first statement of the lemma.
\end{proof}

Considering the setup of random variables $X_0,X_1,\ldots,X_\ell$ constructible by $\cR$ at different levels
(Section~\ref{sec:basics}), we obtain the following.

\begin{lemma}
\label{lem:level-gap}
Consider a mechanism of polynomial query-complexity that $(1-\epsilon)$-approximately maximizes over a range of distributions $\cR$ for the problem $\max \{f(S): |S| \leq k\}$ for $f$ monotone submodular, ground set of size $m=400^\ell$ and $k = 2^{-\ell} m$. Let $\phi:[0,1] \rightarrow [0,1]$ be a non-decreasing concave function. If a random variable $X_j$ is constructible by $\cR$ at level $j$, then there is a random variable $X_{j+1}$ constructible by $\cR$ at level $j+1$ such that
$$\E[1 - (1-\phi(X_{j+1}))^2] \geq (1-\epsilon) \E[\phi(X_j - 10^{-\ell})] - 10^{-\ell}.$$
\end{lemma}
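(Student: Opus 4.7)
The plan is to derive the bound as a direct consequence of Lemma~\ref{lem:sym-gap}, combined with the $(1-\epsilon)$-approximate MIDR property of the mechanism, after a suitable choice of parameters.

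Concretely, I would fix an arbitrary set $A^{(j+1)}$ of size $2^{j+1-\ell}m$ and let $(A^{(j)}, B^{(j)})$ be a uniformly random partition of $A^{(j+1)}$ into two halves. When $A^{(j+1)}$ is itself chosen uniformly among subsets of $M$ of size $2^{j+1-\ell}m$, the marginal distribution of $A^{(j)}$ is uniform among subsets of $M$ of size $2^{j-\ell}m$, which is precisely the setting in which $X_j$ is constructible at level $j$. Hence, by Definition~\ref{def:density}, there is a distribution $D(A^{(j)}) \in \cR$ whose sample $R^{(j)}$ realizes $X_j = |R^{(j)} \cap A^{(j)}|/|A^{(j)}|$. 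Feed the monotone submodular function $\tilde{f}_{A^{(j)},B^{(j)}}$ from Lemma~\ref{lem:sym-gap} into the mechanism. Since the mechanism $(1-\epsilon)$-approximately maximizes expected value over $\cR$ and $D(A^{(j)}) \in \cR$ is one feasible candidate, the mechanism's output distribution $D^{\ast} \in \cR$ satisfies
\[
  \E[\tilde{f}_{A^{(j)},B^{(j)}}(R^{(j+1)})] \;\geq\; (1-\epsilon)\, \E[\tilde{f}_{A^{(j)},B^{(j)}}(R^{(j)})],
\]
where $R^{(j+1)} \sim D^{\ast}$ and expectations are taken over all randomness. Applying the two bullets of Lemma~\ref{lem:sym-gap} (the first to lower-bound the right-hand side, the second to upper-bound the left-hand side) with $X_{j+1} := |R^{(j+1)} \cap A^{(j+1)}|/|A^{(j+1)}|$ chains to
\[
  \E[1-(1-\phi(X_{j+1}))^2] \;\geq\; (1-\epsilon)\,\E[\phi(X_j - n m^{-1/2})] \;-\; e^{-\Omega(n)}.
\]
Substituting $m = 400^\ell$ and $n = 2^\ell$ gives $n m^{-1/2} = (2/20)^\ell = 10^{-\ell}$, and for $\ell$ large enough one has $e^{-\Omega(2^\ell)} \leq 10^{-\ell}$ (the finitely many small values of $\ell$ can be absorbed into the constants, or verified directly), recovering the stated bound.

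The only subtlety I anticipate is verifying that $X_{j+1}$ is \emph{constructible} by $\cR$ in the sense of Definition~\ref{def:density}: that definition demands a single distribution in $\cR$ per set $A^{(j+1)}$, whereas my construction produces, for each $A^{(j+1)}$, a \emph{mixture} over the random partition $(A^{(j)}, B^{(j)})$ of the mechanism outputs $D^{\ast}(A^{(j+1)}, A^{(j)}, B^{(j)}) \in \cR$. I would resolve this by passing to the convex closure $\overline{\cR}$ of $\cR$: because the MIDR objective is linear, the optimum value over $\overline{\cR}$ coincides with that over $\cR$ and the mechanism remains $(1-\epsilon)$-approximately maximal over $\overline{\cR}$, while any $X_j$ constructible by $\cR$ is trivially constructible by $\overline{\cR}$. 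The mixture over partitions then lives in $\overline{\cR}$ by definition, so $X_{j+1}$ is constructible by $\overline{\cR}$, and the downstream arguments (which only use constructibility) go through unchanged. Modulo this bookkeeping, all analytic content is carried by Lemma~\ref{lem:sym-gap}, and there is no further technical obstacle.
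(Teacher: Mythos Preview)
Your proposal is correct and follows essentially the same approach as the paper: invoke Lemma~\ref{lem:sym-gap}, chain the two bullets via the $(1-\epsilon)$-approximate MIDR property, and substitute $n m^{-1/2}=10^{-\ell}$ and $e^{-\Omega(2^\ell)}\leq 10^{-\ell}$. The paper's proof is identical in structure; it simply asserts ``by definition $X_{j+1}$ is constructible by $\cR$ at level $j+1$'' without discussing the subtlety you raise, namely that the mechanism's output distribution $D^*(A^{(j)},B^{(j)})\in\cR$ depends on the partition and not only on $A^{(j+1)}$. Your observation is correct that Definition~\ref{def:density} literally asks for a single $D(A^{(j+1)})\in\cR$, and your fix via the convex closure $\overline{\cR}$ is a clean and standard way to handle it (an equivalent fix is to relax constructibility to allow $D$ to depend on auxiliary randomness with each realization in $\cR$); either way the downstream linear-in-distribution arguments are unaffected.
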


\begin{proof}
Given $\phi$, let $\tilde{f}_{A^{(j)}, B^{(j)}}$ be the valuation function provided by Lemma~\ref{lem:sym-gap}. Consider $A^{(j+1)}$ uniformly random among sets of size $2^{j+1-\ell} m$, bisected randomly into $A^{(j)} \cup B^{(j)}$. If $X_j$ is constructible by the range $\cR$ at level $j$, it means that for each $A^{(j)}$ there is a distribution $D(A^{(j)})$ in $\cR$ such that $X_j = \frac{|R^{(j)} \cap A^{(j)}|}{|A^{(j)}|}$ where $A^{(j)}$ is random and $X^{(j)}$ is sampled from $D(A^{(j)})$. 
By Lemma~\ref{lem:sym-gap}, conditioned on any $A^{(j+1)}$ and taking expectation over the random partition $(A^{(j)}, B^{(j)})$,
$ \E[\tilde{f}_{A^{(j)}, B^{(j)}}(R^{(j)}) \mid A^{(j+1)}] \geq \E[\phi(X_j  - n m^{-1/2}) \mid A^{(j+1)}]. $
Therefore the same holds also without the conditioning. Recall that we have $n m^{-1/2} = 2^\ell 400^{-\ell/2} = 10^{-\ell}$. So we get
$$ \E[\tilde{f}_{A^{(j)}, B^{(j)}}(R^{(j)})] \geq \E[\phi(X_j  - n m^{-1/2})] =  \E[\phi(X_j  - 10^{-\ell})].$$

Now let us run the mechanism on the same random instance and denote the output random set by $R^{(j+1)}$.
By Lemma~\ref{lem:sym-gap}, 
$ \E[\tilde{f}_{A^{(j)}, B^{(j)}}(R^{(j+1)}) \mid A^{(j+1)}]
 \leq \E[1 - (1-\phi(X_{j+1}))^2 \mid A^{(j+1)}] + e^{-\Omega(n)}$, where $X_{j+1} = \frac{|R^{(j+1)} \cap A^{(j+1)}|}{|A^{(j)}|}$.
Hence this holds also without the conditioning: 
$$ \E[\tilde{f}_{A^{(j)}, B^{(j)}}(R^{(j+1)})] \leq \E[1 - (1-\phi(X_{j+1}))^2] + e^{-\Omega(n)}
 \leq \E[1 - (1-\phi(X_{j+1}))^2] + 10^{-\ell} $$
 and by definition $X_{j+1}$ is constructible by $\cR$ at level $j+1$.

To conclude, if the mechanism maximizes $(1-\epsilon)$-approximately over $\cR$, then the expected value of $R^{(j+1)}$ conditioned on $(A^{(j)}, B^{(j)})$ must be at least $(1-\epsilon) \times$ that provided by $R^{(j)}$. Therefore, the same holds in expectation over $(A^{(j)}, B^{(j)})$, which means $\E[\tilde{f}_{A^{(j)}, B^{(j)}}(R^{(j+1)})] \geq (1-\epsilon) \E[\tilde{f}_{A^{(j)}, B^{(j)}}(R^{(j)})]$ and the lemma follows.
\end{proof}

\subsection{The gap amplification argument}

In this section, we develop an inductive argument based on Lemma~\ref{lem:base-case} and Lemma~\ref{lem:level-gap}, which proves that a certain notion of density of the distributions at level $j$ increases exponentially in $j$. 
By Lemma~\ref{lem:level-gap}, for any $X_j$ constructible at level $j$ there is $X_{j+1}$ constructible at level $j+1$ such that
$$ \E[1 - (1-\phi(X_{j+1}))^2] \geq (1-\epsilon) \E[\phi(X_j - 10^{-\ell})] - 10^{-\ell}.$$
We 
want to prove that $X_{j+1}$ is in some sense ``significantly larger" than $\frac12 X_j$.
Our main technical lemma formalizing this intuition is the following.

\begin{lemma}
\label{lem:level-bound}
There are absolute constants $\epsilon,\delta > 0$ such that the following holds for any sufficiently large $\ell \in \NN$. If $\cX_0,\ldots,\cX_\ell$ are collections of random variables in $[0,1]$ such that
\begin{itemize}
\item there is $X_0$ in $\cX_0$ such that $\E[X_0] \geq c$ for some $c \geq 2^{-\ell}$, and
\item for every $X_j$ in $\cX_j$ and every non-decreasing concave function $\phi:[0,1] \rightarrow [0,1]$,
there is $X_{j+1}$ in $\cX_{j+1}$ such that
$$ \E[1 - (1-\phi(X_{j+1}))^2] \geq (1-\epsilon) \E[\phi(X_j - 10^{-\ell})] - 10^{-\ell}$$
\end{itemize}
then there is a sequence of variables $X_j$ in $\cX_j$ and parameters $1 = \alpha_0 \geq \alpha_1 \geq \ldots \alpha_\ell > 0$
such that if we define
$\phi_\alpha(t) = \min \left\{ \frac{t}{\alpha}, 1 \right\} $ then
$$ \alpha_j (\E[\phi_{\alpha_j}(X_j)])^{1+\delta} \geq \left( \frac{1+\delta^2}{2} \right)^j c^{1+\delta}.$$
\end{lemma}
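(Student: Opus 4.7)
The plan is induction on $j$ with the invariant: there exist $X_j \in \cX_j$ and $\alpha_j \in (0,1]$ satisfying $\alpha_j(\E[\phi_{\alpha_j}(X_j)])^{1+\delta} \geq ((1+\delta^2)/2)^j c^{1+\delta}$. The base case $j=0$ takes $\alpha_0 = 1$ together with the given $X_0$: since $\phi_1(t)=t$, the invariant becomes $(\E[X_0])^{1+\delta} \geq c^{1+\delta}$, which holds by hypothesis.

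For the inductive step, let $p_j := \E[\phi_{\alpha_j}(X_j)]$ and apply the second hypothesis with the nondecreasing concave function $\phi = \phi_{\alpha_j}$. This yields $X_{j+1} \in \cX_{j+1}$ with
\[
\E\bigl[2\phi_{\alpha_j}(X_{j+1}) - \phi_{\alpha_j}(X_{j+1})^2\bigr] \;\geq\; (1-\epsilon)\,p_j - O\bigl(10^{-\ell}/\alpha_j + 10^{-\ell}\bigr),
\]
where the error absorbs the shift $X_j - 10^{-\ell}$ via the $(1/\alpha_j)$-Lipschitz property of $\phi_{\alpha_j}$. Writing $Y := \phi_{\alpha_j}(X_{j+1}) \in [0,1]$ and setting $\alpha_{j+1} = r\alpha_j$ for some $r \in (0,1]$ to be chosen, a case check on $X_{j+1}$ versus $\alpha_j$ and $r\alpha_j$ verifies $\phi_{\alpha_{j+1}}(X_{j+1}) = \min(Y/r, 1)$, so closing the induction reduces to finding $r$ with $r\,\E[\min(Y/r,1)]^{1+\delta} \geq \tfrac{1+\delta^2}{2}\,p_j^{1+\delta}$. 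This in turn is implied (with $s = (1-\epsilon)p_j$ modulo the error) by the analytic claim: for $Y \in [0,1]$ with $\E[Y(2-Y)] \geq s$, there exists $r \in (0, 1]$ such that $r\,\E[\min(Y/r,1)]^{1+\delta} \geq C_\delta\, s^{1+\delta}$, where $C_\delta$ is a universal constant satisfying $C_\delta (1-\epsilon)^{1+\delta} \geq (1+\delta^2)/2$.

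The heart of the proof is the analytic claim. I would show that the adversarial worst case is a two-point distribution $Y \in \{0, v\}$ with $\Pr[Y=v] = q$ and tight constraint $qv(2-v) = s$: explicit maximization over $r$ gives $\max_r r\,\E[\min(Y/r,1)]^{1+\delta} = vq^{1+\delta} = s^{1+\delta}/(v^\delta (2-v)^{1+\delta})$, and adversarial minimization over feasible $v$ reaches the interior critical point $v^\star = 2\delta/(1+2\delta)$ (when $s \leq 4\delta(1+\delta)/(1+2\delta)^2$) or the boundary $v = 1-\sqrt{1-s}$ with $q = 1$ (a constant-$Y$ adversary, in the complementary regime); in both regimes direct computation verifies $C_\delta(1-\epsilon)^{1+\delta} \geq (1+\delta^2)/2$ for, say, $\delta = 1/5$ and $\epsilon$ sufficiently small. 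The main obstacle is rigorously reducing general distributions to two-point: although $Y \mapsto \max_r M(r;Y)$ is convex in the distribution of $Y$ (as a maximum of convex $(1+\delta)$-power functionals of the survival function $\bar{F}_Y$), the adversary minimizes this convex function, so the usual extreme-point argument does not immediately apply. One instead argues via a smoothing/majorization step that replacing any $Y$ by a two-point $Y'$ with the same $\E[Y(2-Y)]$ can only decrease $\max_r M(r;Y)$, by checking that appropriate two-point mixtures pointwise dominate the original in the relevant order on survival functions.

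Finally, error bookkeeping: since the worst-case $r$ is $v^\star \approx 2\delta$, $\alpha_\ell \geq (2\delta/(1+2\delta))^\ell$, so $10^{-\ell}/\alpha_\ell \leq ((1+2\delta)/(20\delta))^\ell \ll 2^{-\ell} \leq c$ provided $\delta \geq 1/5$. Accumulated errors across all $\ell$ inductive steps contribute at most a constant-factor loss, absorbed into the slack between $C_\delta(1-\epsilon)^{1+\delta}$ and $(1+\delta^2)/2$, closing the induction.
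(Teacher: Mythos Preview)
Your inductive framework and the reduction to the ``analytic claim'' match the paper exactly; the divergence is in how the analytic claim is established.

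The paper does not try to identify the worst-case distribution of $Y = \phi_{\alpha_j}(X_{j+1})$ or to optimize over $r$. Instead it makes a single threshold test on the tail: if $\Pr[Y > \sqrt{\delta}] > 2\delta\,\xi_j$ it sets $\alpha_{j+1} = \tfrac{1+\delta}{2}\alpha_j$, otherwise $\alpha_{j+1} = \sqrt{\delta}\,\alpha_j$. In Case~1 one compares $\min\{2Y,1+\delta\}$ pointwise to $1-(1-Y)^2$ and picks up an additive $\delta$ on the event $\{Y>\sqrt{\delta}\}$; in Case~2 one uses $\E[\min\{Y,\sqrt{\delta}\}] \geq \E[\min\{Y,1\}] - (1-\sqrt{\delta})\Pr[Y>\sqrt{\delta}]$ together with $\E[Y] \geq \tfrac12 \E[1-(1-Y)^2]$. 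Each case is a few lines of elementary inequalities, with constants $\delta = e^{-10}$, $\epsilon = \delta^4$. Error terms are handled not by a lower bound on $\alpha_j$ alone but by using the inductive hypothesis itself to get $\alpha_j\xi_j \geq 2^{-3\ell}$, so that $10^{-\ell}/\alpha_j \ll \xi_j$.

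Your route via two-point reduction is the genuine gap. You correctly note that $\mu \mapsto \max_r M(r;\mu)$ is convex and that the adversary \emph{minimizes} it, so extreme-point arguments do not apply; but the proposed ``smoothing/majorization step'' is never specified, and it is unclear what order on survival functions would simultaneously preserve $\E[Y(2-Y)]$ and decrease $\max_r M(r;\cdot)$. A minimax swap does not rescue this either: $r \mapsto M(r;\mu)$ is not quasi-concave in general (e.g.\ for $Y\in\{0.01,1\}$ with weights $(0.9,0.1)$ and $\delta=1/5$ the map has a local maximum at $r=0.01$ and a larger one at $r=1$), so Sion's theorem is unavailable. The analytic claim you state is likely true --- indeed the paper's two-case argument proves exactly the inequality you need --- but your outlined proof of it does not go through as written.
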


The use of $1+\delta$ in the exponent is crucial here; note that it makes the statement stronger, but this is what makes the inductive proof work.
The intuitive meaning of this lemma is as follows: there exist random variables $X_j$ constructible at different levels that, when measured by suitable test functions, decrease roughly as $\left(\frac{1+\delta^2}{2}\right)^j$, rather than $\frac{1}{2^j}$. In terms of the cardinality of the returned sets, this means they increase by a factor of $(1+\delta^2)$ at each level. This gives the exponential amplification that we need.

\begin{proof}
The base case $j=0$ holds trivially with $\alpha_0 = 1$ and $\phi_{\alpha_0}(t) = t$.
To prove the inductive step, suppose that there is $\alpha_j \in [0,1]$ that satisfies the statement of the lemma for $X_j$. Let us define $\xi_j = \E[\phi_{\alpha_j}(X_j)]$; then the inductive statement reads
\begin{equation}
\label{eq:hypo}
\alpha_j \xi_j^{1+\delta} \geq \left( \frac{1+\delta^2}{2} \right)^j c^{1+\delta}.
\end{equation}
Our goal is to prove that $\alpha_{j+1} \xi_{j+1}^{1+\delta} \geq \frac{1+\delta^2}{2} \alpha_j \xi_j^{1+\delta}$, which implies the inductive statement for $j+1$.

By assumption, for the non-decreasing concave function $\phi_{\alpha_j}$, we get
$$ \E[1 - (1-\phi_{\alpha_j}(X_{j+1}))^2] \geq (1-\epsilon) \E[\phi_{\alpha_j}(X_j - 10^{-\ell})] - 10^{-\ell}.$$
First, we simplify the error terms on the right-hand side. Let us keep in mind that $\epsilon,\delta>0$ are (small) absolute constants which will be suitably chosen at the end of the proof. 
Recall that $\phi_{\alpha_j}(t) = \min \{ \frac{t}{\alpha_j},1\}$. Therefore,
 $(1-\epsilon) \E[\phi_{\alpha_j}(X_j - 10^{-\ell})] \geq (1-\epsilon) \E[\phi_{\alpha_j}(X_j)] - \frac{1}{\alpha_j 10^\ell}
  = (1-\epsilon) \xi_j - \frac{1}{\alpha_j 10^\ell}.$
Recall the inductive hypothesis (\ref{eq:hypo}).
Since $\alpha_j, \xi_j \in [0,1]$, and $c \geq 2^{-\ell}$, this means in particular that $\alpha_j \xi_j \geq  2^{-j} c^{1+\delta} \geq 2^{-3 \ell}$. Also, $\xi_j \geq 2^{-j} c \geq 2^{-2 \ell}$. Hence, we can estimate 
\begin{equation}
\label{eq:quad}
\E[1 - (1-\phi_{\alpha_j}(X_{j+1}))^2] \geq (1-\epsilon) \xi_j - \frac{1}{\alpha_j 10^\ell} - \frac{1}{10^\ell} 
\geq \left(1-\epsilon - \frac{2^{3\ell}}{10^\ell} - \frac{2^{2\ell}}{10^\ell} \right) \xi_j
 \geq (1-2\epsilon) \xi_j
\end{equation}
for $\ell$ sufficiently large.
 
Now we come to the meat of the inductive argument. Instead of the expression $\E[1 - (1-\phi_{\alpha_j}(X_{j+1}))^2]$, we would like to estimate $\xi_{j+1} = \E[\phi_{\alpha_{j+1}}(X_{j+1})]$ for a suitable value of $\alpha_{j+1}$. The reason why the values of $\alpha_j$ are not specified by the lemma is that their choice depends on the particular distributions of $X_j$ over which we have no control. For example, $\alpha_{j+1} = \frac12 \alpha_j$ is a natural choice which works for some distributions of $X_{j+1}$ but not always. In the following, we split the analysis into 2 cases.
 
\paragraph{Case 1:} $\Pr[\frac{X_{j+1}}{\alpha_j} > \sqrt{\delta}] > 2 \delta \xi_j$. \\
In this case, $X_{j+1}$ is with non-negligible probability quite large, in the region where $1 - (1-X_{j+1}/\alpha_j)^2$ is significantly smaller than $2 X_{j+1} / \alpha_j$ . In this case, we can gain by making $\alpha_{j+1}$ slightly larger than $\frac12 \alpha_j$, specifically $\alpha_{j+1} = \frac12 (1+\delta) \alpha_j$.
We obtain:
\begin{eqnarray*}
\xi_{j+1} = \E[\phi_{\alpha_{j+1}}(X_{j+1})] = \E\left[ \min \left\{ \frac{X_{j+1}}{\alpha_{j+1}}, 1 \right\}\right]
 = \E\left[ \min \left\{ \frac{2 X_{j+1}}{(1+\delta) \alpha_j}, 1 \right\} \right] 
  = \frac{1}{1+\delta} \E\left[ \min \left\{ \frac{2 X_{j+1}}{\alpha_j}, 1+\delta \right\}\right].
\end{eqnarray*}
Observe the following: $\min \{ \frac{2 X_{j+1}}{\alpha_j}, 1+\delta \} \geq \min \{ \frac{2 X_{j+1}}{\alpha_j}, 1 \}
 \geq 1 - (1-\phi_{\alpha_j}(X_{j+1}))^2$ for all $X_{j+1} \geq 0$. Moreover, 
if $X_{j+1} > \sqrt{\delta} \alpha_j$, we gain an additional $\delta$, because then
$$ \min \left\{ \frac{2 X_{j+1}}{\alpha_j}, 1+\delta \right\}
 \geq 1 - \left(1-\min \left\{ \frac{X_{j+1}}{\alpha_j}, 1 \right\} \right)^2 + \delta
 = 1 - (1-\phi_{\alpha_j}(X_{j+1}))^2 + \delta $$
(with equality for $X_{j+1} = \sqrt{\delta} \alpha_j$ and $X_{j+1} \geq \alpha_j$; the best way to verify this is to ponder the graph
in Figure~\ref{fig:phi-graph}).
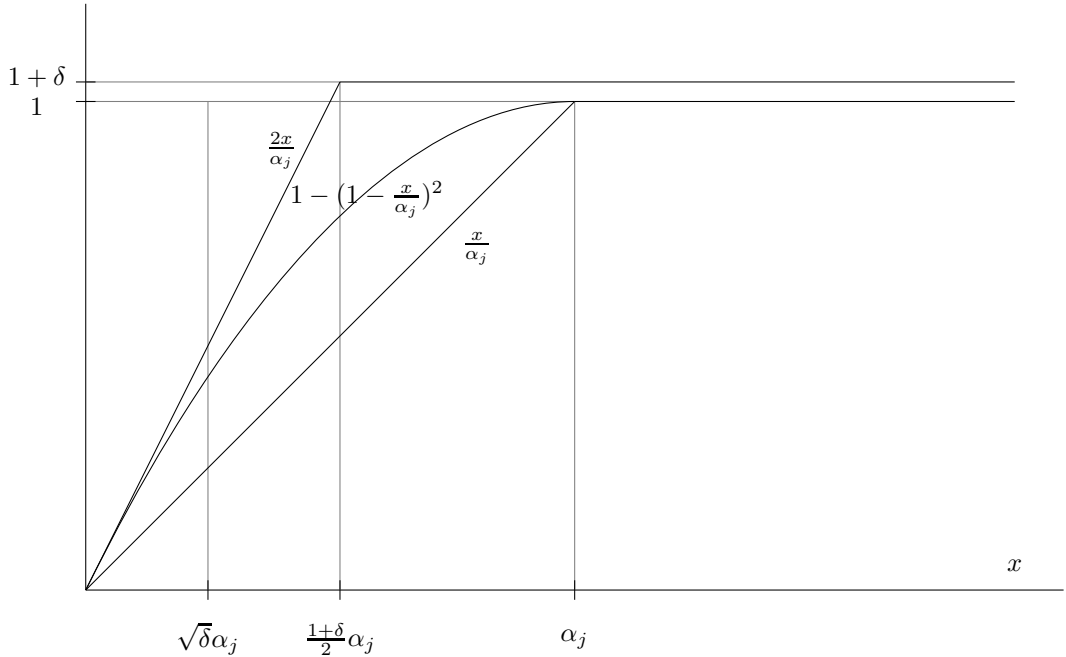
\begin{figure}[!ht]
\centering
\begin{tikzpicture}[scale=.65,pre/.style={<-,shorten <=2pt,>=stealth,thick}, post/.style={->,shorten >=1pt,>=stealth,thick}]

\draw (0,0) -- (20,0);
\draw (0,0) -- (0,12);

\draw [gray] (0,10) -- (10,10);
\draw [gray] (0,10.4) -- (5.2,10.4);
\draw (-0.2,10) -- (0.2,10);
\draw (-0.2,10.4) -- (0.2,10.4);
\draw (-1,9.9) node {$1$};
\draw (-1,10.5) node {$1+\delta$};

\draw [gray] (2.5,0) -- (2.5,10);
\draw [gray] (5.2,0) -- (5.2,10.4);
\draw [gray] (10,0) -- (10,10);
\draw (2.5,-0.2) -- (2.5,0.2);
\draw (5.2,-0.2) -- (5.2,0.2);
\draw (10,-0.2) -- (10,0.2);
\draw (2.5,-1) node {$\sqrt{\delta} \alpha_j$};
\draw (5.2,-1) node {$\frac{1+\delta}{2} \alpha_j$};
\draw (10,-1) node {$\alpha_j$};

\draw (0,0) -- (10,10);
\draw (10,10) -- (19,10);
\draw (8,7) node {$\frac{x}{\alpha_j}$};

\draw (0,0) .. controls (1,2) and (5,10) .. (10,10);
\draw (5.75,8) node {$1 - (1-\frac{x}{\alpha_j})^2$};

\draw (0,0) -- (5.2,10.4);
\draw (5.2,10.4) -- (19,10.4);
\draw (4,9) node {$\frac{2 x}{\alpha_j}$};

\draw (19,0.5) node {$x$};

\end{tikzpicture}
\caption{Comparison of the 3 relevant functions for Case 1: Note that for $x \geq \sqrt{\delta} \alpha_j$, the top two functions differ by at least $\delta$; i.e, $\min \{\frac{2x}{\alpha_j}, 1+\delta\} \geq 1 - (1 - \min \{\frac{x}{\alpha_j}, 1\})^2 + \delta$.}
\label{fig:phi-graph}
\end{figure}
Therefore,
\begin{eqnarray*}
\xi_{j+1} = \frac{1}{1+\delta} \E\left[ \min \left\{ \frac{2 X_{j+1}}{\alpha_j}, 1+\delta \right\} \right]
 \geq \frac{1}{1+\delta}\E[1 - (1-\phi_{\alpha_j}(X_{j+1}))^2] + \frac{\delta}{1+\delta} \Pr[X_{j+1} > \sqrt{\delta} \alpha_j].
\end{eqnarray*}
Using (\ref{eq:quad}) and $\Pr[X_{j+1} > \sqrt{\delta} \alpha_j] > 2 \delta \xi_j$, we get
$$ \xi_{j+1} \geq 
 \frac{1-2\epsilon}{1+\delta} \xi_j + \frac{2 \delta^2}{1+\delta} \xi_j =
\frac{1+2\delta^2-2\epsilon}{1+\delta} \xi_j. $$
Since $\alpha_{j+1} = \frac{1+\delta}{2} \alpha_j$, we get
$$ \alpha_{j+1} \xi_{j+1}^{1+\delta}
\geq \frac{1+\delta}{2} \alpha_j \left( \frac{1+2\delta^2-2\epsilon}{1+\delta}\right)^{1+\delta} \xi_j^{1+\delta} 
= \frac{(1+2\delta^2-2\epsilon)^{1+\delta}}{2(1+\delta)^\delta} \alpha_j \xi_j^{1+\delta}.$$
We choose $\epsilon = \delta^4$, so that $(1+2\delta^2-2\epsilon)^{1+\delta} = (1+2\delta^2-2\delta^4)^{1+\delta}\geq 1 + 2\delta^2 + \delta^4$ (it can be verified that this holds for $\delta \in [0,\frac12]$). We also use $(1+\delta)^\delta \leq 1+\delta^2$ (which holds for $\delta \in [0,1]$). This implies the inductive statement:
$$ \alpha_{j+1} \xi_{j+1}^{1+\delta} \geq \frac{1+2\delta^2+\delta^4}{2(1+\delta^2)} \alpha_j \xi_j^{1+\delta}
= \frac{1+\delta^2}{2} \alpha_j \xi_j^{1+\delta}.$$ 

\paragraph{Case 2:} $\Pr[\frac{X_{j+1}}{\alpha_j} > \sqrt{\delta}] \leq 2 \delta \xi_j$. \\
In this case, $X_{j+1}$ is almost always very small compared to $\alpha_j$. Then we can gain by making $\alpha_{j+1}$ much smaller than $\alpha_j$; we let $\alpha_{j+1} = \sqrt{\delta} \alpha_j$. We have
\begin{eqnarray*}
\xi_{j+1} = \E[\phi_{\alpha_{j+1}}(X_{j+1})] & = & \E\left[ \min \left\{ \frac{X_{j+1}}{\alpha_j \sqrt{\delta}}, 1 \right\} \right]
 = \frac{1}{\sqrt{\delta}} \, \E\left[ \min \left\{ \frac{X_{j+1}}{\alpha_j}, \sqrt{\delta} \right\} \right] \\
 & \geq & \frac{1}{\sqrt{\delta}} \left( \E\left[ \min \left\{ \frac{X_{j+1}}{\alpha_j}, 1 \right\}\right] - (1-\sqrt{\delta}) \Pr\left[\frac{X_{j+1}}{\alpha_j} > \sqrt{\delta}\right] \right) \\
& \geq & \frac{1}{\sqrt{\delta}} \left( \E[\phi_{\alpha_j}(X_{j+1})] - (1-\sqrt{\delta}) \cdot 2 \delta \xi_j \right)
 \end{eqnarray*}
An elementary bound together with (\ref{eq:quad}) gives
$$ \E[\phi_{\alpha_j}(X_{j+1})] \geq \frac12 \E[1 - (1-\phi_{\alpha_j}(X_{j+1}))^2]
 \geq \frac12 (1-2\epsilon) \xi_j.$$
Therefore, using our choice of $\epsilon = \delta^4$,
$$ \xi_{j+1} = \E[\phi_{\alpha_{j+1}}(X_{j+1})]
 \geq \frac{1}{\sqrt{\delta}} \left(\frac12 (1-2\epsilon) \xi_j - 2 (1-\sqrt{\delta}) \delta \xi_j\right)
  = \frac{1 - 2\delta^4 - 4 \delta + 4\delta^{3/2}}{2 \sqrt{\delta}} \xi_j \geq \frac{1-4\delta+2 \delta^{3/2}}{2\sqrt{\delta}} \xi_j. $$
From here, using $\alpha_{j+1} = \sqrt{\delta} \alpha_j$ and $(1-4\delta+2\delta^{3/2})^{1+\delta}
 \geq 1 - 4\delta$ (which holds for any $\delta \in [0,\frac14]$),
$$ \alpha_{j+1} \xi_{j+1}^{1+\delta} \geq \sqrt{\delta} \alpha_j \left(\frac{1-4\delta+2\delta^{3/2}}{2\sqrt{\delta}}\right)^{1+\delta} \xi_j^{1+\delta} \geq \frac{1-4\delta}{2^{1+\delta} \delta^{\delta/2}} \alpha_j \xi_j^{1+\delta}.$$
We choose $\delta = e^{-10}$ so that $\delta^{\delta/2} = e^{-5\delta}$. Then,
$$ \alpha_{j+1} \xi_{j+1}^{1+\delta} \geq \frac{1-4\delta}{2^{1+\delta}} e^{5\delta} \alpha_j \xi_j^{1+\delta}
 \geq \frac{1+\delta^2}{2} \alpha_j \xi_j^{1+\delta} $$
which finishes the inductive step.
\end{proof}

Putting together Lemma~\ref{lem:level-bound} and the cardinality bound which applies to every feasible solution, we complete our hardness result for combinatorial public projects.

\begin{proof}[Proof of Theorem~\ref{thm:CPP-hardness}]
Let $\epsilon>0$ and $\delta>0$ be the constants provided by Lemma~\ref{lem:level-bound}. Let $n=2^\ell$ and $m = 400^\ell$. Suppose there is a mechanism for the problem $\max \{f(S): |S| \leq m/n\}$ that maximizes $(1-\epsilon)$-approximately over a distributional range $\cR$ and provides a $c$-approximation, where $c \geq 1/n$. By Lemma~\ref{lem:level-gap} and Lemma~\ref{lem:base-case}, there are collections of random variables $\cX_0, \cX_1, \ldots, \cX_\ell$ constructible at the respective levels by $\cR$, satisfying the conditions of Lemma~\ref{lem:level-bound}. Hence,
by Lemma~\ref{lem:level-bound} for $j=\ell$, there is $X_\ell$ constructible by $\cR$ at level $\ell$ such that
$$ \alpha_\ell (\E[\phi_{\alpha_\ell}(X_\ell)])^{1+\delta} \geq c^{1+\delta} \left( \frac{1+\delta^2}{2} \right)^\ell  = \frac{c^{1+\delta}}{n} (1+\delta^2)^\ell.$$
Recall that $\phi_{\alpha_\ell}(t) = \min \{ \frac{t}{\alpha_\ell}, 1\}$. Therefore, we have 
$$ \frac{c^{1+\delta}}{n} (1+\delta^2)^\ell \leq \alpha_\ell (\E[\phi_{\alpha_\ell}(X_\ell)])^{1+\delta} \leq 
\alpha_\ell \E[\phi_{\alpha_\ell}(X_\ell)] \leq \E[X_\ell]. $$
We have $X_\ell = \frac{|R|}{|M|}$ where $R$ is a random set sampled according to some distribution in the range $\cR$. 
All distributions in the range must be feasible in expectation, otherwise the mechanism cannot possibly maximize over them and return a feasible solution. Therefore, 
$ \E[X_\ell] \leq \frac{1}{n} $
which implies that
$$ c \leq (1+\delta^2)^{-\frac{\ell}{1+\delta}} < 2^{-\delta^2 \ell} = n^{-\delta^2}. $$
Therefore, there is no $(1-\epsilon)$-approximately MIDR mechanism providing an $n^{-\delta^2}$-approximation in the objective function.
Also, we have $m = 400^\ell = \poly(n)$, so the approximation cannot be better than $m^{-\gamma}$ for some constant $\gamma>0$.
The only bound we have used on the mechanism was that the number of value queries is polynomial in $n$, or equivalently
polynomial in $m$.
\end{proof}

\section{Proof of hardness for combinatorial auctions}
\label{sec:CA-proof}

In this section, we present the proof of Theorem~\ref{thm:CA-hardness}.

\subsection{The basic random instance}
\label{sec:basic-inst}

We choose a parameter $\ell \geq 1$ and construct instances with $|N| = n = 2^\ell$ players and $|M| = m = 400^\ell$ items. We define "polar valuations" as in \cite{Dobzin11}.

\begin{definition}
Given a set of items $A \subset M$ and a parameter $\omega>0$, the polar valuation $v^*_A$ associated with $A$ is defined by 
$$ v^*_A(S) = |A \cap S| + \omega |S \setminus A|.$$
\end{definition}

Our "basic instance" is an instance where each player has a polar valuation associated with a random set of size $m/n$.

\begin{definition}
In the basic instance, player $i$ has valuation $v^*_i = v^*_{A^{(0)}_i}$ where $A^{(0)}_i$ is a uniformly random set of size $m/n$, chosen independently for each player.
\end{definition}

Next, we prove that for some player, his allocation overlaps significantly with his desired set.

\begin{lemma}
\label{lem:basic-inst}
For any $c$-approximation mechanism applied to the random basic instance, there is a player $i$ and sets $A^{(0)}_j, j \neq i$, such that conditioned on the desired sets for players $j \neq i$ being $A^{(0)}_j$, player $i$ gets allocated a random set $R^{(0)}_i$ such that
$$ \E[|R^{(0)}_i \cap A^{(0)}_i|] > (c/4-\omega) \E[|R^{(0)}_i \cup A^{(0)}_i|].$$
\end{lemma}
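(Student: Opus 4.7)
My plan is to establish the bound via a counting argument followed by two-stage averaging---first over players by pigeonhole, then over the other players' preferred sets. The only subtlety is tracking a strict $\omega m$ slack through the counting step; beyond that, the argument is routine.

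First I would express each player's welfare in terms of the overlap quantities $I_i = |R^{(0)}_i \cap A^{(0)}_i|$ and $U_i = |R^{(0)}_i \cup A^{(0)}_i|$. Using $|R^{(0)}_i| = U_i + I_i - m/n$, a direct calculation yields $v^*_i(R^{(0)}_i) = I_i + \omega U_i - \omega m/n$. Summing over $i$ and combining the allocation constraint $\sum_i |R^{(0)}_i| \leq m$ with $|A^{(0)}_i| = m/n$ gives $\sum_i U_i \leq 2m - \sum_i I_i$; substituting back produces the key upper bound
\[ \sum_i v^*_i(R^{(0)}_i) \;\leq\; (1-\omega)\,\sum_i I_i + \omega m. \]

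Next I would lower bound $\E[OPT]$ by the welfare of the ``assign-to-any-liker'' allocation, whose value is exactly $|\bigcup_i A^{(0)}_i| + \omega\bigl(m - |\bigcup_i A^{(0)}_i|\bigr)$. Since $\E[|\bigcup_i A^{(0)}_i|] = m\bigl(1 - (1-1/n)^n\bigr) \geq m(1-1/e)$, this gives $\E[OPT] \geq m/2$ for every $\omega \in [0,1]$ and $n \geq 1$. Combining with the $c$-approximation guarantee $\E[\sum_i v^*_i(R^{(0)}_i)] \geq c\,\E[OPT]$ and the previous inequality then yields $\E[\sum_i I_i] \geq cm/2 - \omega m$ (dividing by $1-\omega$ only strengthens this whenever the numerator is non-negative).

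Finally we may assume $c \geq 4\omega$, since otherwise $(c/4-\omega) < 0$ and the lemma holds trivially for every player as $\E[U_i] \geq m/n > 0$. Under this assumption, the crude bound $\E[\sum_i U_i] \leq 2m$ gives
\[ \sum_i \bigl(\E[I_i] - (c/4-\omega)\,\E[U_i]\bigr) \;\geq\; (cm/2 - \omega m) - (c/4-\omega)\cdot 2m \;=\; \omega m \;>\; 0, \]
so by pigeonhole some player $i$ satisfies $\E[I_i] > (c/4-\omega)\,\E[U_i]$ unconditionally. Rewriting the left side as $\E_{A^{(0)}_{-i}}\!\bigl[\E[\,I_i - (c/4-\omega) U_i \mid A^{(0)}_{-i}\,]\bigr]$ and applying averaging once more produces a realization of $A^{(0)}_{-i}$ for which the inner conditional expectation is still strictly positive, exactly matching the claimed inequality. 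The main obstacle, as already mentioned, is arranging the cancellation of the $cm/2$ terms so that the residual $\omega m > 0$ provides strict slack; this is why the proof pairs $\E[\sum_i I_i] \geq cm/2 - \omega m$ with $\E[\sum_i U_i] \leq 2m$ rather than cruder bounds.
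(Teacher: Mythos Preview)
Your proof is correct and follows essentially the same counting-then-double-averaging argument as the paper: both bound $\E[OPT]$ from below via the union $\bigcup_i A^{(0)}_i$, use $\sum_i (|R_i|+|A^{(0)}_i|) \leq 2m$ to control $\sum_i U_i$, and then average first over players and then over $A^{(0)}_{-i}$. The only cosmetic differences are that you rewrite $v^*_i$ as $I_i+\omega U_i-\omega m/n$ before summing (whereas the paper compares $\sum_i v^*_i(R_i)$ directly to $\tfrac{c}{4}\sum_i U_i$), and you extract the strict inequality from the residual $\omega m>0$ rather than from the strict bound $\E[OPT]>m/2$; neither changes the substance of the argument.
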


\begin{proof}
First, let us estimate the optimal social welfare that the basic instance admits in expectation. Given $(A^{(0)}_1,\ldots,A^{(0)}_n)$, each item in $\bigcup_{i=1}^{n} A^{(0)}_i$ can be allocated to some player so that it brings value $1$. We ignore the remaining items. Observe that a fixed item $j$ appears in each $A^{(0)}_i$ independently with probability $1/n$, therefore $Pr[j \in \bigcup_{i=1}^{n} A^{(0)}_i] = 1 - (1-1/n)^n \geq 1-1/e > 1/2$. Hence,
$$ \E[OPT] \geq \E[|\bigcup_{i=1}^{n} A^{(0)}_i|] > \frac{m}{2}.$$
We remind the reader that the expectation is over the random choices of $(A^{(0)}_1,\ldots,A^{(0)}_n)$. A $c$-approximate mechanism should provide at least $c \cdot OPT$ in expectation for every particular instance. Hence also in expectation over the random choice of $(A^{(0)}_1,\ldots,A^{(0)}_n)$. If $(R_1,\ldots,R_n)$ is the allocation provided by the mechanism, this means
$$ \sum_{i=1}^{n} \E[|R_i \cap A^{(0)}_i| + \omega|R_i \setminus A^{(0)}_i|] \geq c \cdot OPT > \frac{c m}{2}.$$
Since each of $A^{(0)}_1,\ldots,A^{(0)}_n$ has size $m/n$ and the sizes of $R_1,\ldots,R_n$ add up to at most $m$,
we can write
$$ \sum_{i=1}^{n} \E[|R_i \cap A^{(0)}_i| + \omega|R_i \setminus A^{(0)}_i|] > \frac{cm}{2} \geq \frac{c}{4} \sum_{i=1}^{n} \E[|R_i| + |A^{(0)}_i|] \geq \frac{c}{4} \sum_{i=1}^{n} \E[|R_i \cup A^{(0)}_i|].$$
By an averaging argument, there must be $i$ such that
$$ \E[|R_i \cap A^{(0)}_i| + \omega|R_i \setminus A^{(0)}_i|] > \frac{c}{4} \E[|R_i \cup A^{(0)}_i|] $$
and therefore
$$ \E[|R_i \cap A^{(0)}_i|] > (c/4 - \omega) \E[|R_i \cup A^{(0)}_i|].$$
This holds in expectation over the choices of $(A^{(0)}_j: j \neq i)$, and again by an averaging argument it also holds conditioned on some particular choice of $(A^{(0)}_j: j \neq i)$. We call the random set allocated to player $i$ under this conditioning $R^{(0)}_i$.
\end{proof}

\subsection{Setup for higher-level valuations}
\label{sec:prep}

In the following, the valuations of all players except $i$ are fixed to be $v^*_j = v^*_{A^{(0)}_j}$ for some choice of sets $(A^{(0)}_j: j \neq i)$. Now we will vary the valuation of player $i$ in order to be able to apply the symmetry gap argument as before. Since we work only with player $i$, we call him the "special player" and we drop the index $i$ in the following.

Recall Definition~\ref{def:rnd-bisect}, the definition of a random bisection sequence. We will use the same concept here, where at level $j$ we have a random set $A^{(j)}$ partitioned randomly into $A^{(j-1)} \cup B^{(j-1)}$. These sets have sizes $|A^{(j-1)}| = |B^{(j-1)}| = 2^{j-1-\ell} m$.
We will use valuation functions associated with each pair of sets. We denote these valuation functions by $v_{A^{(j-1)},B^{(j-1)}}$ at level $j$. In particular, this valuation function depends only on the elements of $A^{(j)} = A^{(j-1)} \cup B^{(j-1)}$. In other words, $A^{(j-1)}$ and $B^{(j-1)}$ are the desired sets of items at level $j$. 

\paragraph{Distribution menu.}
For each particular choice of a valuation function $v_{A^{(j)},B^{(j)}}$ at a certain level, the mechanism needs to produce a distribution over item sets for the special player, along with a certain price. Recall that due to the definition of (approximate) truthfulness in expectation, this choice should give (approximately) the optimal utility for the special player among all possible choices given the other valuations $v^*_{-i}$. After fixing a set of valuation functions $v_{A^{(j)},B^{(j)}}$ for each pair $(A^{(j)},B^{(j)})$, the output distribution will depend only on $(A^{(j)},B^{(j)})$ and hence we denote the respective random set by $R(A^{(j)},B^{(j)})$; we also denote the associated price by $P(A^{(j)},B^{(j)})$. Thus the mechanisms assigns distributions over sets and prices to all pairs of sets. As before, we distill the important information from the distribution into a random variable $X_j$. There is some additional information now expressed by the price; we associate the price with a separate random variable $P_j$. The possible choices of distributions for $(X_j,P_j)$ are what we call a {\em distribution menu} at level $j$.

\begin{definition}
\label{def:menu}
Given a mechanism and a special player with other valuations fixed, the "distribution menu at level $j$", $\cM_j$, is the set of all probability distributions of a pair of variables $(X_j,P_j)$ that arise as follows: There exist valuations $v_{A^{(j-1)},B^{(j-1)}}$ such that when declaring  $v_{A^{(j-1)},B^{(j-1)}}$, the special player receives a random set $R(A^{(j-1)},B^{(j-1)})$ at a price $P(A^{(j-1)},B^{(j-1)})$. Then, for $A^{(j)} = A^{(j-1)} \cup B^{(j-1)}$ chosen as the $(j-1)$-th level of a random bisection sequence, i.e. a random pair of disjoint sets of size $2^{j-1-\ell} m$, we have
$$ X_j = \frac{|A^{(j)} \cap R(A^{(j-1)},B^{(j-1)})|}{|A^{(j)}|}, $$
$$ P_j = P(A^{(j-1)},B^{(j-1)}). $$
\end{definition}

In other words, $X_j$ encodes the (random) fraction of the relevant items that the special player receives at level $j$, and $P_j$ is the respective (random) price.
Note that there are two sources of randomness in $(X_j,P_j)$: one is the random choice of $(A^{(j-1)},B^{(j-1)})$, and one arises from the randomness of the mechanism for fixed $(A^{(j-1)},B^{(j-1)})$.

\paragraph{Closure of a distribution menu.}
Furthermore, it will be convenient to make the menu closed and convex as follows.

\begin{definition}
\label{def:menu-hull}
We define $\overline{\cM_j}$, the closure of the distribution menu at level $j$, to be the topological closure of the set of all convex combinations of distributions from the menu $\cM_j$. 
\end{definition}

I.e., we take the convex hull of the menu and then its topological closure. 
We emphasize that the convex hull is generated by averaging distributions, and not the values of $(X_j,P_j)$. In other words, a  distribution of $(X'_j,P'_j)$ is in $\overline{\cM_j}$ if its distribution can be approximated arbitrarily closely by some convex combination of distributions in $\cM_j$. It is important that we keep all the randomness present in $(X_j,P_j)$ and do not take expectations until the end.

\subsection{Symmetry gap revisited}

Recall Lemma~\ref{lem:sym-gap} which was proved using the symmetry gap argument and played an important role in our proof for the CPP problem.  We still want to use this lemma; however, the difference now is the presence of prices.  In order to deal with prices, we need to introduce a parameter $\lambda$ which acts as a conversion factor between values and prices. For that purpose, we prove the following slight variation of Lemma~\ref{lem:sym-gap}.

\begin{lemma}
\label{lem:sym-gap-2}
Let $\phi:[0,1] \rightarrow [0,1]$ be a non-decreasing concave function and $\lambda \geq 0$ any constant.
Let $A^{(j+1)}$ be a fixed set of size $2^{j+1-\ell} m \geq m/n$ and $(A^{(j)}, B^{(j)})$ a random partition of $A^{(j+1)}$ into two sets of equal size.
Then there is a monotone submodular function $\tilde{v}_{A^{(j)}, B^{(j)}}$ for each partition $(A^{(j)}, B^{(j)})$ such that
\begin{compactitem}
\item For any distribution of a random set $R^{(j)}$ (possibly correlated with $A^{(j)}$) and the associated random variable $X_{j} = \frac{|R^{(j)} \cap A^{(j)}|}{|A^{(j)}|}$, we have
$$ \E[\tilde{v}_{A^{(j)}, B^{(j)}}(R^{(j)})] \geq \lambda \E[\phi(X_j - n m^{-1/2})]. $$
\item Any mechanism that uses $poly(n)$ value queries, when applied to the random input $\tilde{v}_{A^{(j)}, B^{(j)}}$ will return a random set $R^{(j+1)}$ such that for the random variable $X_{j+1} = \frac{|R^{(j+1)} \cap A^{(j+1)}|}{|A^{(j+1)}|}$,
$$ \E[\tilde{v}_{A^{(j)}, B^{(j)}}(R^{(j+1)})] \leq \lambda \E[1 - (1-\phi(X_{j+1}))^2 + e^{-\Omega(n)}].$$
\end{compactitem}
The expectations are over both $(A^{(j)}, B^{(j)})$ and $R^{(j)}$ or $R^{(j+1)}$ respectively.
\end{lemma}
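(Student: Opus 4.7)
My plan is to prove Lemma~\ref{lem:sym-gap-2} by an essentially trivial reduction to Lemma~\ref{lem:sym-gap}: I will simply rescale the valuation constructed there by $\lambda$. Concretely, I will define
$$ \tilde{v}_{A^{(j)},B^{(j)}}(S) \;:=\; \lambda \cdot \tilde{f}_{A^{(j)},B^{(j)}}(S),$$
where $\tilde{f}_{A^{(j)},B^{(j)}}$ is the monotone submodular function produced in the proof of Lemma~\ref{lem:sym-gap} from the concave test function $\phi$ with the parameter $\beta = n m^{-1/2}$. Since the class of monotone submodular set functions is closed under multiplication by a non-negative scalar, $\tilde{v}_{A^{(j)},B^{(j)}}$ is a legitimate monotone submodular valuation for every partition $(A^{(j)},B^{(j)})$.

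The first bullet then follows immediately by pulling $\lambda$ out of the expectation and invoking the corresponding bound for $\tilde{f}$ from Lemma~\ref{lem:sym-gap}.

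The second bullet requires only a small observation: any mechanism $\cA$ that makes $\mathrm{poly}(n)$ value queries to the oracle for $\tilde{v} = \lambda \tilde{f}$ can be perfectly simulated by a mechanism $\cA'$ of the same query complexity that queries the oracle for $\tilde{f}$ instead, by multiplying each oracle answer by $\lambda$ before handing it to $\cA$ and returning whatever $\cA$ returns. Applying Lemma~\ref{lem:sym-gap} to $\cA'$ gives the stated upper bound for the expected $\tilde{f}$-value of the output, and multiplying through by $\lambda$ yields the inequality claimed for $\tilde{v}$ (the term $\lambda \cdot e^{-\Omega(n)}$ can simply be absorbed into the exponentially small error exactly as written in the lemma statement).

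There is no genuine obstacle here; the substance lies entirely in Lemma~\ref{lem:sym-gap}, and the reduction uses only linearity of expectation and the fact that rescaling the oracle preserves query complexity. The reason for isolating Lemma~\ref{lem:sym-gap-2} as a separate statement is methodological: in the combinatorial auctions argument, we must eventually compare the special player's expected valuation against his expected payment, and a priori these are on incomparable scales. The free parameter $\lambda$ in this rescaled version is what will later allow us to use valuations and prices as commensurate quantities when invoking the (approximate) truthfulness inequality, and it is also what makes the subsequent separating-hyperplane argument work, since we can insert an arbitrary exchange rate $\lambda$ between value and money before passing to convex hulls.
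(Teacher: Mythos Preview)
Your proposal is correct and matches the paper's own proof essentially verbatim: the paper also defines $\tilde{v}_{A^{(j)},B^{(j)}} = \lambda \tilde{f}_{A^{(j)},B^{(j)}}$ and notes that the two bounds of Lemma~\ref{lem:sym-gap} carry over after multiplying by $\lambda$ (with the case $\lambda=0$ trivial). Your simulation argument for the second bullet is a slightly more explicit way of saying what the paper phrases as ``the same proof shows,'' but the content is identical.
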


\begin{proof}
The proof is easily obtained from the proof of Lemma~\ref{lem:sym-gap}. The only difference is the scaling by $\lambda \geq 0$. (For $\lambda=0$ the statement is trivial.) Given $\phi:[0,1] \rightarrow [0,1]$ and $\lambda \geq 0$, we take the function $\tilde{f}_{A^{(j)}, B^{(j)}}$ provided by Lemma~\ref{lem:sym-gap} and scale it by $\lambda$:
$$ \tilde{v}_{A^{(j)}, B^{(j)}}(S) = \lambda \tilde{f}_{A^{(j)}, B^{(j)}}(S).$$
Randomizing over $(A^{(j)}, B^{(j)})$, the same proof shows that any mechanism will return a random set $R^{(j+1)}$ with high probability balanced with respect to $(A^{(j)}, B^{(j)})$. Hence we obtain the same bounds as in Lemma~\ref{lem:sym-gap} with the right-hand side scaled by $\lambda$.
\end{proof}

Applying the assumption of approximate truthfulness, we obtain the following.

\begin{lemma}
\label{lem:lambda-gap}
Consider a $(1-\epsilon)$-approximately truthful-in-expectation mechanism for combinatorial auctions with $n=2^\ell$ players and $m=400^\ell$ items. Let $\phi:[0,1] \rightarrow [0,1]$ be a non-decreasing concave function. If $(X_j,P_j)$ has a distribution in the closure of the level-$j$ menu $\overline{\cM_j}$, then for any $\lambda',\lambda'' \geq 0$ there is $(X_{j+1},P_{j+1})$ with a distribution in the closure of the level-$(j+1)$ menu $\overline{\cM_{j+1}}$ such that 
$$ \lambda' \E[1 - (1-\phi(X_{j+1}))^2] - \lambda'' \E[P_{j+1}] \geq \lambda' \E[(1-\epsilon) \phi(X_j-10^{-\ell}) - 10^{-\ell}] - \lambda'' \E[P_j].$$
\end{lemma}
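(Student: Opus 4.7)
The plan is to invoke the approximate truthful-in-expectation inequality (\ref{eq:approx-truthful-relaxed}) with the special player's true valuation taken to be a scaled level-$(j+1)$ valuation $\alpha \tilde{v}_{A^{(j)}, B^{(j)}}$ from Lemma~\ref{lem:sym-gap-2}, and the deviating declaration realizing the given $(X_j, P_j)$. Since the target inequality is linear in the underlying distributions on both sides, I would first prove it for $(X_j, P_j) \in \cM_j$ arising from specific level-$j$ valuations $u_{A^{(j-1)}, B^{(j-1)}}$ (so that the deviation is a well-defined fixed valuation for each random bisection), and then extend by convexity and continuity to $\overline{\cM_j}$, obtaining matching distributions in $\overline{\cM_{j+1}}$.

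For each scaling $\alpha > 0$, I would couple the random bisections so that $A^{(j+1)}$ is uniformly chosen of the appropriate size, $(A^{(j)}, B^{(j)})$ is a uniform bisection of $A^{(j+1)}$, and independently $(A^{(j-1)}, B^{(j-1)})$ is a uniform bisection of $A^{(j)}$; this matches the marginal distributions that define $\cM_j$ and $\cM_{j+1}$. Conditioning on any such realization, (\ref{eq:approx-truthful-relaxed}) applied with true type $\alpha \tilde{v}_{A^{(j)}, B^{(j)}}$ and deviation $u_{A^{(j-1)}, B^{(j-1)}}$ yields a pointwise utility comparison. Integrating over the partitions and the mechanism's coins, and invoking Lemma~\ref{lem:sym-gap-2} to upper-bound $\E[\tilde{v}(R_\alpha^{(j+1)})]$ by $\E[1 - (1 - \phi(X_{j+1}^\alpha))^2] + e^{-\Omega(n)}$ and to lower-bound $\E[\tilde{v}(R^{(j)})]$ by $\E[\phi(X_j - 10^{-\ell})]$ (using $nm^{-1/2} = 10^{-\ell}$, and absorbing $\alpha \cdot e^{-\Omega(n)} \leq \alpha \cdot 10^{-\ell}$ for large $\ell$ since $n = 2^\ell$), I obtain
\[
\alpha \E[1 - (1 - \phi(X_{j+1}^\alpha))^2] - \E[P_{j+1}^\alpha] \geq \alpha \E[(1-\epsilon) \phi(X_j - 10^{-\ell}) - 10^{-\ell}] - \E[P_j],
\]
where $(X_{j+1}^\alpha, P_{j+1}^\alpha)$ has distribution in $\cM_{j+1}$ by construction.

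To extract the target statement for arbitrary $\lambda', \lambda'' \geq 0$, when $\lambda'' > 0$ I set $\alpha = \lambda'/\lambda''$ and multiply the displayed inequality by $\lambda''$; the resulting $(X_{j+1}, P_{j+1}) \in \cM_{j+1} \subseteq \overline{\cM_{j+1}}$ satisfies the desired bound. For the boundary cases I would take limits and exploit the closure: $\alpha \to \infty$ when $\lambda'' = 0$, and $\alpha \to 0$ when $\lambda' = 0$. The main technical obstacle I anticipate is the $\lambda'' = 0$ case, since the prices $P_{j+1}^\alpha$ may themselves grow with $\alpha$, making joint weak convergence of the distributions in $\overline{\cM_{j+1}}$ delicate; however, individual rationality bounds $\E[P_{j+1}^\alpha] \leq \alpha$, so dividing the displayed inequality by $\alpha$ and passing to a subsequential weak limit along the tight $X$-marginals on $[0,1]$ yields a distribution in the closure whose $X$-marginal meets the required expectation bound.
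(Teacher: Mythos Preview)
Your approach for the principal case $\lambda''>0$ is essentially identical to the paper's: set the scaling parameter to $\lambda'/\lambda''$, apply Lemma~\ref{lem:sym-gap-2} together with the approximate-truthfulness inequality, and then extend from $\cM_j$ to $\overline{\cM_j}$ by linearity of the inequality in the underlying distributions.

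The gap is in your treatment of $\lambda''=0$. You propose to send $\alpha\to\infty$, take a subsequential weak limit of the $X$-marginals of $(X_{j+1}^\alpha,P_{j+1}^\alpha)$, and declare the limit to lie in $\overline{\cM_{j+1}}$. But $\overline{\cM_{j+1}}$ is a closure of \emph{joint} distributions on $(X,P)$; if the prices $P_{j+1}^\alpha$ escape to infinity (your IR bound only gives $\E[P_{j+1}^\alpha]\le\alpha$), the joint laws need not be tight and no subsequential limit need exist in $\overline{\cM_{j+1}}$. Convergence of the $X$-marginals alone does not produce a point of the closure.

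The paper sidesteps this entirely by never taking a limit. It uses the standing assumption that expected payments are nonnegative, so $\E[P_{j+1}]\ge 0$, and chooses a single finite scale $\lambda=10^{\ell+1}\E[P_j]$ (more generally $10^{\ell+1}(\E[P_j]-p^*)$). Then the price term contributes at most $\E[P_j]/\lambda\le 10^{-\ell-1}$, which fits inside the $10^{-\ell}$ error budget because the symmetry-gap error is really $e^{-\Omega(n)}\ll 10^{-\ell-1}$. You absorbed $e^{-\Omega(n)}$ into $10^{-\ell}$ too early in your displayed inequality, which used up exactly the slack the paper exploits here; keeping the sharper $e^{-\Omega(n)}$ until after the price term is handled lets you pick one finite $\alpha$ and avoid any limiting argument for $\lambda''=0$.
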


\begin{proof}
First let us assume that $\lambda''>0$ and set $\lambda = \lambda'/\lambda''$. If $(X_j,P_j)$ is on the menu $\cM_j$, it means that the mechanism under certain valuations depending on the (random) set $A^{(j)}$ allocates to the special player a random set $R^{(j)}$ (at some price $P_j$) such that $X_{j} = \frac{|R^{(j)} \cap A^{(j)}|}{|A^{(j)}|}$. Given $\phi$, by Lemma~\ref{lem:sym-gap-2} there are valuation functions $\tilde{v}_{A^{(j)}, B^{(j)}}$ such that 
$$ \E[\tilde{v}_{A^{(j)}, B^{(j)}}(R^{(j)})] \geq \lambda \E[\phi(X_j - n m^{-1/2})] 
$$
and on the other hand, the mechanism executed on this random input allocates a random set $R^{(j+1)}$ such that with $X_{j+1} = \frac{|R^{(j+1)} \cap A^{(j+1)}|}{|A^{(j+1)}|}$,
$$ \E[\tilde{v}_{A^{(j)}, B^{(j)}}(R^{(j+1)})] \leq \lambda \E[1 - (1-\phi(X_{j+1}))^2 + e^{-\Omega(n)}]
.$$
Let us assume that the mechanism allocates this distribution at a price $P_{j+1}$. 
Due to the assumption of $(1-\epsilon)$-truthfulness, the utility provided by the mechanism must be approximately maximized for the true valuation. Hence, we must have
\begin{equation}
\label{eq:lambda-bound}
 \lambda \E[1 - (1-\phi(X_{j+1}))^2 + e^{-\Omega(n)}] - \E[P_{j+1}] \geq (1-\epsilon) \lambda \E[\phi(X_j-n m^{-1/2})] - \E[P_j].
\end{equation}
Given our parameters $m=400^\ell, n=2^\ell$, we have $n m^{-1/2} = 10^{-\ell}$ and $e^{-\Omega(n)} = e^{-\Omega(2^\ell)} << 10^{-\ell}$, therefore
$$ \lambda \E[1 - (1-\phi(X_{j+1}))^2] - \E[P_{j+1}] \geq \lambda \E[(1-\epsilon) \phi(X_j - 10^{-\ell}) - 10^{-\ell}] - \E[P_j].$$
Since this inequality is preserved under convex combinations and limits of the distributions of $(X_j,P_j)$ and $(X_{j+1},P_{j+1})$ (the non-linearity of $\phi$ is irrelevant here!), the same holds for the closures $\overline{\cM_j}, \overline{\cM_{j+1}}$: For any $(X_j,P_j)$ with a distribution in $\overline{\cM_j}$ and $\lambda>0$, there exists $(X_{j+1},P_{j+1})$ with a distribution in $\overline{\cM_{j+1}}$ such that $(\ref{eq:lambda-bound})$ holds. This proves the statement of the lemma when $\lambda''>0$.

When $\lambda''=0$, the statement claims that given $(X_j,P_j) \in \overline{\cM_j}$, there is $(X_{j+1},P_{j+1}) \in \overline{\cM_{j+1}}$, such that $\E[1 - (1-\phi(X_{j+1}))^2] \geq \E[(1-\epsilon) \phi(X_j-10^{-\ell}) - 10^{-\ell}]$,
without regard to prices. This can be obtained from the previous discussion as follows. Let us assume that $p^*$  is an absolute lower bound on the expected price $\E[P_{j+1}]$. (If the mechanism possibly pays arbitrarily large amounts on the menu of the special player, then given a zero valuation it cannot maximize utility over the menu.)
Given $(X_j,P_j)$ in $\overline{\cM_j}$, let $\lambda = 10^{\ell+1} (\E[P_j] - p^*)$; there must be a pair $(X_{j+1},P_{j+1})$ in $\overline{\cM_{j+1}}$ satisfying
 (\ref{eq:lambda-bound}). Using the (still very crude) estimate $e^{-\Omega(n)} << 10^{-\ell-1}$, (\ref{eq:lambda-bound}) implies
\begin{eqnarray*}
\E[1 - (1-\phi(X_{j+1}))^2] & \geq & \E[(1-\epsilon) \phi(X_j-n m^{-1/2}) - e^{-\Omega(n)}] - \frac{\E[P_j] - p^*}{\lambda} \\
& \geq & \E[(1-\epsilon) \phi(X_j-10^{-\ell}) - 10^{-\ell-1}] - 10^{-\ell-1} \\
& \geq & \E[(1-\epsilon) \phi(X_j-10^{-\ell}) - 10^{-\ell}].
\end{eqnarray*}
\end{proof}

\subsection{The convex separation argument}

Next, we use a geometric argument, essentially Farkas' lemma in 2 dimensions, which shows that the bounds for varying multipliers $\lambda', \lambda'' \geq 0$ allow us to obtain separate bounds on value and price.

\begin{lemma}
\label{lem:convex}
Consider a $(1-\epsilon)$-approximately truthful-in-expectation mechanism for combinatorial auctions with $n=2^\ell$ players and $m=400^\ell$ items. Let $\phi:[0,1] \rightarrow [0,1]$ be a non-decreasing concave function. If $(X_j,P_j)$ has a distribution in the closure of the level-$j$ menu $\overline{\cM_j}$, then there is $(X_{j+1},P_{j+1})$ with a distribution in the closure of the level-$(j+1)$ menu $\overline{\cM_{j+1}}$ such that
$$\E[P_{j+1}] \leq \E[P_j] $$ and
$$\E[1 - (1-\phi(X_{j+1}))^2] \geq \E[(1-\epsilon) \phi(X_j-10^{-\ell}) - 10^{-\ell}].$$
\end{lemma}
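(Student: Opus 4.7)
The plan is to convert the family of weighted-sum inequalities provided by Lemma~\ref{lem:lambda-gap} into the two separate inequalities claimed here by a two-dimensional separating-hyperplane (Farkas-type) argument on the convex set of achievable expected (value, price) pairs.

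First I would define
\[
C \;:=\; \left\{\bigl(\E[1-(1-\phi(X))^2],\ \E[P]\bigr) : \text{dist}(X,P) \in \overline{\cM_{j+1}}\right\} \;\subseteq\; \RR^2,
\]
which is convex because both coordinates are linear functionals of the underlying distribution and $\overline{\cM_{j+1}}$ is a convex set of distributions. Writing $u^* := \E[(1-\epsilon)\phi(X_j - 10^{-\ell}) - 10^{-\ell}]$ and $p^* := \E[P_j]$, the conclusion of Lemma~\ref{lem:convex} is exactly that $C$ meets the closed convex region $K := \{(u,p) : u \geq u^*,\ p \leq p^*\}$.

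Suppose for contradiction that $C \cap K = \emptyset$. The separating-hyperplane theorem then yields a nonzero $(\alpha,\beta) \in \RR^2$ and a scalar $c$ with $\alpha u + \beta p \leq c$ on $C$ and $\alpha u + \beta p \geq c$ on $K$. The recession cone of $K$ is $\RR_+\times\RR_-$, so the functional $(u,p)\mapsto \alpha u+\beta p$ must be bounded below on it, forcing $\alpha\ge 0$ and $\beta\le 0$. Setting $\lambda' := \alpha$ and $\lambda'' := -\beta$ (both nonnegative, not both zero) and evaluating at the corner $(u^*,p^*)\in K$, I would obtain
\[
\sup_{(u,p)\in C}\bigl(\lambda' u - \lambda'' p\bigr) \;\le\; c \;\le\; \lambda' u^* - \lambda'' p^*,
\]
which directly contradicts Lemma~\ref{lem:lambda-gap} applied with precisely this $(\lambda',\lambda'')$: that lemma produces a point of $C$ satisfying the reverse inequality $\lambda' u - \lambda'' p \ge \lambda' u^* - \lambda'' p^*$.

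The hard part will be making the above rigorous, since ordinary disjointness of convex sets only gives weak separation and a priori the supremum on $C$ may fail to be attained. I plan to exploit the built-in closure in $\overline{\cM_{j+1}}$: applying Lemma~\ref{lem:lambda-gap} with $\lambda'=0$, $\lambda''=1$ already bounds the relevant prices, so I can restrict attention to the slab $C \cap ([0,1]\times[0,p^*+1])$, which should be a closed bounded (hence compact) convex subset of $\RR^2$. Compactness makes the supremum attained, upgrading weak to strict separation except in the borderline case where a point of $C$ already lies on the boundary of $K$ with $u\ge u^*$ and $p\le p^*$ — and that borderline point is itself the witness required by the lemma. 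An alternative, if compactness is awkward, is to perturb $(u^*,p^*)$ by $\eta>0$ in both coordinates, run the above separation argument to contradict Lemma~\ref{lem:lambda-gap} for each $\eta>0$, and pass to the limit $\eta\to 0$ using closure of $\overline{\cM_{j+1}}$ under weak limits to extract the final point. This compactness/closure bookkeeping is the only real technical wrinkle; once it is handled the Farkas-type contradiction is immediate.
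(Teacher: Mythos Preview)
Your approach is essentially identical to the paper's: define the convex image of $\overline{\cM_{j+1}}$ in the $(\text{value},\text{price})$-plane, assume it misses the quadrant $\{u\ge u^*,\ p\le p^*\}$, separate by a line whose normal must lie in the nonnegative cone (your recession-cone argument, the paper's ``cannot have negative slope''), and contradict Lemma~\ref{lem:lambda-gap}. The only difference is that you are explicit about the weak-versus-strict separation issue and propose compactness/perturbation fixes, whereas the paper simply asserts a strict separating line and the closedness of $\cQ_{j+1}$ without further comment; your added care here is warranted, since neither set is obviously compact, and the nonnegativity of expected payments (stated in the preliminaries) together with $u\in[0,1]$ is exactly what makes your slab restriction legitimate.
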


\begin{figure}[!h]
\label{fig:covex-sep}
\centering
\begin{tikzpicture}[scale=.60,pre/.style={<-,shorten <=2pt,>=stealth,thick}, post/.style={->,shorten >=1pt,>=stealth,thick}]

\shadedraw [black,ball color=gray] (-5,-2) .. controls (-7,2) and (-2,5) .. (4,4) .. controls (2,1) and (1,1) .. (-5,-2);
\shadedraw [white,shading=axis,shading angle=45] (0,0) rectangle (10,-10);

\fill (0,0) circle (4pt);
\draw (-10,0) -- (10,0);
\draw (0,-10) -- (0,6);

\draw [dashed] (-10,-5) -- (10,5);

\draw (1.2,-0.5) node {$(q_j,p_j)$};

\draw (-4.5,3) node {${\cal Q}_{j+1}$};

\end{tikzpicture}
\caption{The convex separation argument.}
\label{fig:convex-sep}
\end{figure}
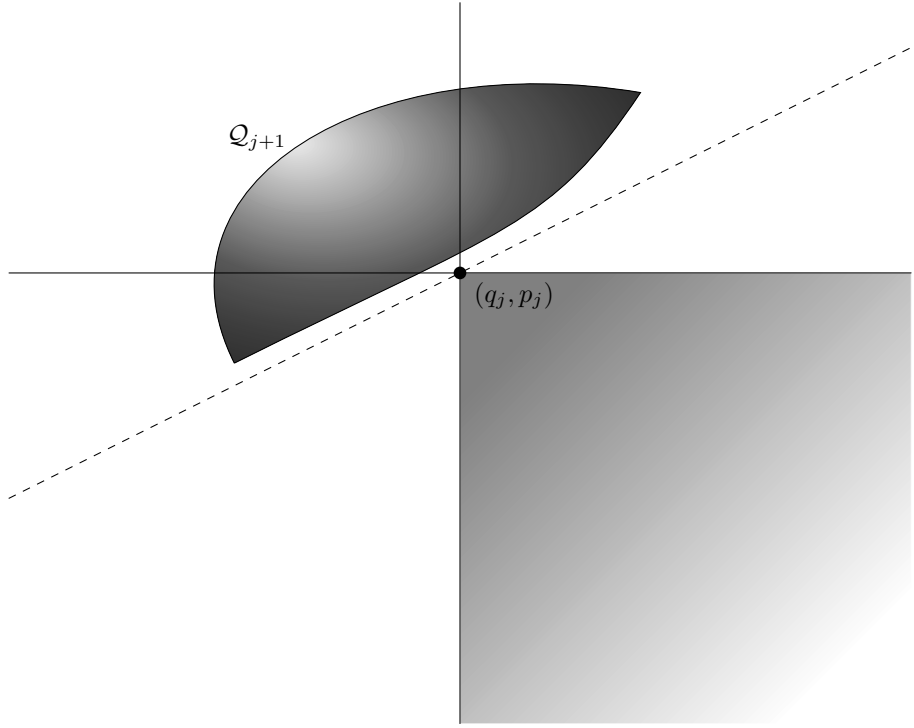

\begin{proof}
Denote $q_j = \E[(1-\epsilon) \phi(X_j-10^{-\ell}) - 10^{-\ell}]$ and $q_{j+1} = \E[1 - (1-\phi(X_{j+1}))^2]$. Set also $p_{j} = \E[P_{j}]$ and $p_{j+1} = \E[P_{j+1}]$. 
By Lemma~\ref{lem:lambda-gap}, for any $(X_{j},P_{j}) \in \overline{\cM_j}$ and any $\lambda',\lambda'' \geq 0$, there is $(X_{j+1},P_{j+1}) \in \overline{\cM_{j+1}}$  such that $\lambda' q_{j+1} - \lambda'' p_{j+1} \geq \lambda' q_j - \lambda'' p_j$.
Using these transformations, let us map the set $\overline{\cM_j}$ to 
$$ \cQ_j = \{ (q_j, p_j): (X_j, P_j) \in \overline{\cM_j} \}.$$
and map $\overline{\cM_{j+1}}$ to
$$ \cQ_{j+1} = \{ (q_{j+1}, p_{j+1}): (X_{j+1}, P_{j+1}) \in \overline{\cM_{j+1}} \}. $$
Both $\cQ_j$ and $\cQ_{j+1}$ are closed convex sets, because they are the images of the closed convex sets $\overline{\cM_j}, \overline{\cM_{j+1}}$ under a linear map (the map being the expectation of a certain function over a distribution; this is linear
as a function of the distribution even though the function is non-linear).

By this transformation, we have reduced the proof to a geometric question in the plane (see Figure~\ref{fig:convex-sep}): Given $(q_j,p_j)$, assume that for any $\lambda', \lambda'' \geq 0$, there is $(q_{j+1}, p_{j+1}) \in \cQ_{j+1}$ such that $\lambda' q_{j+1} - \lambda'' p_{j+1} \geq \lambda' q_j - \lambda'' p_j$.
Is it possible that there is no point $(q_{j+1}, p_{j+1}) \in \cQ_{j+1}$ such that $q_{j+1} \geq q_j$ and $p_{j+1} \leq p_j$?

Suppose that there is no such point in $\cQ_{j+1}$.
This means that $\cQ_{j+1}$ and $\{(q,p): q \geq q_j, p \leq p_j\}$ are disjoint.
Since these are closed convex sets, they can be separated by a line.
This line cannot have a negative slope, otherwise it would intersect the quadrant $\{(q,p): q \geq q_j, p \leq p_j\}$. Such a separating line gives $\lambda', \lambda'' \geq 0$ such that $\lambda' q_{j+1} - \lambda'' p_{j+1} < \lambda' q_j - \lambda'' p_j$ for all $(q_{j+1}, p_{j+1}) \in \cQ_{j+1}$. However, this contradicts the assumption above. Hence there is a point $(q_{j+1}, p_{j+1}) \in \cQ_{j+1}$ such that $q_{j+1} \geq q_j$ and
 $p_{j+1} \leq p_j$.
\end{proof}

\subsection{Putting it all together}

In this section, we finish the proof of our main hardness result for combinatorial auctions.

\begin{proof}[Proof of Theorem~\ref{thm:CA-hardness}]
Let $\epsilon>0$ and $\delta>0$ be the constants provided by Lemma~\ref{lem:level-bound}. Let the number of players be $n=2^\ell$ and the number of items $m = 400^\ell$. Suppose there is a $(1-\epsilon)$-approximately truthful-in-expectation mechanism that provides a $c$-approximation in social welfare, where $c = 1/n^\gamma = 2^{-\gamma \ell}$ for some constant $\gamma>0$.

Consider the basic instance (Section~\ref{sec:basic-inst}). Choose a special player and fix the remaining valuations, based on Lemma~\ref{lem:basic-inst}. Let $R^{(0)}$ be the random set allocated to the special player, $P_0$ the respective price, $A^{(0)}$ his desired set, $X_0 = \frac{|R^{(0)} \cap A^{(0)}|}{|A^{(0)}|}$, $c_0 = \E[X_0]$ and $p_0 = \E[P_0]$. 
Lemma~\ref{lem:basic-inst} implies $c_0 \geq c/4 - \omega$. We set $\omega = c/8$. Then
 $c_0 \geq c/8 = 1/(8 n^\gamma) = 2^{-\gamma \ell - 3}$.

Now consider the distribution menus at different levels and their closures $\overline{\cM_j}$ (Section~\ref{sec:prep}).
Let us define $\cX_j$ to be the collection of random variables $X_j$ such that $(X_j,P_j)$ is in $\overline{\cM_j}$ for some price $P_j$ such that $\E[P_j] \leq p_0$. As discussed above, we have $X_0$ in $\cX_0$ such that $\E[X_0] = c_0 \geq 2^{-\gamma \ell - 3} \geq 2^{-\ell}$ for $\ell$ sufficiently large. Also, Lemma~\ref{lem:convex} says that for any $X_j$ in $\cX_j$ and any non-decreasing concave $\phi:[0,1] \rightarrow [0,1]$, we have $X_{j+1}$ in $\cX_{j+1}$ such that
$$\E[1 - (1-\phi(X_{j+1}))^2] \geq \E[(1-\epsilon) \phi(X_j-10^{-\ell}) - 10^{-\ell}].$$
In other words, the collections $\cX_0, \cX_1, \ldots, \cX_\ell$ satisfy the assumptions of Lemma~\ref{lem:level-bound}.
Hence, by Lemma~\ref{lem:level-bound} for $j=\ell$, there is $X_\ell$ in $\cX_\ell$ and $\alpha_\ell \in [0,1]$ such that
$$ \alpha_\ell (\E[\phi_{\alpha_\ell}(X_\ell)])^{1+\delta} \geq \left( \frac{1+\delta^2}{2} \right)^\ell c_0^{1+\delta}.$$
Recall that $\phi_{\alpha_\ell}(t) = \min \{ \frac{t}{\alpha_\ell}, 1\}$. Therefore, we have 
$$  \E[X_\ell] \geq \alpha_\ell \E[\phi_{\alpha_\ell}(X_\ell)] \geq  \alpha_\ell (\E[\phi_{\alpha_\ell}(X_\ell)])^{1+\delta} \geq \left( \frac{1+\delta^2}{2} \right)^\ell c_0^{1+\delta} \geq 2^{\delta^2 \ell - \ell} c_0^{1+\delta}.$$
Since $X_\ell$ is in $\cX_\ell$, the respective price is bounded by $\E[P_\ell] \leq p_0$. 
Now consider an expression related to the utility the special player would derive in the basic instance:
$$ \E[(1-\epsilon) \omega m X_\ell - P_\ell] \geq
 (1-\epsilon) \omega m 2^{\delta^2 \ell - \ell} c_0^{1+\delta} - p_0.$$
The distribution of $(X_{\ell}, P_\ell)$ might not be on the actual menu $\cM_\ell$ of the special player; however, since it is in the closure of its convex hull, there exists a pair $(\tilde{X}_{\ell}, \tilde{P}_\ell)$ with a distribution in $\cM_\ell$ such that
$$ \E[(1-\epsilon) \omega m \tilde{X}_\ell - \tilde{P}_\ell]
 > (1-2\epsilon) \omega m 2^{\delta^2 \ell - \ell} c_0^{1+\delta} - p_0.$$
(If not, we get a contradiction since if the reverse inequality holds for $\cM_\ell$, it holds also for the closure $\overline{M}_\ell$.) 
The random variable $\tilde{X}_\ell$ represents a random set $\tilde{R}^{(\ell)}$, possibly allocated to the special player at price $\tilde{P}_\ell$: $\tilde{X}_\ell = \frac{|\tilde{R}^{(\ell)}|}{|A^{(\ell)}|} = \frac{1}{m} |\tilde{R}^{(\ell)}|$. 

Now let us go back to the basic instance. Considering that the valuation of the special player in the basic instance satisfies $v^*_i(S) \geq \omega|S|$, we obtain
$$ \E[(1-\epsilon) v^*_i(\tilde{R}^{(\ell)}) - \tilde{P}_\ell]
 \geq \E[(1-\epsilon) \omega m \tilde{X}_{\ell} - \tilde{P}_\ell]
 > (1-2\epsilon) \omega m 2^{\delta^2 \ell - \ell} c_0^{1+\delta} - p_0.$$ 
Using $c_0 \geq c/8 = \omega = 2^{-\gamma \ell - 3}$, we get
\begin{equation}
\label{eq:menu-item}
\E[(1-\epsilon) v^*_i(\tilde{R}^{(\ell)}) - \tilde{P}_\ell]
> (1-2\epsilon) \left( \frac{c}{8} \right)^{1+\delta} 2^{\delta^2 \ell - \ell} m c_0 - p_0
= (1-2\epsilon) 2^{\delta^2 \ell - (1+\delta) (\gamma \ell + 3) - \ell} m c_0 - p_0.
\end{equation}
On the other hand, the set $R^{(0)}$ actually allocated under declared valuation $v^*_i$ gives
$$ \E[v^*_i(R^{(0)})] = \E[|R^{(0)} \cap A^{(0)}|] + \omega \E[|R^{(0)} \setminus A^{(0)}|]
 \leq \frac{m}{n} \E[X_0] + \omega \E[|R^{(0)}|] \leq \frac{2m}{n} \E[X_0] $$ 
using again Lemma~\ref{lem:basic-inst} to say that $\frac{m}{n} \E[X_0] = \E[|R^{(0)} \cap A^{(0)}|] \geq (c/4-\omega) \E[|R^{(0)}|] = \omega \E[|R^{(0)}|] $. Therefore, since $\E[X_0] = c_0$ and $\E[P_0] = p_0$,
\begin{equation}
\label{eq:true-item}
\E[v^*_i(R^{(0)}) - P_0] \leq \frac{2m}{n} \E[X_0] - \E[P_0] = 2^{1-\ell} m c_0 - p_0.
\end{equation}
Since $\tilde{R}^{(\ell)}$ is a random set the special player could receive at price $\tilde{P}_\ell$ if he had declared a suitable valuation, $(1-\epsilon)$-approximate truthfulness implies that
$$ \E[v^*_i(R^{(0)}) - P_0] \geq \E[(1-\epsilon) v^*_i(\tilde{R}^{(\ell)}) - \tilde{P}_\ell]. $$
Considering (\ref{eq:menu-item}) and (\ref{eq:true-item}), this implies
$$ 2^{1-\ell} > (1-2\epsilon) 2^{\delta^2 \ell - (1+\delta)(\gamma \ell+3)-\ell}.$$
We conclude that $\gamma \geq \frac{\delta^2}{1+\delta}$, otherwise we get a contradiction for a large enough $\ell$.
\end{proof}

\section{Chernoff bound for bisections}

\begin{lemma}
\label{lem:bisect-chernoff}
Suppose $S$ is a fixed subset of $[m']$, and $(A,B)$ a random partition of $[m']$, chosen uniformly among all partitions where $|A| = |B| = m'/2$. Then
$$ \Pr\left[ ||S \cap A| - |S \cap B|| > \beta m'] \right] < 4 e^{-\beta^2 m' / 2}.$$
\end{lemma}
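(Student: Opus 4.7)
The plan is to recognize the quantity $|S \cap A| - |S \cap B|$ as an affine function of a single hypergeometric random variable, and then invoke a standard Chernoff/Hoeffding concentration bound for sampling without replacement.

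First, I would let $s = |S|$ and use the fact that $A \cup B = [m']$ is a disjoint partition, so $|S \cap A| + |S \cap B| = s$. This lets us rewrite the random quantity as
\[
|S \cap A| - |S \cap B| \;=\; 2|S \cap A| - s,
\]
so the event $||S \cap A| - |S \cap B|| > \beta m'$ is exactly the event $|\,|S \cap A| - s/2\,| > \beta m'/2$. Since $A$ is a uniformly random $(m'/2)$-subset of $[m']$, the random variable $X := |S \cap A|$ is hypergeometric with population size $m'$, sample size $n = m'/2$, and $s$ marked items; in particular $\E[X] = s/2$.

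Next I would apply Hoeffding's inequality for sampling without replacement, which states that such an $X$ satisfies $\Pr[\,|X - \E[X]| \geq t\,] \leq 2 e^{-2t^2/n}$. Plugging in $n = m'/2$ and $t = \beta m'/2$ gives
\[
\Pr\bigl[\,|X - s/2| > \beta m'/2\,\bigr] \;\leq\; 2 \exp\!\left(-\frac{2(\beta m'/2)^2}{m'/2}\right) \;=\; 2 e^{-\beta^2 m'},
\]
which is in fact strictly stronger than the claimed bound $4 e^{-\beta^2 m'/2}$. (If one prefers to avoid invoking Hoeffding's theorem on sampling without replacement directly, an equivalent route is to use the fact that indicator variables $\mathbf{1}[j \in A]$ for $j \in S$ are negatively associated, and negative association implies the same Chernoff upper tail as independent Bernoulli trials; or one may apply McDiarmid's bounded-differences inequality to the random permutation generating $(A,B)$, with Lipschitz constant $1$ per swap.)

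There is no real obstacle here — the lemma is a one-line reduction followed by a black-box concentration bound. The only point that requires a moment of care is that $X$ is not a sum of independent Bernoullis, so one must cite (or sketch) the version of Hoeffding/Chernoff that applies to hypergeometric variables (or equivalently to negatively associated variables). The constants in the statement are loose enough that almost any standard formulation will suffice.
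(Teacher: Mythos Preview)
Your proposal is correct and actually yields a sharper bound. It is, however, genuinely different from the paper's argument. The paper does not invoke Hoeffding's theorem for sampling without replacement; instead it couples the uniform $(m'/2)$-subset $A$ with a fully independent uniform subset $R$ (via a $\mathrm{Bi}(m',1/2)$ variable), applies the ordinary Chernoff bound for independent coins to both $|R \Delta A|$ and $|S \Delta R|$, and then uses the triangle inequality on symmetric differences together with the identity $|S\cap A|-|S\cap B| = \tfrac{m'}{2} - |S\Delta A|$. Your route is shorter and gives $2e^{-\beta^2 m'}$ rather than $4e^{-\beta^2 m'/2}$; the tradeoff is that it cites the hypergeometric/negative-association version of Hoeffding as a black box, whereas the paper's coupling reduces everything to the vanilla binomial Chernoff bound, making it more self-contained at the cost of a looser constant.
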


\begin{proof}
We use the fact that $A$ has distribution very close to a uniformly random subset of $[m']$ (where elements appear independently with probability $1/2$). More precisely, we couple the two distributions as follows. Let $A$ be a random set of size $m'/2$, $B$ its complement, and let $X$ be a binomial random variable $Bi(m',1/2)$. Let $R$ be a random set chosen as follows: if $X \leq m'/2$, take a random subset of $A$ of size $X$. If $X > m'/2$, take the union of $A$ and $X-m'/2$ random elements from $B$. This defines a set $R$ which is uniformly random. Hence, by the Chernoff bound (see e.g. \cite[Theorem A.1.16]{AlonSpencer}),
$$ \Pr[|R \Delta A| > \alpha m'] = \Pr[Bi(m',1/2) \notin [m'/2-\alpha m', m'/2+\alpha m']] < 2 e^{-2 \alpha^2 m'}.$$
Similarly, $S \Delta R$ has the distribution of a uniformly random set (because $S$ is fixed), and hence
$$ \Pr[|S \Delta R| \notin [m'/2 - \alpha m', m'/2 + \alpha m']] < 2 e^{-2 \alpha^2 m'}.$$
Using the triangle inequality $|S \Delta A| \leq |S \Delta R| + |R \Delta A|$, we get
$$ \Pr[|S \Delta A| \notin [m'/2 - 2 \alpha m', m'/2 + 2 \alpha m']] < 4 e^{-2 \alpha^2 m'}.$$
The lemma follows by taking $\alpha = \beta/2$, since $|S \cap A| - |S \cap B| = |A| - |S \Delta A| = \frac{m'}{2} - |S \Delta A|.$
\end{proof}

\section{Product composition of submodular functions}

\begin{lemma}
\label{lem:submod-product}
Let $f_1,f_2:2^M \rightarrow [0,1]$ be monotone submodular. Then
$$ f(S) = 1 - (1-f_1(S)) (1-f_2(S)) $$
is also monotone submodular.
\end{lemma}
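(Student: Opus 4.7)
The plan is to reduce both properties to transparent calculations on the ``complement'' factorization
$$ 1 - f(S) \;=\; (1-f_1(S))(1-f_2(S)). $$
Since $f_1,f_2$ take values in $[0,1]$, the two factors lie in $[0,1]$ and are non-increasing in $S$ (by monotonicity of the $f_i$). The product of two non-negative non-increasing set functions is non-increasing, which immediately yields monotonicity of $f$.

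For submodularity, I would fix $S \subseteq T$ and $j \notin T$, and abbreviate
$$ a = 1-f_1(S),\quad a' = 1-f_1(S\cup\{j\}),\quad b = 1-f_2(S),\quad b' = 1-f_2(S\cup\{j\}), $$
with $A,A',B,B'$ defined analogously for $T$. The inequality to prove, $f(S\cup\{j\})-f(S) \geq f(T\cup\{j\})-f(T)$, becomes
$$ ab - a'b' \;\geq\; AB - A'B'. $$
The key algebraic step is the decomposition
$$ ab - a'b' \;=\; (a-a')\,b \;+\; a'\,(b-b'), $$
and likewise for $AB-A'B'$. Now each factor in each summand has the right monotonicity in $S$: $a-a' = f_1(S\cup\{j\})-f_1(S) \geq A-A'$ and $b-b' \geq B-B'$ by submodularity of $f_1,f_2$, while $b \geq B$ and $a' \geq A'$ by monotonicity. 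All eight quantities are non-negative because $f_i \in [0,1]$, so multiplying termwise gives $(a-a')b \geq (A-A')B$ and $a'(b-b') \geq A'(B-B')$. Adding these two inequalities completes the proof.

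I do not anticipate any real obstacle: the whole argument is a two-line calculation once the identity $1 - f = (1-f_1)(1-f_2)$ and the decomposition $ab - a'b' = (a-a')b + a'(b-b')$ are in hand. The only subtlety worth flagging is that the hypothesis $f_i \leq 1$ is genuinely used, namely to ensure $a',b' \geq 0$ so that the termwise multiplication of inequalities preserves their direction; without this bound the statement would fail.
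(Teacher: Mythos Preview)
Your proof is correct and follows essentially the same approach as the paper: both pass to the complements $g_i = 1 - f_i$, use the telescoping decomposition $ab - a'b' = (a-a')b + a'(b-b')$ (the paper uses the symmetric variant $a(b-b') + (a-a')b'$), and then bound each summand termwise using monotonicity and submodularity of the $f_i$ together with the range $[0,1]$. The only cosmetic difference is that the paper uses the two-element formulation of submodularity (comparing $S$ with $S+j$) while you use the more general $S \subseteq T$ formulation.
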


\begin{proof}
Let $g_1(S) = 1-f_1(S)$, $g_2(S) = 1-f_2(S)$; these are non-negative non-increasing supermodular functions. Clearly, $g(S) = g_1(S) g_2(S)$ is also non-increasing. Our goal is to prove that $g(S) = g_1(S) g_2(S)$ is supermodular, which implies the claim. By the properties of $g_1, g_2$, we get for any $i,j \notin S$
$$ g_1(S) (g_2(S) - g_2(S+i)) \geq g_1(S) (g_2(S+j) - g_2(S+i+j)) \geq g_1(S+j) (g_2(S+j) - g_2(S+i+j)) $$
and
$$ (g_1(S) - g_1(S+i)) g_2(S+i) \geq (g_1(S+j) - g_1(S+i+j)) g_2(S+i) \geq (g_1(S+j) - g_1(S+i+j)) g_2(S+i+j).$$
Adding up these two inequalities, we get the condition of supermodularity for $g(S) = g_1(S) g_2(S)$:
$$ g_1(S) g_2(S) - g_1(S+i) g_2(S+i) \geq g_1(S+j) g_2(S+j) - g_1(S+i+j) g_2(S+i+j).$$
\end{proof}



\begin{thebibliography}{99}

\bibitem{AlonSpencer}
N. Alon and J.H. Spencer.
\newblock The probabilistic method (2nd edition).
\newblock Wiley Interscience, 2000.

\bibitem{BDFKMPSSU10}
D. Buchfuhrer, S. Dughmi, H. Fu, R. Kleinberg, E. Mossel, C. Papadimitriou, M. Schapira,  Y. Singer, and C. Umans.
\newblock Inapproximability for VCG-based combinatorial auctions.
\newblock {\em Proc.\ of $21^{th}$ ACM-SIAM SODA}, 518--536,  2010. 

\bibitem{BSS10}
D. Buchfuhrer, M. Schapira  and Y. Singer.
\newblock Computation and incentives in combinatorial public projects.
\newblock {\em Proc.\ of $11^{th}$ ACM EC}, 33--42, 2010.

\bibitem{CCPV07}
G. Calinescu, C. Chekuri, M. P\'al, and J. Vondr\'ak.
\newblock Maximizing a submodular set function subject to a matroid constraint.
\newblock {\em Proc.\ of $12^{th}$ IPCO}, 182--196, 2007.


\bibitem{D07}
S. Dobzinski.
\newblock Two randomized mechanisms for combinatorial auctions.
\newblock {\em Proc.\ of APPROX 2007}, 89--103.

\bibitem{DD09}
S. Dobzinski and S. Dughmi, 
\newblock On the power of randomization in algorithmic mechanism design
\newblock {\em Proc.\ of $50^{th}$ IEEE FOCS}, 505--514, 2009.

\bibitem{DN07}
S. Dobzinski and N. Nisan.
\newblock  Limitations of {VCG}-based mechanisms.
\newblock {\em Proc.\ of $39^{th}$ ACM STOC},  338--344, 2007.

\bibitem{DS06}
S. Dobzinski and M. Schapira.
\newblock An improved approximation algorithm for combinatorial auctions with submodular bidders.
\newblock {\em Proc.\ of $17^{th}$ SODA}, 1064--1073, 2006.

\bibitem{Dobzin11}
S. Dobzinski.
\newblock An impossibility result for truthful combinatorial auctions with submodular valuations.
\newblock To appear in {\em $43^{rd}$ ACM STOC}, 2011.

\bibitem{DR10}
S. Dughmi and T. Roughgarden.
\newblock Black box randomized reductions in algorithmic mechanism design.
\newblock {\em Proc. of $51^{st}$ IEEE FOCS},  775--784, 2010.

\bibitem{DRY11}
S. Dughmi, T. Roughgarden and Q. Yan.
\newblock From convex optimization to randomized mechanisms: toward optimal combinatorial auctions.
\newblock To appear in {\em $43^{rd}$ ACM STOC}, 2011.

\bibitem{DRVY11}
S. Dughmi, T. Roughgarden, J. Vondr\'ak and Q. Yan.
\newblock An approximately truthful-in-expectation mechanism for combinatorial auctions using value queries.
\newblock Manuscript, 2011.

\bibitem{Dughmi11}
S. Dughmi.
\newblock A truthful randomized mechanism for combinatorial public projects via convex optimization.
\newblock To appear in {\em $12^{th}$ ACM EC 2011}.

\bibitem{Feige98}
U. Feige.
\newblock A threshold of $\ln n$ for approximating set cover.
\newblock {\em Journal of the ACM}, 45(4):634--652, 1998.

\bibitem{FMV07}
U. Feige, V. Mirrokni and J. Vondr\'ak.
\newblock Maximizing a non-monotone submodular function.
\newblock {\em Proc.\ of $48^{th}$ IEEE FOCS}, 461--471, 2007.


\bibitem{KLMM05}
S. Khot, R. Lipton, E. Markakis and A. Mehta.
\newblock Inapproximability results for combinatorial auctions with submodular utility functions.
\newblock {\em Algorithmica} 52:1, 3--18, 2008.

\bibitem{KST09} 
A. Kulik, H. Shachnai and T. Tamir.
\newblock Maximizing submodular set functions subject to multiple linear constraints.
\newblock {\em Proc.\ of $20^{th}$ ACM-SIAM SODA}, 545--554, 2009.

\bibitem{LMN03}
R. Lavi, A. Mu'alem and N. Nisan.
\newblock Towards a characterization of truthful combinatorial auctions.
\newblock {\em Proc.\ of $44^{th}$ IEEE FOCS}, 574--583, 2003.

\bibitem{LS05}
R. Lavi and C. Swamy.
\newblock Truthful and near-optimal mechanism design via linear programming.
\newblock {\em Proc.\ of $46^{th}$ IEEE FOCS}, 595--604, 2005.


\bibitem{MSV08}
V. Mirrokni, M. Schapira and J. Vondr\'ak.
\newblock Tight information-theoretic lower bounds for welfare maximization in combinatorial auctions.
\newblock {\em Proc.\ of $9^{th}$ ACM EC}, 70--77, 2008.

\bibitem{NWF78}
G. L. Nemhauser, L. A. Wolsey, and M. L. Fisher.
\newblock An analysis of approximations for maximizing submodular set functions - I.
\newblock {\em Math. Prog.}, 14:265--294, 1978.

\bibitem{Nis07}
N. Nisan.
\newblock Introduction to mechanism design (for computer scientists).
\newblock  {\em Algorithmic Game Theory (N. Nisan, T. Roughgarden, E. Tardos, and V.
  Vazirani, eds.)}, Cambridge University Press, 2007.


\bibitem{PSS08}
C. Papadimitriou, M. Schapira and Y. Singer.
\newblock On the hardness of being truthful.
\newblock {\em Proc.\ of $49^{th}$ IEEE FOCS}, 250--259, 2008.

\bibitem{V08}
J. Vondr\'ak.
\newblock Optimal approximation for the submodular welfare problem in the value oracle model.
\newblock {\em Proc.\ of $40^{th}$ ACM STOC}, 67--74, 2008.

\bibitem{V09}
J. Vondr\'ak.
\newblock Symmetry and approximability of submodular maximization problems.
\newblock {\em Proc.\ of IEEE FOCS}, 251--270, 2009.

\end{thebibliography}
\end{document}